\numberwithin{equation}{section}
\theoremstyle{plain}
\newtheorem{lemma}{Lemma}
\newtheorem{remark}{Remark}
\def\Tht{\Theta}
\def\e{\varepsilon}
\def\g{\gamma}
\def\l{\lambda}
\def\p{\partial}
\def\a{\alpha}
\def\b{\beta}
\def\Op{\mathcal{H}}
\newcommand{\PT}{\mathcal{PT}}
\renewcommand{\geq}{\geqslant}
\renewcommand{\leq}{\leqslant}
\DeclareMathOperator{\RE}{Re}
\DeclareMathOperator{\IM}{Im}
\DeclareMathOperator{\sign}{sign}
\DeclareMathOperator{\arcsinh}{arcsinh}
\title{Spacing gain and absorption in a simple $\PT$-symmetric model: spectral singularities and  ladders of eigenvalues and resonances}
\author{D. I. Borisov$^{1,2,3}$\\
$^1$Institute of Mathematics, Ufa Federal Research Center,
\\
Russian Academy of Sciences,
Ufa, Russia, 450008
\\[3mm]
$^2${Bashkir State Pedagogical University named after M.Akhmulla}, 
Ufa, Russia, 450000
\\[3mm]
$^3$University of Hradec Kr\'alov\'e,
Hradec Kr\'alov\'e 50003, Czech Republic
\\[5mm]	
D. A. Zezyulin$^4$\\
$^4$ITMO University, St. Petersburg 197101, Russia}
\begin{document}

\maketitle

\begin{abstract}
	
	We consider a parity-time ($\PT$-) symmetric waveguide   consisting of a localized gain and loss elements separated by a variable distance. The situation is modelled by a Schr\"odiner operator with localized complex $\PT$-symmetric potential. Properties of the latter Hamiltonian are considered subject to the change of the gain-to-loss distance. Resonances, spectral singularities and eigenvalues  	 are analyzed in detail and discussed in the context of the associated laser-absorber modes and $\PT$-symmetry breaking phase transition. Increasing gain-to-loss distance  creates new resonances and spectral singularities which do not exist in the waveguide with adjacent gain and loss. In the limit of large gain-to-loss  distance, the waveguide features a ladder of resonances  which can be transformed to a ladder of complex eigenvalues by means of the change of the gain-and-loss amplitude.
		
\end{abstract}

\section{Introduction}

Waves interacting with localized parity-time ($\PT$-) symmetric potentials feature a variety  of novel and interesting phenomena, such as  scattering resonances at real energies \cite{Mostafazadeh2009},  coherent perfect absorption and lasing  operation \cite{Longhi10}, loss-induced lasing \cite{Nori}, invisibility effects \cite{invisib} and others. The simplest realization of a $\PT$-symmetric system couples a pair of  geometrically identical elements with energy gain and absorption \cite{pendula}. This idea was, in particular, elaborated in the context of electromagnetic waveguides \cite{Mostafazadeh2009,Muga05},  coupled optical structures \cite{El-Ganainy07,Ruter10}, and in many other physical environments, see reviews \cite{PukhovReview,KZYreview,LonghiReview,FengReview,El-GanainyReview}.  Behavior of many bicentric $\PT$-symmetric structures can be modelled by effective Schr\"odinger-like Hamiltonians,   which are usually non-self-adjoint (with respect to the standard inner product).  In the present paper, we consider a one-dimensional Hamiltonian with a generalized  $\PT$-symmetric  step potential    consisting of a pair of identical gain and loss elements separated by a variable distance.  We aim to study the  spectral properties of this operator as the distance varies. In our study, we mostly focus on two aspects. The first direction concerns the   spectral properties of our operator as the distance between gain and loss elements is large. We demonstrate that  in this regime our  simple bicentric $\PT$-symmetric Hamiltonian  features a ladder of resonances  resembling the well-known Wannier-Stark ladder  for periodic Hamiltonians with an additional potential gradient \cite{SW6}. Upon   the change of the  gain-and-loss amplitude the ladder of   resonances can be transformed  to a ladder of complex-conjugate pairs of eigenvalues and vice versa. This  behavior has no counterpart  in the self-adjoint case. Transformation from a resonance to  an eigenvalue corresponds to a zero-width resonance, i.e. to a spectral singularity in the continuous spectrum \cite{Mostafazadeh2009}. A detailed study of spectral singularities and associated $\PT$-phase transition under   the change of the  gain-to-loss distance  is the second objective of our work.

To the best of the authors' knowledge, $\PT$-symmetric problems with gain and losses separated by a large distance, were not studied before. At the same time, this regime was studied in details for self-adjoint  Schr\"odginer operators with two (or several) potential wells separated by a large distance.
Not trying to cite all papers in this direction, we mention   several   classical works
\cite{D, AS, H, MS, KS1, GHS} and a series of very recent ones \cite{UMJ,BEG, GAM, GAM2}. The classical works addressed the case of several wells, while in the recent works the wells were replaced by
localized abstract operators, including such examples as second order operators with compactly supported coefficients, localized geometric perturbation, delta-potentials on compact manifolds, etc. The main   results concerned the behavior as   the distances between localized wells or operators tends to infinity. In such a regime, the resolvent of the original operators is approximated in the norm resolvent sense by the direct sum of the resolvents for each well or localized operator. The most general results of such kind were obtained in \cite{UMJ, GAM}. The eigenvalues and the associated eigenfunctions in such regime also attracted a lot of interest. In the classical works, the leading terms of their asymptotics were obtained. A general scheme allowing to construct complete asymptotic expansions in the most general setting was proposed in \cite{GAM2}.  We also mention work \cite{BEG}, in which the localized operators were modeled by change of type of boundary conditions and there were studied resonances emerging from the eigenvalues embedded into the essential spectrum.

As we have mentioned above, in the large distance regime we find a phenomenon similar to Wannier-Stark resonances; let us briefly recall the matter of such resonances. Wannier-Stark resonances or Wannier-Stark ladders is a
celebrated topic studied both by physicists  and mathematicians, see, for instance, papers \cite{SW6, SW1, SW2, SW3, SW4, SW5}.
Here the main object the study is a   Schr\"odinger operator with a fixed periodic potential and a linear potential with a small coupling constant like
\begin{equation}\label{0.1}
-\frac{d^2\ }{dx^2}+V_{per}(x)+\e x\quad\text{on}\quad\mathds{R}.
\end{equation}
The main phenomenon is that as $\e\to0$, such operator has a growing number of resonances accumulating to some zone in the essential spectrum. Namely, the imaginary parts of these resonances and the distances between two neighbouring resonances tend to zero, that is, there is a ladder of resonances in the vicinity of some part of the essential spectrum. A distinctive feature of this phenomenon is that the mentioned zone in the essential spectrum {\sl contains
no spectral singularities} of the considered operator as $\e=0$.
 In fact, the ladder of mentioned resonances can be regarded as emerging from the eigenvalues of the operator $-\frac{d^2\ }{dx^2}+\e x$ accumulating on the real axis as $\e\to+0$.  The main obtained results describe the total number and the asymptotic behavior of these resonances. Wannier-Stark ladders and  the associated Bloch oscillations have been addressed  theoretically for a $\PT$-symmetric version of operator (\ref{0.1}) in \cite{SW-PT-1}
and realized experimentally in  synthetic $\PT$-symmetric photonic lattices \cite{SW-PT-2, SW-PT-3}.

In contrast to  the conventional, Hermitian operators,  in $\PT$-symmetric systems, scattering resonances have more chances to occur at real energies, i.e. at   interior points of the continuous spectrum \cite{Mostafazadeh2009,ScarfII}.  A real-energy resonance has zero width  and corresponds to a spectral singularity \cite{SS1,SS2,SS4,SS5}, i.e. to a real pole of the resolvent. Additional source of steady interest in spectral singularities originates in the associated regime of    laser--antilaser operation \cite{Longhi10,SS5,Stone_selfdual}:  at the spectral singularity  the system admits a pair of linearly independent solutions,  one of which has a form of a coherent  bidirectional  outgoing radiation   (``laser''), and another solution is a coherently absorbed   incident radiation (``antilaser'' \cite{Rozanov} or coherent perfect absorber (CPA) \cite{Stone1,Stone2,reviewCPA}). Under a small perturbation of the $\PT$-symmetric  operator, the spectral singularity can split either in a pair of resonances or in  a complex-conjugate pair of discrete eigenvalues bifurcating from an interior point of the continuous spectrum  \cite{Garmon,Yang17,KZ17,Ahmed18,Konotop2019,BorDmi}. The latter situation is related to a peculiar scenario of an exceptional-point-free $\PT$-symmetry phase transition from purely real to complex spectrum   \cite{Yang17,KZ17,Konotop2019},  which is distinctively different from the better studied $\PT$-phase transition through an exceptional point \cite{Kato,Heiss}. The detailed study of spectral singularities and the associated phase transition in the simple $\PT$-symmetric system with a variable gain-to-loss distance is the second goal of the present paper.

In the present paper
we consider a one-dimensional $\PT$-symmetric Schr\"odinger operator with a complex potential formed by two localized parts describing losses and gain and separated by   some distance. The second parameter in the model is the gain-and-loss amplitude. We study the resonances and the eigenvalues of this model. We derive a compact equation describing the location of the corresponding finite- and zero-width resonances as well as  the emergence of localized bound states associated with discrete complex-conjugate eigenvalues.
Then we prove a series of qualitative properties of these objects, namely, that there are countably many resonances accumulating at infinity and only finitely many eigenvalues and all of them are symmetric with respect to the imaginary axis. We provide explicit description of the domains in the complex plane, in which the presence of the eigenvalues and resonances is ensured.

The first main result concerns the asymptotic limit of infinitely large distance between the gain and \textcolor{black}{loss}. It turns out that in such regime, there exists a ladder of resonances or eigenvalues accumulating in the vicinity of a certain zone of the essential spectrum. This zone is an explicitly found segment starting at zero. We show that as the distance exceeds a certain  explicitly given threshold,  there are  $N$ eigenvalues or resonances in the vicinity of the mentioned segment, where $N$ is also given explicitly in terms of the distance. We also find an asymptotic four-terms approximations for these eigenvalues and resonances with explicitly written terms. These approximations show that the loss-and-gain amplitude determines what kind of objects, eigenvalues or resonances, we deal with, while the distance is responsible for the number of these objects and their distribution. As the distance grows, these eigenvalues or resonances go \textcolor{black}{closer  to each  other} and also tend to the real axis, and this is exactly the same picture as for the Wannier-Stark resonances. At the same time, as we discuss in Section~\ref{sec:zeros}, the nature of this ladder is different from Wannier-Stark's case.  There are several advantages of our model. The first of them is that such picture is generated not by a linear potential multiplied by a small parameter, but simply by spacing two compactly supported potentials.  Thanks to the $\PT$-symmetry, we can generate not only resonances, but also eigenvalues(!), which is a completely new phenomenon indicating once again how rich the spectra of $\PT$-symmetric operators are. In comparison to a rather complicated for analysis operator (\ref{0.1}), our model is much simpler, and this is why we are able to find explicitly the most important characteristics like the total number of zeroes, the threshold for the distance ensuring such phenomenon and several terms in the asymptotic expansions for eigenvalues or resonances.

We also study zero-width resonances of our model. Using a combination of analytical and numerical approaches, we show  that the increase of the gain-to-loss distance can  drive the  $\PT$-symmetric waveguide through a sequence of self-dual spectral singularities each of  which   corresponds to the laser-antilaser regime and leads either to the emergence of a  new pair of complex-conjugate eigenvalues or to the immersion of the complex-conjugate pair in the continuum. Practical importance of our findings consists in the possibility to use the distance between the gain and loss as a parameter allowing   to decrease the $\PT$-symmetry breaking laser-antilaser threshold: this means that in a waveguide with mutually spaced gain and losses the laser-antilaser regime can be achieved at a smaller gain-and-loss strength than in a waveguide with the adjacent gain and loss elements. The distance between the gain and loss elements enables an additional degree of freedom which allows to achieve a spectral singularity at any real wavenuber given beforehand and also to achieve two spectral singularities simultaneously.

The content of this  paper is organized as follows. In Section~\ref{sec:model} we introduce the  $\PT$-symmetric model and the basic terminology, and derive the main equation determining the location of resonances and eigenvalues in the complex plane. In Section~\ref{sec:lemmata} we prove a series of lemmata describing  the properties of complex roots. In Section~\ref{sec:zeros} we apply  the proven statements in order to characterize the general picture of the location and behavior of the zeroes in the complex plane and  we discuss the  ladders of eigenvalues and resonances accumulating to the essential spectrum.
Section~\ref{sec:ss} is devoted to the detailed study of the spectral singularities of $\PT$-phase transition. Section~\ref{sec:concl} concludes the paper.

\section{$\PT$-symmetric waveguide with spaced gain and loss}
\label{sec:model}

\subsection{General model and   periodicity in the gain-to-loss distance}
\label{sec:periodicity}

We consider stationary scattering on a spatially localized $\PT$-symmetric potential $V(x)$:
\begin{equation}\label{2.1}
-\psi_{xx} + V(x)\psi = \l    \psi,\qquad \l=k^2, \qquad x\in\mathds{R},
\end{equation}
where $\l$ is the eigenvalue and $k$ is the complex wavenumber. The complex potential  $V(x)$ describes the internal structure of the waveguide.  We consider a  pair of geometrically identical elements with gain and loss separated by some distance. Since in this work we are primarily interested in the role  of  the gain-to-loss distance, we assume   that the   width of each element is scaled to unity. In order to model this situation, we introduce a complex-valued function $W(x)$ compactly supported in the interval $(0, 1)$, and    consider a localized complex   potential in the form  (see figure~\ref{fig:potential})
\begin{equation}\label{2.2}
V(x) = \left\{
\begin{aligned}
&W(x+\ell+1)\qquad &&\text{for}\quad x\in (-\ell-1,-\ell),
\\
&\overline{W}(-x+\ell+1)\qquad &&\text{for}\quad x\in (\ell,\ell+1),
\\
&\hphantom{W(x+}0 &&\text{for all other $x$},
\end{aligned}\right.
\end{equation}
where bar is the complex conjugation, and $\ell\geqslant 0$  is the half-distance between the two elements (in order to shorten  the discussion, in what follows we will use term ``gain-to-loss distance'' for $\ell$); the case $\ell=0$ corresponds to the adjacent elements.  Potential $V(x)$ satisfies the standard condition of $\PT$ symmetry: $V(x)=\overline{V}(-x)$. In figure~\ref{fig:potential} we show a   schematics of the potential consisting of an active  (``gain'') and absorptive (``losses'')   elements.

\begin{figure}
	\centering
	\includegraphics[width=0.5\columnwidth]{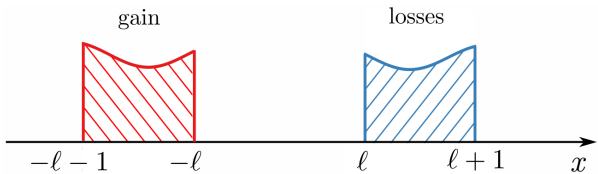}%
	\caption{Schematics of a localized potential which consists of a pair of geometrically identical active (gain) and absorptive (losses)   elements which are   of  unitary width  and   separated by the distance $2\ell$.}
	\label{fig:potential}
\end{figure}

Spectral singularities, finite-width resonances, and  eigenvalues are associated with non-trivial solutions $\psi(x)$ of equation (\ref{2.1}) with the following behavior at infinity:
\begin{equation}\label{2.3}
 \psi(x)=C_- e^{ikx}, \quad \mbox{for\ } x<-\ell-1, \quad \mbox{and\quad } \psi(x)=C_+e^{-ikx}, \quad  \mbox{for\ } x>\ell+1,
\end{equation}
where $C_+$ and $C_-$ are  constants. Values of $k$ in the upper complex half-plane, Im$\,k>0$, are associated with  exponentially growing   generalized eigenfunctions which in the scattering theory correspond to   finite-width resonances. The case Im$\,k=0$ corresponds to   spectral singularities, i.e. to the zero-width resonances inside the contionuous spectrum, i.e. at   $\l=k^2\geq 0$ \cite{Mostafazadeh2009}. Positive and negative  $k$ correspond to the coherent perfect absorber (antilaser) and laser solutions, respectively. 
Values of $k$ in the lower complex half-plane, Im$\,k<0$, correspond to discrete eigenvalues associated with spatially localized bound states.

Equations (\ref{2.1})--(\ref{2.2}) with condition (\ref{2.3}) can be solved as
\begin{align*}
&
\psi(x,\ell)=e^{i k (x+\ell+1)}, && x<-\ell-1,
&& \psi(x,\ell)=\psi_-(x+\ell), && -\ell-1<x<-\ell,
\\
&
\psi(x,\ell)=C e^{-i k (x-\ell-1)}, && x>\ell+1,
&&
\psi(x,\ell)=C\psi_+(x-\ell),
&&\hphantom{1,.} \ell<x<\ell+1,
\end{align*}
where $C$ is some constant and functions $\psi_-$ and $\psi_+$ describe  the solution inside the active and absorptive cells; their specific shapes are determined by function $W$ and are independent of $\ell$.  The requirement of the continuity of the wave function  $\psi(x)$ and of its derivative implies that $\psi_-(-1)=1$, $\psi_-'(-1)=ik$, $\psi_+(1)=1$, and $\psi_+'(1)=-ik$.

In  the central region, the requirement of continuity of $\psi$ and $\psi'$ implies that for $|x|<\ell$ the solution has the form
\begin{equation}\label{2.8}
\psi(x)=\psi_-(0)\cos k (x+\ell)+ \psi_-'(0)\frac{\sin k(x+\ell)}{k}.
\end{equation}
At the same time,  the solution  for $|x|<\ell$ can be also represented as
\begin{equation}\label{2.9}
\psi(x)=C\psi_+(0)\cos k (x-\ell)+ C\psi_+'(0)\frac{\sin k(x-\ell)}{k}.
\end{equation}
Equations (\ref{2.8}) and (\ref{2.9}) are compatible   only if $k$ solves  the equation
\begin{equation}
\label{eq:main}
\big(\psi_-'(0)+i k \psi_-(0)\big) \big(\psi_+'(0)-i k \psi_+(0)\big)=
e^{-4i k\ell} \big(\psi_+'(0)+i k \psi_+(0)\big) \big(\psi_-'(0)-i k \psi_-(0)\big).
\end{equation}
The values    $\psi_-(0)$,  $\psi'_-(0)$, $\psi_+(0)$,  $\psi'_+(0)$  are independent of  $\ell$. This means that the gain-to-loss distance $\ell$ enters the main equation (\ref{eq:main}) only in the exponential term $e^{-4i k\ell}$, which is periodic in  $\ell$.  This  observation which will be used in what follows.

\subsection{Generalized $\PT$-symmetric step function}

The simplest implementation of potential (\ref{2.2}), which will be considered in the rest of this paper, corresponds to the case when  $W(x)$  is a rectangular function with a purely imaginary amplitude, i.e. $W(x) = i\gamma$ for $x\in(0,1)$ and $W(x)=0$ outside this interval. Then   $\g>0$ is the gain-and-loss amplitude,  which determines the gain strength for the left rectangle and the absorption rate for the right rectangle. The resulting potential $V(x)$ is a generalization of the  $\PT$-symmetric step function with adjacent gain and loss regions, which corresponds to $\ell=0$ and was considered earlier in   \cite{Mostafazadeh2009,Muga05,KZ17,Ahmed18,Ahmed04}. In the present paper,   we are   concerned with the effect of the nonzero gain-to-losses distance $\ell>0$.  The obtained generalized   $\PT$-symmetric step potential   can be of relevance in various physical settings. First, it arises in the study of the scattering of TE electromagnetic waves in a rectangular waveguide containing gain and losses   such that  the complex-valued dielectric permittivity changes along the direction of the propagation.  This can be implemented by filling regions of  the waveguide with resonant atoms \cite{Mostafazadeh2009,Muga05} (the gain and loss media are separated by vacuum in order to create nonzero distance $\ell$).  Another well-established setting, where $\PT$-symmetric potential (\ref{2.2}) can arise,  corresponds to the propagation of a paraxial optical beam in a system of coupled waveguides with gain and losses \cite{El-Ganainy07,Ruter10}.  In this case the variation of complex-valued dielectric permittivity takes place in the transverse direction, while the beam propagates in the longitudinal direction. Similar bicentric $\PT$-symmetric potentials (but in the idealized double-delta  function form) have been also considered in the theory of   Bose-Einstein condensates, where the domains with gain and losses correspond to the pump and absorption of particles, respectively \cite{Cartarius12,Zezyulin16}.

For the generalized $\PT$-symmetric step potential, functions $\psi_-$ and $\psi_+$ can be evaluated easily:
\begin{align}
&\psi_-(x)=\cos\sqrt{k^2-i\g}(x+1)
+i k \frac{\sin\sqrt{k^2-i\g}(x+1)}{\sqrt{k^2-i\g}}, &&
-1<x<0,
\\
&\psi_+(x)= \cos\sqrt{k^2+i\g} (x-1)
-i k \frac{ \sin\sqrt{k^2+i\g}(x-1)}{\sqrt{k^2+i\g}}, && \hphantom{.11}
0<x<1.
\end{align}
Using these formulae to compute   $\psi_-(0)$,  $\psi'_-(0)$, $\psi_+(0)$,  $\psi'_+(0)$  and substituting the latter into equation (\ref{eq:main}),   we arrive at  a final equation for $k$:
\begin{align}\label{2.10}
&F(k,\ell,\g)=0,\qquad F(k,\ell,\g):=F_-(k,\g)F_+(k,\g)-e^{-4ik\ell} F_0(k,\g),
\\
&
F_\pm(k,\g):=2ik\cos\sqrt{k^2\pm i\g} -(2k^2\pm i\g)\frac{\sin\sqrt{k^2\pm i\g}}{\sqrt{k^2\pm i\g}},
\quad
F_0(k,\g):= \g^2   \frac{\sin\sqrt{k^2-i\g}}{\sqrt{k^2-i\g}} \, \frac{\sin\sqrt{k^2+i\g}}{\sqrt{k^2+i\g}}.\nonumber
\end{align}

Our next step is a   detailed   study  of  the zeroes of   function $F$.

\section{Auxiliary lemmata}
\label{sec:lemmata}

In this section we point out  some obvious properties of the zeroes of the function $F$ defined in (\ref{2.10}) and prove several auxiliary lemmata. These will be used in the next section to describe a general picture of the zeroes location and their behavior as $\ell$ and $\g$ vary.

We begin with  obvious properties. First, it is straightforward to check that $k=0$ is zero for all $\ell$ and $\g$, that is, $F(0,\ell,\g)=0$, $\ell\geqslant 0$, $\g\geqslant 0$. Next, we notice that if $k$ is a zero, then $-\overline{k}$ is also a zero, that is, the zeroes of $F$ are symmetric with respect to the imaginary axis. The associated non-trivial solution of equation (\ref{2.1}) are related by the $\PT$ symmetry transform, i.e. by the combination of spatial reversal $x\mapsto -x$ and the complex conjugation. In particular, for any real root $k>0$ there always  exists  the opposite root $-k<0$, which means that the corresponding spectral singularity is self-dual and   associated with   the combined laser-antilaser regime.

If $k$ is a zero with a negative imaginary part, then the imaginary part of the corresponding eigenvalue $\l=k^2$ does not exceeds $\g$. Indeed, if $\psi$ is an associated eigenfunction, then due to the equation (\ref{2.1}) we have
\begin{equation*}
\|\psi'\|_{L_2(\mathds{R})}^2+(V\psi,\psi)_{L_2(\mathds{R})}= \l  \|\psi\|_{L_2(\mathds{R})}^2,
\end{equation*}
and the inequality $|\IM\l|\leqslant \g$  is obtained immediately by taking the imaginary part of this identity. The latter property can be equivalently reformulated as follows: if $k$ is a zero with a negative imaginary part, then
\begin{equation}\label{3.0b}
|\RE k\,\IM k|\leqslant \frac{\g}{2}.
\end{equation}

As was already pointed out in Section~\ref{sec:periodicity}, the gain-to-loss distance $\ell$ enters   function $F$ only in the exponent $e^{-4ik\ell}$. This means that if some zero $(k_0, \ell_0, \g_0)$ of function $F$ is found with real $k_0$, then one can immediately generate a sequence of distances
\begin{equation}
\label{eq:periodic}
\ell_n = \ell_0 + \frac{n\pi}{2k_0},
\end{equation}
which also correspond to zeros of function $F$ with the same $k_0$ and $\g_0$. In equation (\ref{eq:periodic}), $n$ is any integer such that $\ell_n\geqslant 0$.

We proceed to auxiliary lemmata. The first lemma describes the existence and the number of the zeroes.

\begin{lemma}\label{lm3.1}
For each $\g\geqslant 0$, $\ell\geqslant 0$, the function $F(\cdot,\ell,\g)$ is holomorphic in the entire complex plane, i.e. it is entire. For each $\g>0$, $\ell\geqslant 0$, this function has countably many complex zeroes accumulating at infinity only. These zeroes are continuous in $\ell$ and $\g$.
\end{lemma}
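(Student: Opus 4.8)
The plan is to establish the three claims in order: entirety, the counting statement, and continuity.

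\medskip\noindent\textbf{Entirety.} First I would show $F(\cdot,\ell,\g)$ is entire. The potential troublesome factors are $\sqrt{k^2\pm i\g}$ in the arguments and denominators. The standard trick is to observe that $\cos\sqrt{z}$ and $\frac{\sin\sqrt{z}}{\sqrt{z}}$ are entire functions of $z$: indeed $\frac{\sin\sqrt{z}}{\sqrt{z}}=\sum_{m\geq0}\frac{(-1)^m z^m}{(2m+1)!}$ and $\cos\sqrt{z}=\sum_{m\geq0}\frac{(-1)^m z^m}{(2m)!}$. Hence each of $F_-$, $F_+$, $F_0$ is a finite combination of entire functions of $k$ (with $k^2\pm i\g$ substituted for $z$, composed with the entire map $k\mapsto k^2\pm i\g$), multiplied by polynomials in $k$ and $\g$; and $e^{-4ik\ell}$ is entire. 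Therefore $F$ is entire. In particular the apparent square-root branch points are removable, so no separate argument about branch cuts is needed.

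\medskip\noindent\textbf{Countably many zeroes accumulating only at infinity.} An entire function that is not identically zero has at most countably many zeroes, which can accumulate only at infinity; so it suffices to prove $F(\cdot,\ell,\g)\not\equiv0$ for $\g>0$. The natural way is an asymptotic analysis as $|k|\to\infty$ along a suitable ray, say the positive imaginary axis $k=it$, $t\to+\infty$, or along the real axis. Along such a ray the dominant balance in $F_\pm$ comes from $\cos\sqrt{k^2\pm i\g}\sim\cos k$ and $\frac{\sin\sqrt{k^2\pm i\g}}{\sqrt{k^2\pm i\g}}\sim\frac{\sin k}{k}$ (since $\sqrt{k^2\pm i\g}=k\sqrt{1\pm i\g/k^2}=k+O(1/k)$ for large $|k|$ away from the real axis), so $F_\pm(k,\g)=2ik\cos k-2k\sin k+O(1)=2k(i\cos k-\sin k)+O(1)$, and $F_0(k,\g)=O(1/k^2)\cdot(\sin k)^2$. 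Thus for $k=it$ with $t$ large, $i\cos(it)-\sin(it)=i\cosh t - i\sinh t= i e^{-t}\to0$; that branch is degenerate, so instead I would run the estimate along the real axis or a slightly tilted ray where $|i\cos k-\sin k|$ is bounded below. In fact on the real axis $|{-}\sin k + i\cos k| = |e^{ik}| \cdot$(bounded) — more precisely $i\cos k-\sin k=i(\cos k+i\sin k)=ie^{ik}$, so $|i\cos k - \sin k|=1$ for real $k$, hence $F_\pm(k,\g)=2ik e^{ik}+O(1)$ which grows like $|k|$, while $e^{-4ik\ell}F_0$ stays bounded; therefore $F(k,\ell,\g)\to\infty$ as $k\to+\infty$ along the reals, so $F\not\equiv0$. (Alternatively one could just exhibit one non-zero value, e.g. evaluate the large-$k$ leading term $F(k,\ell,\g)\sim-4k^2 e^{2ik}\to$ something nonzero in modulus.) Either route shows $F$ is a nontrivial entire function, giving the claim.

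\medskip\noindent\textbf{Continuity of the zeroes in $\ell,\g$.} Since $F$ is jointly continuous (indeed jointly analytic) in $(k,\ell,\g)$ and, for each fixed $(\ell,\g)$ with $\g>0$, not identically zero in $k$, the zeroes are isolated, and Hurwitz's theorem (or Rouché's theorem applied on a small circle around each zero) gives that the zeroes depend continuously on the parameters: if $k_*$ is a zero of multiplicity $m$ for parameters $(\ell_0,\g_0)$, then for $(\ell,\g)$ near $(\ell_0,\g_0)$ there are exactly $m$ zeroes (with multiplicity) inside any sufficiently small disc about $k_*$. This is the standard stability-of-zeroes argument for analytic families.

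\medskip\noindent\textbf{Main obstacle.} The only genuinely non-routine point is the large-$|k|$ asymptotic estimate establishing $F\not\equiv0$ and the behavior of its zeroes at infinity: one must choose the ray of approach carefully because along the imaginary axis the leading coefficient $i\cos k-\sin k=ie^{ik}$ degenerates, and keep track of the fact that $\sqrt{k^2\pm i\g}-k\to0$ only away from the imaginary axis (near it one has $\sqrt{k^2\pm i\g}=\pm i\sqrt{-k^2\mp i\g}$ and care is needed). Everything else — entirety via the power-series representation of $\cos\sqrt z$ and $\sin\sqrt z/\sqrt z$, and continuity via Hurwitz — is standard.
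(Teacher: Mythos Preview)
Your treatments of entirety (via the power series of $\cos\sqrt z$ and $\sin\sqrt z/\sqrt z$) and continuity (via Hurwitz/Rouch\'e) are fine and match the paper's approach; in fact your continuity argument is more explicit than the paper's one-line appeal to smoothness.

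The gap is in the middle claim. You read ``countably many complex zeroes accumulating at infinity'' as ``at most countably many, and any accumulation point is at infinity'', and accordingly reduced the task to proving $F\not\equiv 0$. But the lemma is asserting that there are \emph{infinitely} many zeroes (a finite set does not ``accumulate'' anywhere; and the paper uses this lemma precisely to say that as soon as $\g>0$ there arise infinitely many zeroes). Your asymptotic estimate $F(k,\ell,\g)\sim -4k^2 e^{2ik}$ along the real axis only shows $F$ is not identically zero; it does not rule out $F$ having finitely many zeroes.

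The paper closes this gap with Hadamard's theorem: one checks $|F(k,\ell,\g)|\leqslant e^{C|k|}$, so $F$ is entire of order at most $1$; an entire function of order $\leqslant 1$ with only finitely many zeroes must be of the form $P(k)e^{ak}$ for some polynomial $P$ and constant $a$; but for $\g>0$ one sees $e^{-ak}F(k,\ell,\g)$ is not a polynomial for any $a$ (e.g.\ because $F_0$ contributes a genuinely transcendental piece with a different exponential rate). Hence $F$ has infinitely many zeroes. Your large-$k$ computation is useful input for this argument --- it gives the growth bound and the leading exponential behaviour --- but by itself it does not yield the infinitude of zeroes, and that is the substantive content of this part of the lemma.
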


\begin{proof}
We represent the sine and cosine functions in the definitions of $F_\pm$, $F_0$ by their Taylor series. Then we see that the final expansion does not involve square roots but only the integer powers of $k$. Hence, the function $F(\cdot,\ell,\g)$ is entire.

The function $F(k,\ell,\g)$ satisfies the estimate $|F(k,\g)|\leqslant e^{C|k|}$ for some fixed constant $C$ and therefore, this is an entire function of order $1$. It is clear that as $\g>0$, for each constant $a\in\mathds{C}$, the function $e^{-a k}F(k,\g)$ is not a polynomial in $k$. Then, by the Hadamard theorem \cite[Sect. 4.2, Thm. 4.1]{Le}, the function $F(\cdot,\g)$ has countably many zeroes accumulating at infinity only. The continuity of these zeroes on $\ell$ and $\g$ is thanks to the smoothness of $F$.
The proof is complete.
\end{proof}

\begin{remark}\label{lm2}
As we have showed in the above proof,  the choice of the square root in the definition of $F_\pm$ is not important. The only restriction is that in the quotients $\frac{\sin\sqrt{k^2\pm i\g}}{\sqrt{k^2\pm i\g}}$ the branch should  be the same.
\end{remark}

\begin{remark}\label{rm1}
As $\g=0$, the function $F$ becomes very simple: $F(k,\ell,0)=-4k^2e^{2ik}$ and it has the only zero $k=0$ of second order. The associated nontrivial solution of equation (\ref{2.1}) is constant: $\psi(x)\equiv 1$.
\end{remark}

The second lemma is devoted to the case of small $\g$.

\begin{lemma}\label{lm3.2}
As $\g$ is small enough, for each $\ell\geqslant 0$, there are exactly two  zeroes of $F$ satisfying $|k|\leqslant 2\sqrt{\g}$.  One of these zeroes is $k=0$ for all $\ell$ and $\g$, while the other is pure imaginary and depends holomorphically \textcolor{black}{on} $\g$ and $\ell$. The leading terms of its Taylor series are as follows:
\begin{equation}\label{3.1}
k(\g,\ell)=i\left(\ell+\frac{1}{3}\right)\g + i\left(
2\ell^3+\frac{8}{3}\ell^2+\frac{53}{45}\ell+\frac{53}{315}\right)\g^4+O(\g^6).
\end{equation}
\end{lemma}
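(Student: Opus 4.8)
The plan rests on two structural observations. Since $F(0,\ell,\g)\equiv0$, one may write $F(k,\ell,\g)=k\,G(k,\ell,\g)$ with $G$ entire in $k$; by Remark~\ref{rm1}, $G(k,\ell,0)=F(k,\ell,0)/k=-4ke^{2ik}$, an entire function with the single, simple zero $k=0$ and with $\p_kG(0,\ell,0)=-4\ne0$. Moreover, replacing $\g$ by $-\g$ interchanges $F_-$ and $F_+$ while leaving $F_0$ and $e^{-4ik\ell}$ unchanged, so $F$, and hence $G$, is even in $\g$; in particular $G$ is holomorphic in $(k,\ell,\g^2)$. The assertion to prove is therefore that, for small $\g>0$, the double zero of $F(\cdot,\ell,0)$ at the origin splits into $k=0$ and exactly one further, purely imaginary zero lying near the origin.

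For the existence and the counting I would use Rouch\'e's theorem on the circle $|k|=2\sqrt{\g}$. Because $G$ is smooth and even in $\g$, one has $|G(k,\ell,\g)-G(k,\ell,0)|\le C_\ell\,\g^2$ there, whereas $|G(k,\ell,0)|=8\sqrt{\g}\,|e^{2ik}|\ge\sqrt{\g}$ once $\g$ is small; hence, for $\g$ small depending on $\ell$, the functions $G(\cdot,\ell,\g)$ and $G(\cdot,\ell,0)$ have the same number of zeroes in $|k|<2\sqrt{\g}$, namely one, which I denote $k(\g,\ell)$, and neither vanishes on $|k|=2\sqrt{\g}$. Consequently $F=kG$ has in $|k|\le2\sqrt{\g}$ precisely the two zeroes $k=0$ and $k(\g,\ell)$. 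To identify $k(\g,\ell)$, I would invoke the analytic implicit function theorem for $G$ at $(k,\g)=(0,0)$, where $\p_kG\ne0$: it produces a holomorphic branch $\widetilde k(\g,\ell)$ of zeroes of $G$ with $\widetilde k(0,\ell)=0$, which for small $\g$ lies in $|k|<2\sqrt{\g}$ and thus coincides with $k(\g,\ell)$. By uniqueness of this branch and the evenness of $G$, the function $k(\g,\ell)$ is holomorphic in $\g$ and in $\ell$ and even in $\g$; and since the zero set of $F$ is invariant under $k\mapsto-\overline k$ while $k(\g,\ell)$ is the \emph{only} zero of $G$ in the disc, $-\overline{k(\g,\ell)}=k(\g,\ell)$, so $k(\g,\ell)$ is purely imaginary.

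It remains to produce the Taylor expansion (\ref{3.1}), which I would obtain by computing the low-order coefficients of $F$ directly from (\ref{2.10}). Differentiating $F=F_-F_+-e^{-4ik\ell}F_0$ in $k$ at $k=0$, using $F_\pm(0,\g)=\mp i\g-\tfrac{\g^2}{6}+O(\g^3)$, $\p_kF_\pm(0,\g)=2i\cos\sqrt{\pm i\g}$, $\p_kF_0(0,\g)=0$ and $F_0(0,\g)=\g^2+O(\g^4)$, one sees the $O(\g)$ terms cancel --- this is just the identity $F(0,\ell,\g)=0$ differentiated --- and one obtains $\p_kF(0,\ell,\g)=4i\big(\ell+\tfrac13\big)\g^2+O(\g^4)$. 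Hence near the origin $G(k,\ell,\g)=-4k+4i(\ell+\tfrac13)\g^2+O\big(|k|^2+|k|\g^2+\g^4\big)$, so $G(k(\g,\ell),\ell,\g)=0$ forces $k(\g,\ell)=i(\ell+\tfrac13)\g^2+O(\g^4)$, the leading term of the expansion; feeding this back and collecting the $\g^4$ contributions --- for which one needs the coefficients of $k\g^2$, $k^2\g^2$ and $k\g^4$ in the expansion of $F$, read off from the series of $\cos\sqrt{k^2\pm i\g}$, $\sin\sqrt{k^2\pm i\g}/\sqrt{k^2\pm i\g}$ and $e^{-4ik\ell}$ --- yields the next term. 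The hard part here is only this last bookkeeping; conceptually the argument is the factorisation $F=kG$ together with the $O(\g^2)$ cancellation, which is exactly what makes the second zero of size $O(\g^2)$ rather than $O(\g)$.
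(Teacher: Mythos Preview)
Your proposal is correct and follows essentially the same route as the paper: factor $F=kG$, apply Rouch\'e on $|k|=2\sqrt{\g}$ to isolate a single simple zero of $G$, invoke the implicit/inverse function theorem at $(k,\g)=(0,0)$ where $\p_kG(0,\ell,0)=-4\ne0$, deduce pure imaginarity from the symmetry $k\mapsto-\overline{k}$, and then compute the Taylor coefficients by substitution. Your explicit observation that $F$ (hence $G$) is even in~$\g$ --- because $\g\mapsto-\g$ swaps $F_+$ and $F_-$ --- is a useful addition the paper does not state, as it immediately explains why only even powers of~$\g$ appear and why the error is $O(\g^6)$; note in particular that both your computation and the paper's own proof give the leading term as $i(\ell+\tfrac13)\g^2$, so the ``$\g$'' in the displayed formula~(\ref{3.1}) should evidently read~$\g^2$.
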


\begin{proof}
According to Remark~\ref{rm1}, for $\g=0$, the function $F$ has the only zero of second order. Since this function depends continuously on $\g$, by the Rouch\'e theorem we immediately infer that the function $F$ has exactly two zeroes (counting orders) converging to $k=0$ as $\g=0$.
It is easy to confirm that $F(0,\ell,\g)=0$ for all $\ell$ and $\g$ and hence, $k=0$ is one of the mentioned zeroes. The other zero is  pure imaginary since all zeroes are symmetric with respect to the imaginary axis.

The function $G(k,\ell,\g):=k^{-1}F(k,\ell,\g)$ is jointly holomorphic in $k$, $\ell$, $\g$ and by straightforward calculations we confirm that
\begin{equation}\label{3.0}
G(k,\ell,\g)=4i\left(\ell+\frac{1}{3}\right)\g^2-4k+O(\g^3+k\g),\qquad
G(0,\ell,0)=0,\qquad \frac{\p G}{\p k}(0,\ell,0)=-4\ne0,
\end{equation}
Then by the inverse function theorem we conclude that there exists the unique zero $k(\ell,\g)$  of $G$ in the circle $|k|\leqslant 2\sqrt{\g}$. This zero converges to  $k=0$ as $\g\to0$ and  is jointly holomorphic in $\ell$ and $\g$ for sufficiently small $\g$. As (\ref{3.0}) suggests, this zero is of order $O(\g^2)$ as $\g\to0$. We write the leading terms of its Taylor series
\begin{equation}\label{3.5}
k(\ell,\g)=c_2\g^2+c_3\g^3+c_4\g^4+c_5\g^5+O(\g^6),
\end{equation}
substitute this expression into the equation $G(k,\ell,\g)=0$, expand the result into power series in $\g$ and equate the coefficients at the like powers of $\e$. This  determines the coefficients $c_j$ in (\ref{3.5}) and a final form of this formula is exactly (\ref{3.1}).  The proof is complete.
\end{proof}

The next lemma provides a rough information on the location of the zeroes of the function $F(k,\ell,\g)$.

\begin{lemma}\label{lm3.3}
For each $\g>0$, $\ell\geqslant 0$, all zeroes of $F(\cdot,\ell,\g)$ are located in the domain
\begin{equation*}
\left\{k:\, |k|>\sqrt{\g}r,\ \IM k>0,\
\sqrt{\g}e^{(\ell+1)\IM k}<|k|<\frac{47}{25}
 \sqrt{\g}e^{(\ell+1)\IM k},\ |k|<\frac{76}{73}\g e^{(\ell+\frac{9}{8})|k|}
 \right\}\cup\Big\{k:\, |k|\leqslant \sqrt{\g}r
\Big\},
\end{equation*}
where
\begin{equation}\label{3.5a}
r:=\max\left\{\frac{39}{20},\frac{\sqrt{\g}}{2}\right\}.
\end{equation}
If $\g$ is small enough,   the function $F$ has no zeroes in the lower complex half-plane.
\end{lemma}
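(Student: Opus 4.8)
The plan is to prove that $F=F_-F_+-e^{-4ik\ell}F_0$ has no zero off the indicated set by showing that on its complement one of the two summands dominates the other in modulus. First I would recast $F_\pm$, $F_0$ in a form suited to estimates. Using $2k^2\pm i\g=k^2+(k^2\pm i\g)$, setting $k_\pm:=\sqrt{k^2\pm i\g}$ (any consistent choice of branch; this is immaterial by Remark~\ref{lm2}), and expanding $\cos z,\sin z$ in exponentials, one gets the exact identities
\begin{equation*}
F_\pm(k,\g)=\frac{i}{2k_\pm}\Bigl[(k+k_\pm)^2e^{ik_\pm}-(k_\pm-k)^2e^{-ik_\pm}\Bigr],\qquad
F_0(k,\g)=-\frac{\g^2}{4k_+k_-}\bigl(e^{ik_-}-e^{-ik_-}\bigr)\bigl(e^{ik_+}-e^{-ik_+}\bigr).
\end{equation*}
Together with $k_\pm-k=\pm i\g/(k_\pm+k)$, $\sqrt{|k|^2-\g}\le|k_\pm|\le\sqrt{|k|^2+\g}$, and the facts that $|e^{ik_\pm}|=e^{-\IM k}\rho_\pm$, $|e^{-ik_\pm}|=e^{\IM k}\rho_\pm^{-1}$ with $\rho_\pm:=e^{-\IM(k_\pm-k)}$ bounded above and below by explicit constants whenever $|k|>\sqrt\g\,r$, this reduces all the moduli involved to functions of $|k|$, $\IM k$, $\g$ up to quantitatively controlled errors; the numerical constants in the lemma are precisely the amount of slack needed to make the subsequent inequalities hold from that threshold on.

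The heart of the matter is the comparison, for $\IM k>0$, of $|F_-F_+|$ with $|e^{-4ik\ell}F_0|=e^{4\ell\IM k}|F_0|$. When $|k|$ exceeds a suitable absolute multiple of $\sqrt\g\,e^{\IM k/2}$ the first bracket term $(k+k_\pm)^2e^{ik_\pm}$ dominates the one defining $F_\pm$, so that $|F_-F_+|$ is, up to explicit constants, of size $|k|^2e^{-2\IM k}$, while $|e^{-4ik\ell}F_0|$ is, up to explicit constants, of size $\g^2|k|^{-2}e^{(4\ell+2)\IM k}$; equating these orders shows that a zero can occur only when $|k|$ lies in a band $c_1\sqrt\g\,e^{(\ell+1)\IM k}<|k|<c_2\sqrt\g\,e^{(\ell+1)\IM k}$, and carrying the explicit error terms through yields the constants stated in the lemma. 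The complementary range $|k|\le(\text{const})\sqrt\g\,e^{\IM k/2}$ needs only a rough estimate: there one shows directly that any zero has $|k|$ bounded by an absolute multiple of $\sqrt\g$, hence lies inside the disc $\{|k|\le\sqrt\g\,r\}$ (this is part of the reason the disc appears in the statement). A zero with $\IM k\le0$, $|k|>\sqrt\g\,r$ is excluded at once: then $e^{4\ell\IM k}\le1$ while $e^{ik_\pm}$ is no longer small, the first bracket term wins outright, and a brief computation using $\IM k\le0$, $\ell\ge0$, $|k|>\sqrt\g\,r>\sqrt\g$ gives $|F_-F_+|>|e^{-4ik\ell}F_0|$ --- hence the constraint $\IM k>0$. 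The remaining inequality $|k|<\tfrac{76}{73}\g\,e^{(\ell+\frac98)|k|}$ follows from the equality of moduli at a zero by pairing a lower bound for $|F_-F_+|$ with the crude bound $|F_0|\le\g^2|k_+k_-|^{-1}e^{|\IM k_+|+|\IM k_-|}$, $|\IM k_\pm|\le|k_\pm|\le\sqrt{|k|^2+\g}$, and $e^{4\ell\IM k}\le e^{4\ell|k|}$.

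The last statement of the lemma is then a corollary. For $\g$ small one has $\sqrt\g\,r=\tfrac{39}{20}\sqrt\g<2\sqrt\g$, so by the part already proven any zero with $\IM k\le0$ lies in $\{|k|\le2\sqrt\g\}$; by Lemma~\ref{lm3.2} such a zero is either $k=0$ or the purely imaginary zero~(\ref{3.1}), whose imaginary part is positive for $\g$ small --- in either case $\IM k\ge0$, so there is no zero with $\IM k<0$.

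The step I expect to be the main obstacle is making the modulus-balance of the second paragraph fully rigorous: turning the ``of size, up to explicit constants'' statements into honest two-sided inequalities with the precise constants claimed, uniformly in $\ell\ge0$ and for every $|k|>\sqrt\g\,r$ rather than only asymptotically. This requires tight control of $|k+k_\pm|^2/(2|k_\pm|)$ against $2|k|$, of $|k_\pm-k|^2/(2|k_\pm|)$ against $\g^2/(8|k|^3)$, and of the factors $\rho_\pm^{\pm1}$, and it uses that $\ell$ enters only through $e^{4\ell\IM k}\ge1$, so the worst case is $\ell=0$.
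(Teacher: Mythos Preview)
Your approach is correct and is essentially the one the paper takes. Both proofs rewrite $F_\pm$ and $F_0$ in exponential form (your identities for $F_\pm,F_0$ are exactly what the paper uses, after the rescaling $z=k/\sqrt\gamma$, $\mu=\sqrt\gamma$, which turns $k_\pm$ into $\mu\sqrt{z^2\pm i}$ and makes all the auxiliary bounds depend only on the fixed threshold $|z|\ge r$ rather than on $\gamma$). Both then show $|F_-F_+|\neq e^{4\ell\IM k}|F_0|$ outside the indicated set by one-sided domination: above the band the product $F_-F_+$ wins, below the band (and throughout $\IM k\le0$, $|k|>\sqrt\gamma\,r$) the exponentially boosted $F_0$-term wins. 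The paper's extra inequality $|k|<\tfrac{76}{73}\gamma\,e^{(\ell+9/8)|k|}$ is obtained, as you guess, from a separate crude bound, in their case via $|e^{w}-1|^2\le\tfrac{7}{4}|w|^2e^{2|w|}$. Your deduction of the final sentence from Lemma~\ref{lm3.2} is exactly the paper's.

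Two small points. First, your case split ``$|k|$ above/below a multiple of $\sqrt\gamma\,e^{\IM k/2}$'' is slightly different from the paper's, which works directly with the band boundaries $e^{(\ell+1)\IM k}$; the paper never isolates a ``dominant-term-in-$F_\pm$'' regime but simply bounds each bracketed factor by $\bigl||A|-|B|\bigr|$ from below and $|A|+|B|$ from above and checks that the resulting quartic-versus-exponential inequalities hold with the stated constants. Second, your closing remark that ``the worst case is $\ell=0$'' is not right as stated: $\ell$ appears in the \emph{target} inequalities $\sqrt\gamma\,e^{(\ell+1)\IM k}<|k|<\tfrac{47}{25}\sqrt\gamma\,e^{(\ell+1)\IM k}$ as well as in the function, so there is no uniform worst case --- for the upper edge of the band one needs an upper bound on $e^{4\ell\IM k}|F_0|$, for the lower edge a lower bound, and these go in opposite directions in $\ell$. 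What \emph{is} true, and what the paper exploits, is that after rescaling the error factors (your $\rho_\pm$, the ratios $|k+k_\pm|/(2|k|)$, etc.) are controlled uniformly in $\ell$ and depend only on $|z|\ge r$; the $\ell$-dependence then sits entirely in a single factor $e^{4\mu\ell\IM z}$ on one side, and that factor is exactly matched by the $(\ell+1)$ in the band definition.
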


\begin{proof}
Let $k$ be a zero of $F$ and denote $z:=k\mu^{-1}$, $\mu:=\sqrt{\g}$.
By routine straightforward calculations we confirm that  $z$ solves the equation
\begin{align}
 &Q(z,\ell,\mu)=0,\qquad Q(z,\ell,\mu):=Q_1(z,\mu)-Q_2(z,\ell,\mu),
 \nonumber\\
 &
 Q_1(z,\mu):=
 \left( \big(z+\sqrt{z^2-i}\big)^2+ \frac{e^{-2i \mu\sqrt{z^2-i}}}{\big(z+\sqrt{z^2-i}\big)^2} \right)
\left( \big(z+\sqrt{z^2+i}\big)^2+ \frac{e^{-2i \mu\sqrt{z^2+i}}}{\big(z+\sqrt{z^2+i}\big)^2} \right),
\nonumber
 \\
 &
 Q_2(z,\ell,\mu):=e^{-4i\mu\ell z}
 \big(1-e^{-2i\mu\sqrt{z^2-i}}\big)\big(1-e^{-2i\mu\sqrt{z^2+i}}\big).
\nonumber
\end{align}
Consider $z$ obeying the restrictions
$|z|\geqslant r\geqslant \frac{17}{10}$,
$\IM z_2\leqslant 0$ and denote $r_1:=\big(r+\sqrt{r^2-1}\big)^{-1}$.
Then
\begin{equation}
\begin{aligned}
&z+\sqrt{z^2\pm i}=2z\left(1\pm \frac{i}{z^2}\frac{1}{1+\sqrt{1\pm i z^{-2}}}\right),
\qquad \sqrt{z^2\pm i}-z=\pm\frac{i}{z}\frac{1}{1+\sqrt{1\pm i z^{-2}}},
\\
&\big|1+\sqrt{1\pm i z^{-2}}\big|\geqslant 1+\sqrt{1-r^{-2}}\geqslant 1,
\qquad
\big|\sqrt{z^2\pm i}-z\big|\leqslant r_1, 
\\
&\big|z+\sqrt{z^2\pm i}\big|\geqslant 2|z|\left(1-\frac{r_1}{r}
\right)=2|z|\frac{\sqrt{r^2-1}}{r}\geqslant 2\sqrt{r^2-1}.
\end{aligned}\label{3.6a}
\end{equation}
These inequalities allow us to estimate the functions $Q_1$, $Q_2$:
\begin{align}
&
\begin{aligned}
&\big|Q_1(z,\mu)\big|\geqslant \left(
\frac{4(r^2-1)}{r^2}|z|^2
- \frac{e^{2\mu\left(\IM z+ r_1
\right)}}{4(r^2-1)}
\right)^2\geqslant \left(4(r^2-1)- \frac{e^{
2\mu r_1
}}{4(r^2-1)}
\right)^2,
\\
&\big|Q_2(z,\ell,\mu)\big|\leqslant e^{4\mu\ell\IM z}\left(1+e^{2\mu\left(\IM z+ r_1
\right)}\right)^2\leqslant \Big(1+e^{
2\mu r_1
}\Big)^2,
\end{aligned}\label{3.24}
\end{align}
Since $r\geqslant\frac{39}{20}$, $r\geqslant \tfrac{\mu}{2}$, we have
$4(r^2-1)\geqslant \rho_0:=\tfrac{1121}{100}$, $e^{
2\mu r_1
}
\leqslant e^{\frac{156}{39+\sqrt{1121}}}$ and
\begin{align*}
\big|Q(z,\ell,\mu)\big|\geqslant
\big|Q_1(z,\ell,\mu)\big|- \big|Q_1(z,\ell,\mu)\big|\geqslant
\left(\frac{1121}{100}- \frac{100}{1121}e^{\frac{156}{39+\sqrt{1121}}}\right)^2 -  \Big(1+e^{\frac{156}{39+\sqrt{1121}}}\Big)^2>16>0.
\end{align*}
Hence, for the considered values of $z$, we have $|Q(z,\ell,\mu)|>0$ and such $z$ can not be a zero of $Q$.

Let $|z|>r$, $\IM z>0$. Then it follows from (\ref{3.24}) that
\begin{align*}
\big|Q(z,\ell,\mu)\big|\geqslant \left(\frac{4484}{1521}|z|^2-\frac{96}{125}
e^{2\mu\IM z}\right)^2-93e^{4\mu(\ell+1)\IM z}
\end{align*}
and the function $Q$ is non-zero provided $|z|>\frac{47}{25}e^{\mu(\ell+1)\IM z}$.
Assume now that $e^{(\ell+1)\mu \IM z}\geqslant  |z|$, $\IM z>0$,  $|z|>r$. In the same way as above we  estimate
\begin{align*}
\big|Q(z,\ell,\mu)\big| \geqslant&  e^{4\mu\ell\IM z}\left(e^{2\mu\left(\IM z+r_1
\right)}-1\right)^2
 -\left(
4|z|^2\left(1+\frac{r_1}{r}
\right)^2+\frac{e^{2\mu\left(\IM z+r_1 
\right)}}{4(r^2-1)}
\right)^2
\\
\geqslant& 57.827 e^{4\mu(\ell+1)\IM z} -   \left(5.213|z|^2 + 0.768e^{2\mu   \IM z} \right)^2>59 e^{4\mu(\ell+1)\IM z} - 55  |z|^4\geqslant 0
\end{align*}
and the function $Q(z,\ell,\mu)$ is again non-zero.

Consider $z$ obeying $|z|\geqslant r$, $\IM z>0$. Employing the estimate $\sinh t \leqslant t\cosh t$, $t\geqslant 0$, by straightforward calculations for all $w=w_1+i w_2\in \mathds{C}$ we check that
\begin{align*}
|e^w-1|^2=4e^{w_1} \left(\sinh^2 \frac{w_1}{2} + \sin^2 w_2\right)\leqslant \frac{w_1^2}{4}(e^{w_1}+1)^2 + w_2^2\leqslant \frac{7|w|^2 e^{2|w|}}{4}.
\end{align*}
Hence,
\begin{equation*}
\big|e^{-2i \mu\sqrt{z^2\pm i}}-1\big|\leqslant \sqrt{7}\mu \sqrt{|z|^2+1} e^{\mu\sqrt{|z|^2+1}} \leqslant \sqrt{7(1+r^{-2})}\mu |z| e^{\mu\sqrt{|z|^2+1}}.
\end{equation*}
Rewriting   the exponents in the functions $Q_1$, $Q_2$ as $e^{-2 i\mu\sqrt{z^2\pm i}}=1+(e^{-2 i\mu\sqrt{z^2\pm i}}-1)$, by (\ref{3.6a}) we get
\begin{align*}
\big|Q(z,\ell,\mu)\big| \geqslant \left( \frac{73}{25}|z|^2 - \frac{8}{125}\mu |z|e^{\frac{9}{8}\mu |z|}\right)^2 - \frac{177}{20}\mu^2|z|^2 e^{(4\ell+\frac{9}{4})\mu|z|}>0 \quad\text{as}\quad |z| \geqslant \frac{76}{73}\mu e^{(2\ell+\frac{9}{8})\mu |z|}.
\end{align*}

Assume   that $\g$ is small enough. Then according Lemma~\ref{lm3.2}, the function $F$ has only two zeroes in the circle $|k|\leqslant 2\sqrt{\g}$, one of them is $k=0$, while the other is located in the upper half-plane. 
\end{proof}

In what follows, given a set $S$, by $\overline{S}$ we denote the closure of this set, while for a number $z$, the symbol $\overline{z}$ stands for the complex conjugation.

The next  lemmata describe the behavior the zeroes as $\ell$ grows.

\begin{lemma}\label{lm3.4}
Let $\g\ne 0$. For each $n\in\mathds{Z}$ obeying the estimate
\begin{equation}\label{3.17}
\frac{\pi |n|}{2}+\frac{\pi}{4} \leqslant \Tht(\g)\ell,\qquad \Tht(\g):= \frac{2(1-e^{-\frac{\pi}{4}})\g}{17\big(1+\sqrt{1-e^{-\frac{\pi}{4}}}\big) \sqrt{\g}+\frac{10\pi}{\sqrt{3}}},
\end{equation}
the circle $\big\{k:\, |k-\tfrac{\pi n}{2\ell}|<\tfrac{\pi}{4\ell}\}$ contains exactly one simple zero $k_n(\ell,\g)$ of equation (\ref{2.10}). This zero is holomorphic in $\ell^{-1}$ and as $\ell\to+\infty$,  the leading terms of its Taylor series are
\begin{equation}\label{3.4}
\begin{aligned}
k_n(\ell,\g)=& \frac{\pi n}{2} \ell^{-1} -\frac{\sqrt{2} \pi n(\sinh \sqrt{2\g}-\sin\sqrt{2\g})}{4\sqrt{\g}(\cosh\sqrt{2\g}-\cos\sqrt{2\g})}\ell^{-2}
\\
&- \frac{  i\pi^2 n^2 \sinh \sqrt{2\g}\sin\sqrt{2\g} - 2\pi n(\sinh \sqrt{2\g}-\sin\sqrt{2\g})^2}{\g(\cosh \sqrt{2\g}-\cos\sqrt{2\g})^2}\ell^{-3}
+O(\ell^{-4}).
\end{aligned}
\end{equation}
\end{lemma}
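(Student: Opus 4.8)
The plan is to establish the existence, uniqueness and simplicity of $k_n(\ell,\g)$ inside the disk $D_n:=\{k:\, |k-\frac{\pi n}{2\ell}|<\frac{\pi}{4\ell}\}$ by Rouch\'e's theorem, and then to get the analyticity in $\ell^{-1}$ together with the expansion (\ref{3.4}) by applying the holomorphic implicit function theorem to a rescaled form of equation (\ref{2.10}). A direct computation from the formulas for $F_\pm$, $F_0$ gives $F_-(0,\g)F_+(0,\g)=F_0(0,\g)=\frac{\g}{2}(\cosh\sqrt{2\g}-\cos\sqrt{2\g})$, which is nonzero for $\g\ne0$. Hence it is natural to compare $F$ with the ``model'' function $\widetilde F(k,\ell,\g):=F_0(k,\g)(1-e^{-4ik\ell})$, noting that
$$
F(k,\ell,\g)-\widetilde F(k,\ell,\g)=F_-(k,\g)F_+(k,\g)-F_0(k,\g)=k\,H(k,\g),
$$
where $H(\cdot,\g)$ is entire because $F_-F_+-F_0$ is entire (Lemma~\ref{lm3.1}) and vanishes at $k=0$. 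If $F_0(\cdot,\g)$ has no zeros in $\overline{D_n}$, then $\widetilde F(\cdot,\ell,\g)$ has exactly one zero there, a simple one at the centre $\frac{\pi n}{2\ell}$, because $\frac{d}{dk}(1-e^{-4ik\ell})=4i\ell e^{-4ik\ell}\ne0$ and the centres $\frac{\pi m}{2\ell}$, $m\in\mathds{Z}$, of the disks $D_m$ are spaced by $\frac{\pi}{2\ell}$, which is twice the radius of $D_n$.

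First I would verify that $F_0(\cdot,\g)$ has no zeros in $\{|k|\le\Tht(\g)\}$: such zeros satisfy $k^2\pm i\g=(\pi m)^2$ with $m\ge1$, hence $|k|=\big((\pi m)^4+\g^2\big)^{1/4}\ge(\pi^4+\g^2)^{1/4}$, and the explicit form of $\Tht(\g)$ shows $\Tht(\g)<(\pi^4+\g^2)^{1/4}$. By (\ref{3.17}), on $\partial D_n$ one has $|k|\le\frac{\pi|n|}{2\ell}+\frac{\pi}{4\ell}\le\Tht(\g)$, so $F_0(k,\g)\ne0$ there; moreover $|F_0(k,\g)|$ is bounded below by a positive constant $c_0(\g)$ and $|H(k,\g)|$ is bounded above by a constant $L(\g)$ on $\{|k|\le\Tht(\g)\}$, both being explicitly estimable from the Taylor series of $\sin$ and $\cos$ in the definitions of $F_\pm$, $F_0$ (by Remark~\ref{lm2} the choice of the branch of the root is irrelevant). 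Finally, on $\partial D_n$ one has $-4ik\ell=-2\pi in-i\pi e^{i\phi}$, which yields an explicit positive lower bound for $|1-e^{-4ik\ell}|$ there (giving rise to the factor $1-e^{-\frac{\pi}{4}}$). Collecting these estimates, on $\partial D_n$
$$
\big|F-\widetilde F\big|=|k|\,|H(k,\g)|\le\Tht(\g)L(\g)<c_0(\g)\big(1-e^{-\frac{\pi}{4}}\big)\le\big|\widetilde F\big|,
$$
the middle inequality being precisely what the concrete value of $\Tht(\g)$ is designed to guarantee. By Rouch\'e's theorem $F(\cdot,\ell,\g)$ has exactly one zero $k_n(\ell,\g)$ in $D_n$, and since $\widetilde F$ has a simple zero there, so does $F$.

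For the analyticity and (\ref{3.4}), note that on $\overline{D_n}$, where $F_0\ne0$, equation (\ref{2.10}) is equivalent to $e^{-4ik\ell}=R(k,\g)$ with $R:=\frac{F_-F_+}{F_0}$, holomorphic near $k=0$ and $R(0,\g)=1$. Putting $\e:=\ell^{-1}$, $k=\e\kappa$ (so that $k\ell=\kappa$), the equation becomes $\Psi(\kappa,\e):=e^{-4i\kappa}-R(\e\kappa,\g)=0$, and
$$
\Psi\Big(\tfrac{\pi n}{2},0\Big)=e^{-2\pi in}-R(0,\g)=0,\qquad
\frac{\p\Psi}{\p\kappa}\Big(\tfrac{\pi n}{2},0\Big)=-4i\,e^{-2\pi in}=-4i\ne0.
$$
By the holomorphic implicit function theorem there is a unique $\kappa=\kappa(\e,\g)$, holomorphic in $\e$ near $\e=0$, with $\kappa(0,\g)=\frac{\pi n}{2}$ and $\Psi(\kappa(\e,\g),\e)=0$; by the uniqueness already established, $\kappa(\ell^{-1},\g)=\ell\,k_n(\ell,\g)$ whenever (\ref{3.17}) holds and $\ell$ is large, so $k_n(\ell,\g)=\ell^{-1}\kappa(\ell^{-1},\g)$ is holomorphic in $\ell^{-1}$. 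Finally, expanding $R(k,\g)=1+R_1(\g)k+R_2(\g)k^2+\dots$ with $R_j(\g)=\frac1{j!}\p_k^jR(0,\g)$ evaluated from the explicit $F_\pm$, $F_0$, inserting $\kappa=\frac{\pi n}{2}+\kappa_1(\g)\e+\kappa_2(\g)\e^2+\dots$ into $\Psi(\kappa,\e)=0$ and equating like powers of $\e$, one obtains $\kappa_1=\frac{i\pi nR_1(\g)}{8}$, then $\kappa_2$ as an explicit polynomial in $R_1(\g),R_2(\g),\kappa_1$, and so on; multiplying by $\e$ and collecting the first terms yields (\ref{3.4}).

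The main obstacle is the quantitative part: producing the concrete function $\Tht(\g)$ requires honest bounds, uniform over $\{|k|\le\Tht(\g)\}$, for the entire functions $F_\pm(k,\g)$, $F_0(k,\g)$ and for $H(k,\g)=k^{-1}(F_-F_+-F_0)$ — this is where the numbers $17$, $\frac{10\pi}{\sqrt3}$ and $1-e^{-\frac{\pi}{4}}$ enter — and one must cope with a mild circularity, the radius $\Tht(\g)$ occurring already in the estimates used to bound it; this is resolved by a sufficiently conservative a priori choice of $\Tht(\g)$. The remaining work, namely checking that the coefficients produced by the power-series matching coincide with those displayed in (\ref{3.4}), is routine but lengthy.
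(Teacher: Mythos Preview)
Your proposal is correct and follows essentially the same route as the paper. The only cosmetic difference is that the paper first divides equation (\ref{2.10}) by the sine product $\frac{\sin\sqrt{k^2-i\g}}{\sqrt{k^2-i\g}}\frac{\sin\sqrt{k^2+i\g}}{\sqrt{k^2+i\g}}=F_0/\g^2$ to obtain an equation of the form $1-e^{-4ik\ell}+k\g^{-1}F_1(k,\g)+k^2\g^{-2}F_2(k,\g)=0$ and then applies Rouch\'e against $1-e^{-4ik\ell}$, whereas you keep $F_0$ as a multiplier and apply Rouch\'e against $F_0(1-e^{-4ik\ell})$; these are the same comparison up to the nonvanishing factor $F_0$, and the asymptotic step via $k=\e\kappa$, $\e=\ell^{-1}$ and the implicit function theorem is identical in both.
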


\begin{proof}
We divide equation (\ref{2.10}) by $\frac{\sin\sqrt{k^2-i\g}}{\sqrt{k^2-i\g}} \frac{\sin\sqrt{k^2+i\g}}{\sqrt{k^2+i\g}}$ and rewrite it as
\begin{align}\label{3.16}
&1-e^{-4i k\ell}+k\g^{-1} F_1(k,\g) + k^2\g^{-2}F_2(k,\g)=0,
\\
&F_1(k,\g):=2 \left(\cos\sqrt{k^2+i \g}\frac{\sqrt{k^2-i \g}}{\sin\sqrt{k^2-i \g}} - \cos\sqrt{k^2-i \g}\frac{\sqrt{k^2+i \g}}{\sin\sqrt{k^2+i \g}}\right),\nonumber
\\
&F_2(k,\g):=-4 \big( \sqrt{k^2-i \g} \cot\sqrt{k^2-i \g}-i k\big) \big( \sqrt{k^2+i \g} \cot\sqrt{k^2+i \g}-i k\big).\nonumber
\end{align}
Denote $\Pi:=\big\{k:\, |k|<\tfrac{\pi}{4\sqrt{3}}\big\}$. We first estimate the functions $F_1$ and $F_2$ by obtaining appropriate bounds for each term in their definitions. Since $\overline{\sqrt{k^2-i \g}}=\sqrt{\overline{k}^2+i\g}$,
\begin{equation}\label{3.35}
\max\limits_{k\in\overline{\Pi}} |\sqrt{k^2+ i \g} \cot\sqrt{k^2+ i \g}|=\max\limits_{k\in\overline{\Pi}} |\sqrt{k^2- i \g} \cot\sqrt{k^2- i \g}|=\max\limits_{k\in\overline{\Pi}} \left|\sqrt{k^2- i \g}  \frac{\cos\sqrt{k^2\pm i \g}}{\sin\sqrt{k^2\mp i \g}}
\right|
\end{equation}
and it is sufficient to estimate the first maximum only.

Consider first the values $\g\leqslant \tfrac{\pi^2}{24}$. Then
$\big\{\sqrt{k^2+ i\g}:\, k\in\overline{\Pi}\big\}\subset\Pi^{(1)}:=\left\{z:\, |z|\leqslant \frac{\pi}{4}\right\}$.
Hence, the function $k\mapsto \sqrt{k^2+ i \g} \cot\sqrt{k^2+ i \g}$ is holomorphic on $\overline{\Pi^{(1)}}$ and by the maximum modulus principle,
\begin{equation}\label{3.31}
\max\limits_{k\in\overline{\Pi}} |\sqrt{k^2+ i \g} \cot\sqrt{k^2+ i \g}| < \left(\max\limits_{z\in\p\Pi^{(1)}} |z|^2 \frac{|\cosh z|^2}{|\sinh z|^2}
\right)^\frac{1}{2}.
\end{equation}
For $z\in  \p\Pi^{(1)}$ we let $z=\frac{\pi}{4}e^{i t}$, $t\in[0,2\pi)$, and employing the estimates
\begin{equation*}
\sinh y\geqslant y,\quad y\in[0,+\infty),\qquad \sin y\geqslant \frac{2\sqrt{2}}{\pi} y,\quad y\in\left[0,\frac{\pi}{4}\right],\qquad \cos y\leqslant 1-\frac{4}{\pi^2}y^2,\quad y\in\left[0,\frac{\pi}{2}\right],
\end{equation*}
 by straightforward calculations we get
\begin{equation}\label{3.32}
\begin{aligned}
\max\limits_{z\in\p\Pi^{(1)}} |z|^2 \frac{|\cos z|^2}{|\sin z|^2}= &
\frac{\pi^2}{16}\max\limits_{t\in[0,\tfrac{\pi}{2})} \left(1+
\frac{\cos\left(\frac{\pi}{2}\cos t\right)}{\sinh^2\left(\frac{\pi}{4}\sin t\right)+\sin^2\left(\frac{\pi}{4}\cos t\right)}\right)
\\
\leqslant & \frac{\pi^2}{16}\max\limits_{t\in[0,\tfrac{\pi}{2})} \left(1+
\frac{\sin^2 t}{\sinh^2\left(\frac{\pi}{4}\sin t\right)+\frac{1}{2}\cos^2 t}\right)=\frac{\pi^2}{16\sinh^2\frac{\pi}{4}}.
\end{aligned}
\end{equation}

We proceed to the case $\g>\tfrac{\pi^2}{24}$. Here the set
$\big\{k^2+i\g:\, k\in\Pi\big\}$ is the circle of radius
$\frac{\pi^2}{48}$ centered at $i\g$ and
hence,
\begin{align*}
&\big\{\sqrt{k^2+i\g}:\, k\in \Pi\big\} \subset  \left\{z:\, \sqrt{\g-\tfrac{\pi^2}{48}}<|z|<\sqrt{\g+\tfrac{\pi^2}{48}},\,\right.
\\
&\left.\hphantom{\big\{\sqrt{k^2+i\g}:\, k\in \Pi\big\} \subset }
\tfrac{\pi}{24}<\tfrac{\pi}{8}-\tfrac{1}{2}\arcsin\tfrac{\pi^2}{48\g}<\mathrm{arg}\, z<\tfrac{\pi}{8}+\tfrac{1}{2}\arcsin\tfrac{\pi^2}{48\g}<\tfrac{5\pi}{24}
\right\}\subset \Pi^{(2)}
\\
&\Pi^{(2)}:= \left\{z:\, \sqrt{\g-\tfrac{\pi^2}{48}}\cos\tfrac{5\pi}{24}<\RE z
<\sqrt{\g+\tfrac{\pi^2}{48}}\cos\tfrac{\pi}{24},\,
\sqrt{\g-\tfrac{\pi^2}{48}}\sin\tfrac{\pi}{24}<\IM z
<\sqrt{\g+\tfrac{\pi^2}{48}}\sin\tfrac{5\pi}{24}
\right\}.
\end{align*}
Again by the maximal modulus principle we get:
\begin{align*}
&\max\limits_{k\in\overline{\Pi}} |\sqrt{k^2+ i \g} \cot\sqrt{k^2+ i \g}|
=\sqrt{\g+\frac{\pi^2}{48}} \left(\max\limits_{z_1+iz_2\in\Pi^{(2)}}
\frac{\cosh 2z_2+\cos 2z_1}{
\cosh 2z_2-\cos 2z_1}
\right)^\frac{1}{2}
\\
&\leqslant \sqrt{\g+\frac{\pi^2}{48}} \left(\max\limits_{
z_1+iz_2\in\Pi^{(2)}
%
}  \frac{\cosh 2z_2 + 1}{\cosh 2z_2 - 1}
\right)^\frac{1}{2}
=
\sqrt{\g+\frac{\pi^2}{48}}\left(
  \frac{\cosh 2  \sqrt{\g-\frac{\pi^2}{48}}\sin\frac{\pi}{24} - 1}{\cosh 2 \sqrt{\g-\frac{\pi^2}{48}}\sin\frac{\pi}{24} - 1}
\right)^\frac{1}{2}. 
\end{align*}
This estimate and (\ref{3.35}), (\ref{3.31}), (\ref{3.32}) yield:
\begin{equation*}
\max\limits_{k\in\overline{\Pi}} |F_1(k,\g)|\leqslant 17\sqrt{\g+\frac{\pi^2}{48}},
\qquad
\max\limits_{k\in\overline{\Pi}} |F_2(k,\g)|\leqslant \frac{1}{4}
\left(17\sqrt{\g+\frac{\pi^2}{48}}+\frac{\pi}{\sqrt{3}}\right)^2.
\end{equation*}
Employing  these inequalities, by straightforward calculations we check that
\begin{equation}\label{3.36}
|k\g^{-1}F_1(k,\g)+k^2\g^{-2} F_2(k,\g)|<
 1-e^{-\frac{\pi}{4}}\quad \text{as}\quad |k|
\leqslant \Tht(\g).
\end{equation}

We choose $n\in\mathds{Z}$ obeying (\ref{3.17}) and denote
$\Pi_n:=\left\{k:\, \left| k-\frac{\pi n}{2\ell}\right|<\frac{ \pi}{4\ell}\right\}$.
In each $\Pi_n$, the function $k\mapsto 1-e^{-4i k\ell}$ has the only zero $k=\frac{\pi n}{2\ell}$ and this zero is simple.
It is straightforward to check that for $k\in \Pi_n$ the identities hold:
$\min\limits_{k\in\p\Pi_n} |1-e^{-4i k\ell}|=\min\limits_{|z|=1} |1-e^{\frac{\pi}{4}z}|=1-e^{-\frac{\pi}{4}}$.
As $k\in\p\Pi_n$, estimate (\ref{3.36}) is also satisfied,  and by
 Rouch\'e theorem equation (\ref{3.16}) possesses  the only zero in $\Pi_n$. We denote this root by $k_n(\ell,\g)$.

Let us find out the asymptotic behavior of $k_n$ as $\ell\to+\infty$. We denote   $\xi:=\ell k$, $\e:=\ell^{-1}$, $k=\e\xi$ and rewrite equation (\ref{3.16}) as
\begin{equation}\label{3.22}
1-e^{-4i\xi} + 2\e \g^{-1} F_1(\e\xi,\g) +\e^2\xi^2 \g^{-2} F_2(\e\xi,\g)=0.
\end{equation}
This equation has the root $\xi_n(\e,\g):=\e^{-1} k_n(\ell,\g)$ for each $n$ obeying (\ref{3.17}). The function in its left hand side is jointly holomorphic in $\e$, $\xi$ and $\g$ and
$(1-e^{-4i\xi})'\big|_{\xi=\frac{\pi n}{2}}=4i\ne0$.
Hence, by the inverse function theorem, the root $\xi_n(\e)$ is jointly holomorphic in $\e$ and $\g$. We write the leading terms of the Taylor expansion of $\xi_n(\e,\g)$.
\begin{equation*}
\xi_n(\e,\g)=\frac{\pi n}{2}+\xi_n^{(1)}(\g)\e+\xi_n^{(2)}(\g)\e^2+O(\e^3).
\end{equation*}
We substitute this expansion into  equation (\ref{3.22}), expand the result in $\e$ and equate the coefficients at the like power of $\e$. This determines 
the coefficients  $\xi_n^{(1)}$, $\xi_n^{(2)}$ and proves (\ref{3.4}).
\end{proof}

It is straightforward to confirm that the zeroes of the functions $F_\pm$ correspond to the eigenvalues and resonances of equation (\ref{2.1}) as $V=V_\pm$ is a single step function:
\begin{equation}\label{3.23}
V_\pm(x):=\left\{
\begin{aligned}
\pm i&\g\quad\text{as}\quad \pm x\in[0,1],
\\
&0\quad\text{otherwise}.
\end{aligned}\right.\nonumber
\end{equation}

\begin{lemma}\label{lm3.6}
The functions $F_\pm$ have countably many zeroes accumulating at infinity only. There are finitely many zeroes of these functions in the lower half-plane and all of them are located in the ball $\{k:\, |k|<r_0\sqrt{\g}\}$, $r_0:=\frac{6}{5}\max\big\{1,\sqrt{\g}\big\}$. All zeroes of $F_\pm$ are simple except a possible zero $k=2 i$.
If $k=2i$ is a zero of $F_+$ or $F_-$, it is of the second order. For all $\g>0$, the function $F_\pm$ can have at most one real zero,
which  is negative for $F_+$ and is positive for $F_-$.
\end{lemma}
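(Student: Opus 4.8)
The plan is to analyze the explicit expressions
$F_\pm(k,\g)=2ik\cos\sqrt{k^2\pm i\g}-(2k^2\pm i\g)\frac{\sin\sqrt{k^2\pm i\g}}{\sqrt{k^2\pm i\g}}$
directly, treating $F_+$ and $F_-$ simultaneously where possible. First, the entireness and the countably-many-zeroes-accumulating-at-infinity claim follows exactly as in Lemma~\ref{lm3.1}: expanding sine and cosine in Taylor series shows $F_\pm$ is entire of order $1$, it is not of the form (polynomial)$\times e^{ak}$ for any $a$, so Hadamard's theorem applies. For the location of the zeroes in the lower half-plane, I would use the same energy identity used in Section~\ref{sec:lemmata} for the full operator: if $k$ is a zero of $F_\pm$ with $\IM k<0$, then the corresponding bound state $\psi$ of $-\psi_{xx}+V_\pm\psi=k^2\psi$ satisfies $\|\psi'\|^2+(V_\pm\psi,\psi)=k^2\|\psi\|^2$, and taking real and imaginary parts gives $|\IM k^2|\le\g$ together with $\RE k^2\le\|\psi'\|^2/\|\psi\|^2+|\RE(V_\pm\psi,\psi)|/\|\psi\|^2$; the point is that $V_\pm$ is supported on an interval of length $1$, so a crude bound (e.g. via a Poincar\'e-type inequality on the support combined with the exponential decay governed by $\IM k$) forces $|k|<r_0\sqrt\g$ with $r_0=\frac65\max\{1,\sqrt\g\}$. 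Alternatively, and perhaps more cleanly, one rescales $k=\mu z$, $\mu=\sqrt\g$, rewrites $F_\pm=0$ as a relation among $z+\sqrt{z^2\pm i}$ and the exponentials $e^{-2i\mu\sqrt{z^2\pm i}}$ as done inside Lemma~\ref{lm3.3} for $Q_1$, and reuses the inequalities in (\ref{3.6a})--(\ref{3.24}): for $|z|\ge r_0$ with $\IM z\le 0$ the estimate already established there shows $|Q_1(z,\mu)|>0$, hence $F_\pm$ has no such zeroes, which gives the claimed ball $\{|k|<r_0\sqrt\g\}$ for any lower-half-plane zero; since $F_\pm$ is entire of order $1$, only finitely many of its countably many zeroes can lie in that bounded ball.

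For the simplicity of the zeroes I would compute $\frac{\p F_\pm}{\p k}$ and show that at a common zero of $F_\pm$ and $F_\pm'$ the system of equations forces $\sin\sqrt{k^2\pm i\g}=0$ and $\cos\sqrt{k^2\pm i\g}=0$ simultaneously, which is impossible, \emph{unless} $\sqrt{k^2\pm i\g}=0$, i.e. $k^2=\mp i\g$; but those points are not zeroes of $F_\pm$ (there $F_\pm$ reduces to $2ik=\pm2i\sqrt{\mp i\g}\ne0$), so the only way a double zero can occur is via the vanishing of the algebraic prefactors $2ik$ and $2k^2\pm i\g$, and one checks these two prefactors vanish simultaneously only at $k=0$ together with $\g=0$, excluded since $\g>0$; wait --- more carefully, treating $u:=\sqrt{k^2\pm i\g}$ as the variable, $F_\pm=0$ and $\p_k F_\pm=0$ combine (using $u\,\p_k u=k$) into a linear system for $(\cos u,\frac{\sin u}{u})$ whose determinant one computes explicitly, and the determinant vanishes precisely at $u=2i$, i.e. $k^2\pm i\g=-4$. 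Substituting $u=2i$ back shows $F_\pm(k,\g)=0$ is then automatically satisfied, so $k$ with $k^2=-4\mp i\g$ is the unique candidate for a non-simple zero; expanding $F_\pm$ to second order in $u-2i$ shows the order is exactly $2$. This identifies the exceptional point as $k^2=-4\mp i\g$, and I would phrase the statement as "a possible zero $k$ with $k^2\pm i\g=-4$"; the displayed "$k=2i$" in the lemma should presumably be read as the value of $\sqrt{k^2\pm i\g}$, and I would make that precise.

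For the real zeroes: set $k\in\mathds{R}$, $k\ne0$, and write $\sqrt{k^2\pm i\g}=a_\pm+ib_\pm$ with $a_\pm>0$ and $\pm b_\pm>0$ (since $k^2\pm i\g$ lies in the upper/lower half-plane). Separating $F_\pm(k,\g)=0$ into real and imaginary parts yields two real equations in $k$; the cleanest route is to rewrite $F_\pm=0$ as $\frac{2ik}{2k^2\pm i\g}=\frac{\tan\sqrt{k^2\pm i\g}}{\sqrt{k^2\pm i\g}}$ (valid when $\cos\sqrt{k^2\pm i\g}\ne0$, and the case $\cos=0$ is easily excluded directly) and analyze this as an equation for a single real unknown. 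Monotonicity of one side against the other, after taking moduli or after extracting a single real component, should give at most one solution; the sign of that solution is pinned down by a parity/continuity argument: at $\g=0$ there is no nonzero real zero (Remark~\ref{rm1}), and as $\g$ increases a real zero can only cross through, and the $\PT$-symmetry $F_+(k,\g)=\overline{F_-(\bar k,\g)}$ together with $F_\pm(-k,\g)$ relating $F_+$ to $F_-$ forces the real zero of $F_+$ to be negative and that of $F_-$ to be positive (consistent with the self-dual-spectral-singularity discussion: $V_+$ alone is lossy, $V_-$ alone has gain). I expect the \textbf{main obstacle} to be the real-zero uniqueness: showing "at most one" rigorously requires a genuine monotonicity argument for a transcendental function of a real variable with the gain parameter $\g$ buried inside a complex square root, so the derivative computation and its sign control is the delicate part; the simplicity and order-$2$ analysis, by contrast, is a finite algebraic computation, and the accumulation/finiteness claims are immediate from the earlier lemmata.
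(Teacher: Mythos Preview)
Your plan for the entireness, the Hadamard count, and the lower-half-plane localization is fine; the second route you describe (rescaling $k=\sqrt\g\,z$ and reusing the $Q_1$-type estimate from Lemma~\ref{lm3.3}) is exactly what the paper does.

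There is, however, a genuine error in your simplicity analysis. Your linear system for $(\cos u,\frac{\sin u}{u})$ is set up correctly, but the determinant does \emph{not} vanish at $u=2i$. If you multiply the determinant through by $u^2$ and collect terms in powers of $u$, it factors as $(2i-k)(k^2-u^2)^2$, and since $k^2-u^2=\mp i\g\ne0$ for $\g>0$, the unique exceptional point is literally $k=2i$, independent of $\g$. So the lemma means what it says: $k=2i$ in the original variable, not $\sqrt{k^2\pm i\g}=2i$. Your subsequent claim that ``$F_\pm=0$ is then automatically satisfied'' at the exceptional point is also false: at $k=2i$ the equation $F_+(2i,\g)=0$ is a transcendental condition on $\g$ which holds only for isolated values, consistent with the lemma's phrasing ``a \emph{possible} zero $k=2i$''. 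The second-derivative check at $k=2i$ (not at $u=2i$) is what gives the order exactly two.

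For the real zeroes your plan is different from the paper's and, as you yourself flag, the monotonicity step is left as a hope. The paper bypasses this difficulty with a change of variable that makes the real and imaginary parts of $\sqrt{k^2\pm i\g}$ explicit: for $k>0$ set $u\in(0,1)$ via $k=\tfrac{\b}{2}\sqrt{u^{-2}-u^2}$, $\b=\sqrt{2\g}$, so that $\sqrt{k^2\pm i\g}=\tfrac{\b}{2}(u^{-1}\pm iu)$. With this parametrization the real and imaginary parts of $F_+=0$ \emph{factor} into products of elementary real functions; the factors that could vanish for $k>0$ cannot vanish simultaneously, while for $k<0$ (replace $\b\mapsto-\b$) one is led to the pair $\sqrt{1-u^4}=\tanh\tfrac{\b u}{2}$ together with a trigonometric condition. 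The first of these has a unique root by the obvious opposite monotonicity of the two sides, and that root satisfies the second only for special $\b$. This is the clean monotonicity argument you were looking for; the substitution, not a derivative bound on $\tan\sqrt{k^2\pm i\g}/\sqrt{k^2\pm i\g}$, is the missing idea.
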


\begin{proof}
The functions $F_\pm$ are entire and this is why they have countably many zeroes accumulating at infinity only.  In the same way how estimate (\ref{3.24}) was proved, we immediately get
\begin{equation*}
|F_\pm(k,\g)|\geqslant
4(r_0^2-1)- \frac{e^{
2\sqrt{\g} r_2
}}{4(r_0^2-1)}>0\qquad\text{as}\qquad |k|\geqslant \sqrt{\g}r_0,\quad \IM k\leqslant0,\quad r_2:=\Big(r_0 +\sqrt{r_0^2+1}\Big)^{-1}.
\end{equation*}
Hence, all zeroes of $F_\pm$ with negative imaginary part are located in the circle $|k|<\sqrt{\g}r_0$. Since they   accumulate at infinity only, there are only finitely many of them in the aforementioned circle.

Let $k$ be a zero of $F_+$. It is clear that $k\ne0$, $\sqrt{k^2+i\g}\ne 0$. We calculate the first derivative and we see immediately that if $\frac{\p F_+}{\p k}(k,\g)=0$, by the equation $F_+(k,\g)=0$ this implies   $k=2i$. Then we calculate the second derivative of $F_+$ at $k=2i$ and see that it is nonzero. Hence, all zeroes of $F_+$ are simple except a possible second order zero at
$k=2i$. The case of $F_-$ is studied in the same way.

To find   real zeroes of $F_+$, similar to the proof of Lemma~\ref{lm3.5}, for real positive $k$  we make the change
\begin{equation}\label{4.1}
k=\frac{\b}{2} \sqrt{u^{-2}-u^2},\quad 0<u<1,
\qquad\sqrt{k^2\pm i \g}=\frac{\sqrt{\g}}{\sqrt{2}} (u^{-1}\pm i u),\qquad i\g\pm 2k^2=\g (i\pm u^{-2}\mp u^2).
\end{equation}
Taking then the real and imaginary parts of $F_+$ and factorizing them, we obtain:
\begin{equation}\label{3.7}
\begin{aligned}
&
\left(u^2 \cos\frac{\b}{2u}-\sqrt{1-u^4}\sin\frac{\b}{2u}\right) \left(\sinh\frac{\b u}{2}+\sqrt{1-u^2}\cosh\frac{\b u}{2}\right)=0,
\\
&
\left(u^2 \sin\frac{\b}{2u}+\sqrt{1-u^4}\cos\frac{\b}{2u}\right) \left(\cosh\frac{\b u}{2}+\sqrt{1-u^2}\sinh\frac{\b u}{2}\right)=0.
\end{aligned}
\end{equation}
The second factors in the above equations are obviously non-zero and the first ones can not vanish simultaneously since otherwise we would have got $\cos\frac{\b}{2u}=\sin\frac{\b}{2u}=0$. To find negative real zeroes, we make the same change as in (\ref{4.1}) but with $\b$ replaced by $-\b$. Then we arrive immediately to equations (\ref{3.7}), where the same replacement is made. Since $\sinh\frac{\b u}{2}<\cosh\frac{\b u}{2}$, only the first factor in the second equation can vanish and then only the second factor is the first equation can be zero:
\begin{equation}\label{3.8}
\sqrt{1-u^4}\cos\frac{\b}{2u}-u^2 \sin\frac{\b}{2u}=0,\qquad \sqrt{1-u^4}=\tanh \frac{\b u}{2}.
\end{equation}
The function $\sqrt{1-u^4}$, $u\in(0,1)$, decreases monotonically from $1$ to $0$, while $\tanh\frac{\b u}{2}$ increases monotonically from $0$ to $\tanh\frac{\b}{2}$. Hence, the second equation in (\ref{3.8}) has the unique root. This root solves the first equation only for certain values of $\b$. The zeroes of $F_-$ are studied in the same way.
\end{proof}

Let $k=K^{(n)}_\pm(\g)$, $n=1,\ldots,N_\pm$, be the zeroes of the functions $F_\pm$ in the lower half-plane. According Lemma~\ref{lm3.6}, all of them are simple. Consider the points $K_\pm^{(n)}$ on the complex plane identifying coinciding ones. Let  $d$ be the minimal among the mutual distances between these points  and the distances from them to the real axis.
We introduce the constants:
\begin{align*}
&C_1:=\g^2+(1+r^2)^{-1}\sinh^2 \sqrt{(1+r^2)\g},
\quad && C_2:=\inf\limits_{\substack{
|\RE k|\leqslant\sqrt{\g} r,
\\
-\frac{d}{2}<\IM k<0
}} |F_-(k,\g)F_+(k,\g)| |\IM k|^{-2},
\\
&  C_3:=\min\limits_{
\substack{
|k|=\sqrt{\g}r,
\\
\IM k\leqslant 0
}}
|F_-(k,\g)F_+(k,\g)|,
&&
C_4:=\min\limits_{n}\min\limits_{|k-K^{(n)}_\pm|\leqslant \frac{d}{2}}  |F_-(k,\g)F_+(k,\g)| |k-K^{(n)}_\pm|^{-2}.
\end{align*}

\begin{lemma}\label{lm3.7}
As
\begin{equation}\label{3.25}
\ell>\frac{\sqrt{C_2}}{\sqrt{C_1}}\max\left\{1,
\left(\frac{C_1}{C_3}\right)^\frac{1-\ln 2}{2}
\right\},\qquad
\frac{\ln \frac{\ell\sqrt{C_1}}{\sqrt{C_2}}}{2(1-\ln 2)\ell}\leqslant \frac{d}{2},
\end{equation}
the function $F$ has exactly $N_-+N_+$ zeroes $k^{(n)}(\ell,\g)$, $n=1,\ldots,N_\pm$, counting their orders in the half-plane
\begin{equation*}
\IM k\leqslant -\frac{\ln \frac{\ell\sqrt{C_1}}{\sqrt{C_2}}}{2(1-\ln 2)\ell}.
\end{equation*}
The constant $C_4$ is positive and for $\ell$ obeying
 obeying (\ref{3.25}) and
 \begin{equation}\label{3.26}
\ell>\frac{1}{2\IM K_\pm^{(n)}+ d} \ln\frac{2\sqrt{C_1}}{d\sqrt{C_4}},
\end{equation}
the estimates hold:
\begin{equation}\label{3.27}
\big|k^{(n)}_\pm(\ell,\g)-K^{(n)}_\pm(\g)\big|<\frac{\sqrt{C_1}}{\sqrt{C_4}} e^{(2\IM K_\pm^{(n)}+d)\ell}.
\end{equation}
\end{lemma}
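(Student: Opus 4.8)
\emph{Proof plan.} The statement will be obtained from Rouch\'e's theorem, applied twice to the splitting $F=F_-F_+-e^{-4ik\ell}F_0$, viewing $e^{-4ik\ell}F_0$ as a perturbation of the $\ell$-independent entire function $F_-F_+$. Two elementary facts are used throughout: $|e^{-4ik\ell}|=e^{4\ell\IM k}$, which is $\le1$ for $\IM k\le0$ and tends to $0$ as $\ell\to\infty$ whenever $\IM k<0$; and the bound $|F_0(k,\g)|\le C_1$ on $\{|k|\le\sqrt{\g}\,r\}$, which follows from $|\sin z|\le\sinh|z|$ and $|k^2\pm i\g|\le(1+r^2)\g$. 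By Lemma~\ref{lm3.3} every zero of $F$ with $\IM k\le0$ lies in $\{|k|\le\sqrt{\g}\,r\}$, and by Lemma~\ref{lm3.6} the zeroes of $F_-F_+$ with $\IM k\le0$ are exactly the points $K^{(n)}_\pm$, all of them simple and located below the line $\IM k=-d$, each being a zero of $F_-F_+$ of order at most two.

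\emph{Local analysis near the points $K^{(n)}_\pm$.} By the choice of $d$ the disc $\{|k-K^{(n)}_\pm|\le\tfrac d2\}$ lies in $\{\IM k\le-\tfrac d2\}$ and contains no zero of $F_-F_+$ other than $K^{(n)}_\pm$, so $|F_-F_+(k)|\,|k-K^{(n)}_\pm|^{-2}$ is continuous and strictly positive there (the at most double zero of $F_-F_+$ at $K^{(n)}_\pm$ being cancelled by the factor $|k-K^{(n)}_\pm|^{-2}$), whence $C_4>0$. Setting $\rho_n:=\sqrt{C_1/C_4}\,e^{(2\IM K^{(n)}_\pm+d)\ell}$ and noting $2\IM K^{(n)}_\pm+d\le\IM K^{(n)}_\pm<0$, condition (\ref{3.26}) says precisely $\rho_n<\tfrac d2$. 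On the circle $|k-K^{(n)}_\pm|=\rho_n$ one then has $|F-F_-F_+|\le C_1e^{4\ell(\IM K^{(n)}_\pm+\rho_n)}$ and $|F_-F_+|\ge C_4\rho_n^2=C_1e^{(4\IM K^{(n)}_\pm+2d)\ell}$, and $\rho_n<d/2$ makes the first quantity strictly smaller than the second. Rouch\'e's theorem yields exactly one zero $k^{(n)}_\pm(\ell,\g)$ of $F$ in $\{|k-K^{(n)}_\pm|<\rho_n\}$, which is (\ref{3.27}); in particular $F$ has at least $N_-+N_+$ zeroes (counting orders) with $\IM k\le-\tfrac d2$.

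\emph{Global count in the half-plane $\IM k\le-h(\ell)$.} Put $h(\ell):=\bigl(2(1-\ln2)\ell\bigr)^{-1}\ln\bigl(\ell\sqrt{C_1}/\sqrt{C_2}\bigr)$. The first inequality in (\ref{3.25}) makes $\ell\sqrt{C_1}/\sqrt{C_2}>1$, hence $h(\ell)>0$, and the second gives $h(\ell)\le d/2$, so all $K^{(n)}_\pm$ lie in $\Omega_\ell:=\{|k|<\sqrt{\g}\,r\}\cap\{\IM k<-h(\ell)\}$ and $F_-F_+$ has exactly $N_-+N_+$ zeroes there. On $\partial\Omega_\ell$ the perturbation satisfies $|F-F_-F_+|\le e^{-4\ell h(\ell)}C_1$ with $e^{-4\ell h(\ell)}=(\ell\sqrt{C_1}/\sqrt{C_2})^{-2/(1-\ln2)}$, while $|F_-F_+|\ge C_3$ on the circular arc $\{|k|=\sqrt{\g}\,r\}$ and $|F_-F_+|\ge C_2|\IM k|^2=C_2h(\ell)^2$ on the segment $\{\IM k=-h(\ell)\}$. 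The first inequality in (\ref{3.25}) is exactly what makes $e^{-4\ell h(\ell)}C_1<C_3$ on the arc (it is equivalent to $\ell>\sqrt{C_2/C_1}\,(C_1/C_3)^{(1-\ln2)/2}$), while the logarithmic rate and the constant $1-\ln2$ in $h(\ell)$ are forced by the companion requirement $e^{-4\ell h(\ell)}C_1<C_2h(\ell)^2$ on the segment, where $F_-F_+$ vanishes quadratically as $\IM k\to0$. Rouch\'e's theorem then gives exactly $N_-+N_+$ zeroes of $F$ in $\Omega_\ell$; by Lemma~\ref{lm3.3}, $F$ has no zero outside $\{|k|\le\sqrt{\g}\,r\}$ with $\IM k\le0$, so these are all its zeroes in the half-plane $\IM k\le-h(\ell)$, and by the previous paragraph they coincide with the $k^{(n)}_\pm(\ell,\g)$.

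\emph{Main obstacle.} The lower bounds $|F_-F_+|\ge C_3$, $|F_-F_+|\ge C_2|\IM k|^2$, $|F_-F_+|\ge C_4|k-K^{(n)}_\pm|^2$ and the bound $|F_0|\le C_1$ are essentially definitions or follow from the maximum principle and the simplicity of the zeroes of $F_\pm$, so the real work is the calibration: choosing the admissible region $\IM k\le-h(\ell)$ and the radii $\rho_n$ so that both Rouch\'e comparisons survive, and verifying that the explicit thresholds (\ref{3.25})--(\ref{3.26}) are sufficient. The delicate step is the straight segment $\IM k=-h(\ell)$, where one must balance the quadratic vanishing of $|F_-F_+|$ against the exponential smallness $e^{-4\ell h(\ell)}$ of the perturbation; this balance, together with the arc estimate controlled by $C_3$, is what pins down the precise shape of $h(\ell)$ and of condition (\ref{3.25}).
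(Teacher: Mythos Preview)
Your proposal is correct and follows essentially the same route as the paper: two applications of Rouch\'e's theorem to the splitting $F=F_-F_+-e^{-4ik\ell}F_0$, once on the region $\Pi=\{|k|<\sqrt{\g}\,r,\ \IM k<-h(\ell)\}$ for the global count and once on small discs about the $K^{(n)}_\pm$ for the localisation (\ref{3.27}), using $|F_0|\le C_1$ together with the lower bounds $|F_-F_+|\ge C_3$, $\ge C_2|\IM k|^2$, $\ge C_4|k-K^{(n)}_\pm|^2$ on the corresponding boundary pieces. Two minor remarks: in the local step Rouch\'e gives \emph{the same number} of zeroes as $F_-F_+$, which is two (with multiplicity) when $K^{(n)}_-=K^{(m)}_+$, so ``exactly one'' should be read accordingly; and the bound $|F_-F_+|\ge C_2|\IM k|^2$ holds simply by the definition of $C_2$ as a positive infimum, not because $F_-F_+$ actually vanishes quadratically on the real axis.
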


\begin{proof}
We begin with the estimate
\begin{equation}\label{3.28}
\g^2\left|\frac{\sin\sqrt{k^2-i\g}}{\sqrt{k^2-i\g}} \frac{\sin\sqrt{k^2+i\g}}{\sqrt{k^2+i\g}}
\right|\leqslant C_1\quad\text{as}\quad |k|\leqslant \sqrt{\g}r,\quad \IM k\leqslant 0,
\end{equation}
which can be checked by straightforward calculations employing the maximum modulus principle. According Lemma~\ref{lm3.3}, all zeroes of the function $F$ below the line $\IM k=-\frac{\ln \frac{\ell\sqrt{C_1}}{\sqrt{C_2}}}{2(1-\ln 2)\ell}$ also obey $|k|<\sqrt{\g}r$ and therefore, they are contained in
\begin{equation*}
\Pi:=\left\{k:\, |k|<\sqrt{\g}r,\; \IM k <-\frac{\ln \frac{\ell\sqrt{C_1}}{\sqrt{C_2}}}{2(1-\ln 2)\ell}\right\}.
\end{equation*}
As $k\in\p\Pi$, by the definition of constants $C_i$, $i=1,\ldots,4$, and (\ref{3.25}), (\ref{3.28}), the inequalities
\begin{equation*}
|F_-(k,\g) F_+(k,\g)|\geqslant \min \left\{
\frac{C_2\ln^2 \frac{\ell\sqrt{C_1}}{\sqrt{C_2}}}{4(1-\ln 2)^2\ell^2}, C_3\right\} > \max\limits_{\overline{\Pi}} \g^2\left|e^{-4ik\ell} \frac{\sin\sqrt{k^2-i\g}}{\sqrt{k^2-i\g}} \frac{\sin\sqrt{k^2+i\g}}{\sqrt{k^2+i\g}}\right|
\end{equation*}
hold.
Hence, by the Rouch\'e theorem we immediately conclude that the function $F$ has the same amount of zeroes inside $\Pi$ as the function $F_-F_+$ does and this proves the first part of the lemma.

The constant  $C_4$ is non-zero since the functions $F_\pm$ has a simple zero  at $K^{(n)}_\pm$ and the function $F_- F_+$ has also simple zeroes at $K^{(n)}_\pm$ if $K^{(n)}_-\ne K^{(m)}_+$ and a second order zero if $K^{(n)}_- = K^{(m)}_+$.
Then by inequalities  (\ref{3.28}) and the definition of the constant  $C_4$
we infer that, as
\begin{equation*}
|k-K^{(n)}_\pm|=\rho:=\frac{\sqrt{C_1}}{\sqrt{C_4}} e^{-(2\IM K_\pm^{(n)}-d)\ell}<\frac{d}{2}
\end{equation*}
and $\ell$ obeys (\ref{3.25}), (\ref{3.26}),
 the estimates
\begin{equation*}
|F_-(k,\g) F_+(k,\g)|\geqslant C_4 \rho^2, \qquad
\g^2\left|e^{-4ik\ell} \frac{\sin\sqrt{k^2-i\g}}{\sqrt{k^2-i\g}} \frac{\sin\sqrt{k^2+i\g}}{\sqrt{k^2+i\g}}\right|\leqslant C_1 e^{(4\IM K^{(n)}_\pm+2d)\ell}<C_4\rho^2
\end{equation*}
are valid.  Hence, we can apply the Rouch\'e theorem to the function $F$ and we conclude that this function has the same amount of zeroes inside the circle $\{k:\, |k-K^{(n)}_\pm|<\rho\}$
as the functions $F_- F_+$ does. This implies estimates (\ref{3.27}).
\end{proof}

Finally, we prove the absence of  pure imaginary zeroes in the lower complex half-plane, which means that the waveguide does not have   real discrete eigenvalues $\lambda<0$.

\begin{lemma}\label{lm3.5}
Equation (\ref{2.10}) has no pure imaginary zeroes in the lower complex half-plane.
\end{lemma}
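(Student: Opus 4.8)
The plan is to argue by contradiction, reducing the statement to the variational identity already exploited above. Suppose $k=-i\tau$ with $\tau>0$ were a zero of equation~(\ref{2.10}). By the construction of Section~\ref{sec:periodicity}, such a $k$ produces a nontrivial solution $\psi$ of equation~(\ref{2.1}) with $\l=k^2=-\tau^2$ obeying~(\ref{2.3}); since $k$ is purely imaginary with $\IM k<0$, the asymptotics~(\ref{2.3}) read $\psi(x)=C_-e^{\tau x}$ for $x<-\ell-1$ and $\psi(x)=C_+e^{-\tau x}$ for $x>\ell+1$, so $\psi$ and $\psi'$ decay exponentially and $\psi\in H^2(\mathds{R})$. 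Thus $\psi$ is a genuine bound state of~(\ref{2.1}) belonging to $L_2(\mathds{R})$ with the real negative eigenvalue $\l=-\tau^2$.

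Next I would show that this eigenfunction can be chosen $\PT$-symmetric. For $x<-\ell-1$ equation~(\ref{2.1}) reads $-\psi''=\l\psi$ with $\l<0$, so the only solutions decaying at $-\infty$ are multiples of $e^{\sqrt{-\l}\,x}$; by the uniqueness theorem for linear ODEs this propagates to the whole line, so that the space of global solutions of~(\ref{2.1}) decaying at $-\infty$ is one-dimensional and, in particular, $\l$ is geometrically simple. Since $V(x)=\overline{V(-x)}$ and $\l\in\mathds{R}$, the function $\check\psi(x):=\overline{\psi(-x)}$ is again a solution of~(\ref{2.1}) with the same $\l$ that decays at $\pm\infty$; by simplicity $\check\psi=c\psi$ for some nonzero $c\in\mathds{C}$. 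The antilinear map $\psi\mapsto\check\psi$ squares to the identity, whence $|c|=1$, and after multiplying $\psi$ by a suitable unimodular constant we may assume $\check\psi=\psi$, i.e.\ $\psi(x)=\overline{\psi(-x)}$; in particular $|\psi(x)|=|\psi(-x)|$.

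To finish, I would multiply~(\ref{2.1}) by $\overline\psi$ and integrate over $\mathds{R}$; the exponential decay of $\psi$ and $\psi'$ legitimizes the integration by parts and gives $\|\psi'\|_{L_2(\mathds{R})}^2+(V\psi,\psi)_{L_2(\mathds{R})}=\l\|\psi\|_{L_2(\mathds{R})}^2$. For the generalized $\PT$-symmetric step potential, $V$ equals $i\g$ on $(-\ell-1,-\ell)$, equals $-i\g$ on $(\ell,\ell+1)$, and vanishes elsewhere, so the symmetry $|\psi(x)|=|\psi(-x)|$ yields $(V\psi,\psi)_{L_2(\mathds{R})}=i\g\big(\int_{-\ell-1}^{-\ell}|\psi|^2\,dx-\int_{\ell}^{\ell+1}|\psi|^2\,dx\big)=0$. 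Hence $\|\psi'\|_{L_2(\mathds{R})}^2=\l\|\psi\|_{L_2(\mathds{R})}^2$, which forces $\l\geqslant0$ and contradicts $\l=-\tau^2<0$. Therefore equation~(\ref{2.10}) has no purely imaginary zeroes in the lower complex half-plane.

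The step I expect to be the main obstacle is the middle one: making precise that the decaying-at-$-\infty$ solution space is exactly one-dimensional, so that $\PT$ symmetry can be transferred from the equation to the eigenfunction, and then handling the unimodular constant $c$ carefully enough that the reduction $\check\psi=\psi$ is legitimate. Once this normalization is available, the contradiction is immediate from the variational identity, exactly in the spirit of the derivation of~(\ref{3.0b}) above.
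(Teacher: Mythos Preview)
Your argument is correct, but it follows a genuinely different route from the paper's proof. The paper proceeds by a direct computation: it parametrizes negative pure imaginary $k$ via $k=-\tfrac{i\b}{2}\sqrt{u^{-2}-u^2}$, $u\in(0,1)$, substitutes into~(\ref{2.10}), and rewrites the resulting real equation as a sum of manifestly positive and non-negative terms, so it can never vanish. No spectral-theoretic input is used.

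Your approach is variational and exploits the structure of the problem: geometric simplicity of a real eigenvalue $\l<0$ lets you choose the eigenfunction $\PT$-symmetric, whence $|\psi(x)|=|\psi(-x)|$; since the step potential is purely imaginary and odd, $(V\psi,\psi)_{L_2}=0$, and the identity $\|\psi'\|^2=\l\|\psi\|^2$ forces $\l\geqslant0$. Each step is sound, including the one you flagged: for $\l<0$ the decaying-at-$-\infty$ solution space of a second-order linear ODE is one-dimensional, the antilinear $\PT$-map squares to the identity, so $|c|=1$, and the phase can be absorbed. Your proof is shorter, conceptually cleaner, and in fact proves more: it shows that \emph{any} $\PT$-symmetric potential with $\RE V\geqslant0$ (in particular any purely imaginary one) has no negative real eigenvalues. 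The paper's computational proof, by contrast, is tied to the explicit step potential but is self-contained at the level of the transcendental equation~(\ref{2.10}), without invoking the underlying boundary-value problem.
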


\begin{proof}
For negative pure imaginary $k$ we make the change
\begin{equation*}
k=-\frac{i\b}{2}\sqrt{u^{-2}-u^2},\quad 0<u<1,\quad \sqrt{k^2\pm i \g}=\frac{\b}{2} (i u^{-1}\pm u),
\quad
i\g\pm 2k^2=\frac{\b^2}{2} (i\mp u^{-2}\pm u^2),
\end{equation*}
where $\b:=\sqrt{2\g}$. We multiply equation (\ref{2.10}) by $\sqrt{k^2-i\g}\sqrt{k^2+i\g}$,
substitute the above formulae  and divide the result by $\sqrt{1-u^4}$. This leads us to  the equation:
\begin{align*}
u^{-4}\sqrt{1-u^4}\cosh \b u^{-1}&+\frac{1-e^{-\frac{2\b\ell}{u}\sqrt{1-u^4}}}{2\sqrt{1-u^4}} (\cosh \b u^{-1}-\cos \b u)
\\
&+
u^{-4}\sinh \b u^{-1} +\sqrt{1-u^4}\cos\b u - u^2\sin \b u=0.
\end{align*}
The   first two  terms in this equation are non-zero for all $u$, $\b$, $\ell$. Therefore, to prove the lemma, it is sufficient to check that the sum of remaining three terms is non-negative.
For $\b>\arcsinh 1$
we have
\begin{align*}
&\sqrt{1-u^4}\cos\b u - u^2\sin \b u\leqslant \Big(\sqrt{1-u^4}^2+u^4\Big)^\frac{1}{2}=1,
\\
&u^{-4}\sinh \b u^{-1} +\sqrt{1-u^4}\cos\b u - u^2\sin \b u > u^{-4}\sinh \b u^{-1}-1>\sinh \b -1> 0.
\end{align*}
As
$0\leqslant \b\leqslant \arcsinh 1<\frac{\pi}{2}$, by
the inequalities
$\cos \b u\geqslant \cos \b>0$, $u^2\sin\b u\leqslant \sin \b$,
$\sinh\b-\sin\b\geqslant 0$,
we get
\begin{equation*}
u^{-4}\sinh \b u^{-1} +\sqrt{1-u^4}\cos\b u - u^2\sin \b u \geqslant u^{-4}\sinh \b - \sin \b\geqslant  \sinh\b-\sin\b\geqslant 0.
\end{equation*}
The proof is complete.
\end{proof}

\section{
Behavior of zeroes and ladders of resonances and eigenvalues}
\label{sec:zeros}

In this section we employ the results of the previous section to describe how the zeroes of the function $F$ depend  on $\g$ and $\ell$.

\subsection{Location of zeroes and dependence of gain-and-loss amplitude}

According Remark~\ref{rm1}, as $\g=0$, there is a single zero $k=0$. It corresponds to a resonance and the associated solution to equation (\ref{2.1}) is the constant function. As soon as $\g$ is positive, no matter how small it is, there arise {\sl infinitely many} zeroes of the function $F$, see Lemma~\ref{lm3.1}, and we still have a resonance at $k=0$ but now the associated solution of equation (\ref{2.1}) is non-constant. Despite   \textcolor{black}{the potential $V$ is a {\sl regular} perturbation as $\g$ is small}, it \textcolor{black}{influences}   the general spectral picture quite essentially producing at once infinitely many zeroes. All these zeroes are located symmetrically with respect to the imaginary axis and are continuous in $\g$ and $\ell$. For the corresponding eigenvalues and resonances given by $\l=k^2$, this property   means that we deal with complex-conjugate pairs of resonances and eigenvalues, as it is usual for $\mathcal{PT}$-symmetric systems.

By Lemma~\ref{lm3.2}, 
for small $\g$, we have one more pure imaginary zero close to $k=0$. This zero is holomorphic in $\g$ and the leading terms of its Taylor series are given by (\ref{3.1}).
Since $(\ell+\tfrac{1}{3})\g>0$, this zero corresponds to a resonance. The corresponding value $\l$
is negative and   \textcolor{black}{lies close or next to the bottom} of the essential spectrum.

Lemmata~\ref{lm3.1} and~\ref{lm3.3} state the following important facts. The zeroes can accumulate at infinity only, that is, each circle of a fixed radius in the complex plane contains only finitely many zeroes. Except for zeroes in the circle $|k|<\sqrt{\g} r$, where $r$ is introduced in (\ref{3.5a}), all other zeroes are located in the upper half-plane in the exponential sector
\begin{equation*}
\sqrt{\g}e^{(\ell+1)\IM k}<|k|<\frac{47}{25}
 \sqrt{\g}e^{(\ell+1)\IM k},\qquad |k|<\frac{76}{73}\g e^{(\ell+\frac{9}{8})|k|}.
\end{equation*}
The second inequality deserves a special consideration. Let $\g$ be small enough and consider the zeroes outside the circumference $|k|=\sqrt{\g} r$. Then the inequality implies
\begin{equation*}
\frac{39}{20}<\frac{76}{73}\g^\frac{1}{2} e^{(\ell+\frac{9}{8})|k|}\qquad \Rightarrow\qquad |k|>\frac{1}{\ell+\frac{9}{8}}\ln \frac{2847\g^{-\frac{1}{2}}}{1520}
\end{equation*}
and hence,  except the zeroes discussed in Lemma~\ref{lm3.2}, all other zeroes are located far from $k=0$ at distances of order $O(|\ln\g|)$. This means that for small $\g$, infinitely many zeroes emerge from infinity in the upper half-plane and only two of them come from zero and   are  located in the upper half-plane, too.

Since all zeroes with $\IM k \leq 0$   are located in the aforementioned circle $|k|<\sqrt{\g} r$,   for all   $\g>0$ and $\ell\geqslant 0$, there are only {\sl finitely many} such zeroes.  They 
correspond to  the eigenvalues (bound states), and for them and inequality (\ref{3.0b}) should be satisfied as well.  All eigenvalues   are complex-valued since according Lemma~\ref{lm3.5}, there are no pure imaginary zeroes in lower complex half-plane.  Real zeroes 
correspond to spectral singularities. Thus,  for all $\ell\geqslant 0$ and $\g>0$ the system has infinitely many resonances  and   finitely many bound states and spectral singularities.    The size of the aforementioned circle, in which the zeroes with  $\IM k \leq 0$  are located, increases proportionally to $\g$ (see the definition of $r$ in (\ref{3.5a})). Hence, by increasing $\g$ we have more chances to get the zeroes in the lower half-plane and as we shall discuss below, this is indeed the case.

\subsection{Large distance regime and ladders of eigenvalues and resonances}\label{sec:ladders}

Here we discuss   the behavior of the zeroes  as $\ell$ grows. Lemma~\ref{lm3.7} states that  below the line $\IM k=-\frac{\ln \frac{\ell\sqrt{C_1}}{\sqrt{C_2}}}{2(1-\ln 2)\ell}$, the function $F$ possesses the same number zeroes as the functions $F_-$ and $F_+$ do and these  zeroes of $F$ converge to the zeroes of $F_\pm$ as $\ell\to+\infty$, see (\ref{3.27}). As $\ell\to+\infty$, the distance from the aforementioned line to the real axis is of order $O(\ell^{-1}\ln\ell)$ and is small. A natural question whether there can be  some other zeroes in the lower half-plane above this line is answered in Lemma~\ref{lm3.4} and the answer is positive. According to this lemma,  making $\ell$ large enough, we \textcolor{black}{certainly} have $2N+1$ zeroes in the vicinity of the real axis, where $N:=\lfloor \tfrac{1}{2}+\tfrac{2\Tht(\g)}{\pi}\ell\rfloor$, $\lfloor\cdot\rfloor$ is the integer part of a number. All these zeroes are located in the circles
$\big\{k:\, |k-\tfrac{\pi n}{2\ell}|<\tfrac{\pi}{4\ell}\big\}$, exactly one zero in each circle, and all these circles are inside a fixed circle $\{k:\,|k|<\Tht(\g)\}$. As $\ell$ grows, the number  of such zeroes increases as well but all of them \textcolor{black}{remain} in the circle $\{k:\,|k|<\Tht(\g)\}$. As $\g$ grows, the size of the latter circle increases as $O(\sqrt{\g})$.   For large $\ell$, these zeroes are approximately given by asymptotic formulae (\ref{3.4}). As these formulae show, if $\sin\sqrt{2\g}<0$, the zeroes are located in the upper half-plane and correspond to  resonances.  The described behavior is similar to that of Wannier-Stark resonances. Namely, for gain-and-loss amplitude we again have a ladder of resonances accumulating in the vicinity certain zone in the essential spectrum.  As $\sin\sqrt{2\g}>0$, these zeroes are in the lower half-plane and correspond to eigenvalues. The   number of these zeroes  guaranteed by  Lemma~\ref{lm3.4}  grows as $O(\ell)$ as $\ell\to+\infty$, while the  distances between the neighbouring zeroes are of order $O(\ell^{-1})$. This means that we have either resonances or eigenvalues accumulating to the segment $[-\Tht(\g),\Tht(\g)]$ on the real axis, see Figure~\ref{fig:accum}(a,b). It is especially interesting to notice that by means of a continuous  change  of the gain-and-loss amplitude one can transform a ladder of resonances  to a ladder of eigenvalues or vice versa as shown in figure~\ref{fig:accum}(b). Such a transformation is obviously not possible for  Wannier-Stark ladders   in  self-adjoint operators where complex eigenvalues cannt exist.  The peculiar regime  $\sin\sqrt{2\g}=0$ corresponds to the situation when the leading order of imaginary part of expansion (\ref{3.4}) vanishes [curve~2 in figure~\ref{fig:accum}(b)]. A   delicate analysis shows that in this case the imaginary parts of zeroes described by  Lemma~\ref{lm3.4} are  nonzero and  amount to  $O(\ell^{-5})$. Thus, for $\sin\sqrt{2\g}=0$ we have a ladder of coexisting  ``nearly spectral singularities'' with extremely small by nevertheless nonzero imaginary parts (genuine spectral singularities with exactly zero imaginary parts  will be discussed in section~\ref{sec:ss}).

\begin{figure}
	\centering
\includegraphics[width=0.99\columnwidth]{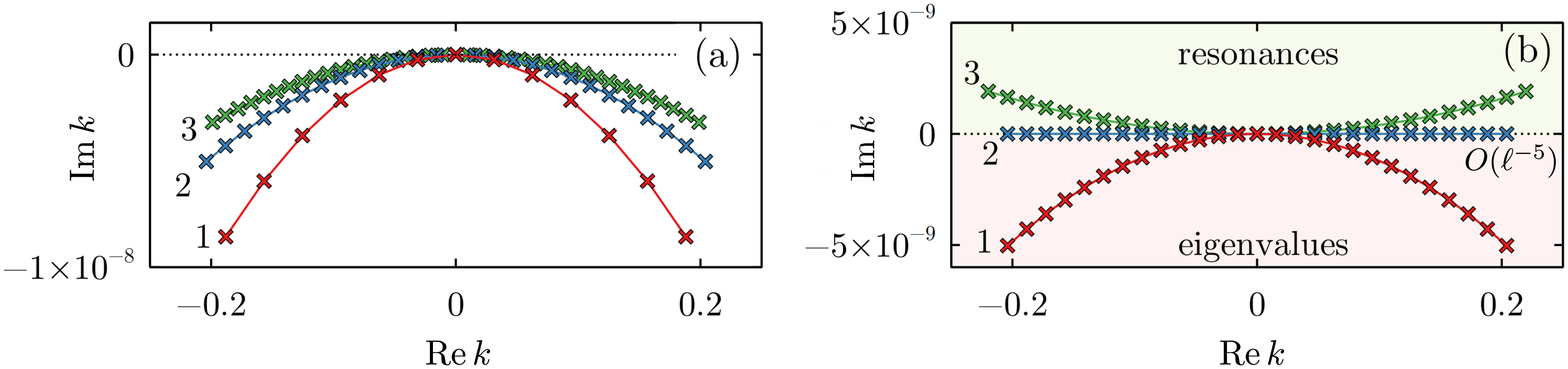}
\caption{Ladders of zeroes  plotted according to the leading terms of  expansion (\ref{3.4}). 	(a) Accumulation of zeroes (corresponding to eigenvalues) to the segment $[-\Tht(\g),\Tht(\g)]$ as the gain-to-loss distance $\ell$ grows. Here the gain-and-loss amplitude $\gamma=40$ and $\ell=50$ (curve 1), 100 (curve 2) and  150 (curve 3). The zeroes are shown with crosses; thin solid lines are to guide the eye. (b) Transformation of the ladder of eigenvalues (curve 1) to a ladder of resonances (curve 3) under the change of the gain-and-loss amplitude. Here $\ell=100$ and $\gamma=40$ (curve 1) and $\gamma = 50$ (curve 3). Curve 2 corresponds to $\gamma=9\pi^2/2\approx  44.413$. In this case the imaginary part of term $\propto \ell^{-3}$ in  expansion (\ref{3.4}) vanishes, and we obtain a ladder of coexisting ``nearly spectral singularities'' with Im$\,k=O(\ell^{-5})$.}
\label{fig:accum}
\end{figure}

It is interesting to compare this spectral picture with that in the limiting situation $\ell=+\infty$ and the first issue is which operator are to be regarded as limiting ones. Such limiting operator are to be treated in the norm resolvent sense, that is, we should find an approximation for the resolvent of our operator as $\ell\to+\infty$.
According the results of \cite{UMJ},  as $\ell\to+\infty$, the solution of the equation
\begin{equation*}
-u''+Vu-\l u=f\quad\text{in}\quad\mathds{R},\quad u\in H^2(\mathds{R}),\quad f(x):=f_-(x+\ell)+f_0(x)+f_+(x-\ell),\quad f_0,\,f_\pm\in L_2(\mathds{R}),
\end{equation*}
is approximated in the norm resolvent sense up to a small error by $u_-(x+\ell)+u_0(x)+u_+(x-\ell)$. Here $u_0, u_\pm\in H^2(\mathds{R})$ solve the equations
\begin{equation*}
-u_0''-\l u_0=f_0,\quad -u_\pm''+V_\pm u_\pm -\l u_\pm=f_\pm\quad\text{in}\quad \mathds{R},
\end{equation*}
where the potentials $V_\pm$ were introduced in (\ref{3.23}).
Hence, the limiting operator is in fact a direct sum of three operators
$\Op_-\oplus\Op_0\oplus\Op_+$, $ \Op_0:=-\frac{d^2\ }{dx^2}$,
$\Op_\pm:=-\frac{d^2\ }{dx^2}+V_-$.
The essential spectra of all three operators are $[0,+\infty)$. The operator $\Op_0$ has the only resonance at $k=0$, while the resonances and the eigenvalues of the operators $\Op_\pm$ are expressed via the zeroes of the functions $F_\pm$ by the formula $\l=k^2$, see Lemma~\ref{lm3.6}.  In view of such approximation, it would be natural to expect that as $\ell\to+\infty$, the above discussed zeroes $k_n(\ell,\g)$   converge to similar real zeroes  of the functions $F_\pm$ or to zero.
{\sl This is not true!} Indeed, by Lemma~\ref{lm3.6}, each of the functions $F_\pm$ has at most {\sl one}  real zero, while the number of the zeroes $k_n(\ell,\g)$ in the vicinity of this segment {\sl increases} as $\ell\to+\infty$. Thus, regarding the case $\ell=+\infty$ and eigenvalues equations for the operators $\Op_\pm$ as limiting, for sufficiently large finite $\ell$ we get a {\sl large number} of {\sl eigenvalues or resonances emerging from internal points} of the zone $[0,\Tht^2(g)]$ in essential spectrum. And at most three points (including zero)  are spectral singularities for the limiting equations. In other words, here we have eigenvalues emerging from ordinary points in the essential spectrum not being spectral singularities for the unperturbed operators. And while in the case of Wannier-Stark resonances their existence is due to the eigenvalues of the operator $-\frac{d^2\ }{dx^2}+\e x$ accumulating on the real line as $\e\to+0$, this surely not the case in our model.

Here we just make one more important observation. The presence of the mentioned zero-width resonances, or real zeroes of $F$, could be regarded as an explanation or the reason for the existence of the above described ladder of resonances and eigenvalues. However, {\sl this is not the case!} Indeed, for $\frac{\pi^2}{2} < \gamma< \gamma_*\approx  11.561$, the function $F$ {\sl does have} a ladder of resonances, but as we shall show in the next section, it {\sl has no} real zeroes. So, our ladder of resonances or eigenvalues exists due to some other reasons and  the existence of the zero-width resonances is mostly the implication of the presence of such ladder and its behavior as $\g$ varies.

\section{Spectral singularities}
\label{sec:ss}
\subsection{General analytical expressions}

In this section we study real zeroes of the function $F$ corresponding to spectral singularities, i.e. to  zero-width resonances.
For such zeroes, equation (\ref{2.10}) is a pair of two real equations for one real variable $k$ and two parameters $\ell$ and $\g$.
Thanks to the symmetry of the zeroes with respect to the imaginary axis, it is sufficient to find only positive real resonances since the negative ones are located symmetrically with respect to the origin. Similar to the proof of Lemma~\ref{lm3.6}, for real positive $k$  we make  change (\ref{4.1})
and rewrite equation (\ref{2.10}) in the following form:
\begin{align*}
(1-2u^{-4})\cos\b u^{-1}&+(2u^4-1)\cosh\b u + 2i \sqrt{1-u^4}\left(u^2\sinh\b u - u^{-4}\sin\b u^{-1}\right)
\\
&-e^{-2i\b\ell\sqrt{u^{-2}-u^2}} \left(\cosh(\b u)-\cos\b u^{-1}\right)=0,\qquad \b=\sqrt{2\g}.
\end{align*}
Taking the real and imaginary part of this equation and multiplying the equation by $u^4$, we obtain:
\begin{equation}\label{4.3}
\begin{aligned}
&(u^4-2)\cos\b u^{-1}+u^4(2u^4-1)\cosh\b u
=u^4\cos\left(2\b\ell\sqrt{u^{-2}-u^2}\right)\left(\cosh \b u-\cos\b u^{-1}\right),
\\
&2\sqrt{1-u^4}\Big(u^6\sinh\b u-\sin\b u^{-1}\big)
=-u^4\sin\left(2\b\ell\sqrt{u^{-2}-u^2}\right)\left(\cosh \b u-\cos\b u^{-1}\right).
\end{aligned}
\end{equation}
This is a system of two real equations with three real variables. If we are given $(\b,\ell)$ and we try to find $u$, the system is overdetermined and does not necessary have a root. In other words, it is solvable with respect to $u$ only if $(\b,\ell)$ are located on some (solvability) curves. In order to avoid working with an overdetermined system, in what follows we regard (\ref{4.3}) as a system for two unknown variable with one parameter.

To find the curves in $(\b,\ell)$ plane, on which equations (\ref{4.3}) are solvable with respect to $u$, we shall regard $u$ as a parameter and $(\b,\ell)$ as unknown variables. We take the sum of squares of equations (\ref{4.3}) and divide the result by $(1-u^4)$. We also divide equations (\ref{4.3}). This leads us to a   pair  of equations:
\begin{align}\label{4.4}
&u^4(1-u^4) \cosh\b u\cos\b u^{-1}-2u^6 \sinh\b u\sin\b u^{-1} +1-u^{12}=0,
\\
&\frac{2\sqrt{1-u^4}(u^6\sinh\b u-\sin\b u^{-1})}{(2-u^4)\cos\b u^{-1}+u^4(1-2u^4)\cosh\b u}=\tan \Big( 2\b\ell\sqrt{u^{-2}-u^2}\Big).\nonumber
\end{align}
The second equation can be solved explicitly with respect to $\ell$:
\begin{equation}\label{4.10}
\ell=\frac{1}{2\b\sqrt{u^{-2}-u^2}} \left(\arctan \frac{2\sqrt{1-u^4}(u^6\sinh\b u-\sin\b u^{-1})}{(2-u^4)\cos\b u^{-1}+u^4(1-2u^4)\cosh\b u}+\pi n\right),
\end{equation}
where $n\in\mathds{N}$ is an arbitrary natural number. As the next lemma states, to make equations (\ref{4.4}), (\ref{4.10}) equivalent to (\ref{4.3}), we should also assume that
\begin{equation}\label{4.5}
(-1)^n=\sign \big((u^4-2)\cos\b u^{-1}+u^4(2u^4-1)\cosh\b u\big).
\end{equation}

\begin{lemma}\label{lm4.1}
Equations (\ref{4.3}) are equivalent to (\ref{4.4}), (\ref{4.10}), (\ref{4.5}).
\end{lemma}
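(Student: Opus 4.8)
The plan is to show the equivalence in two directions by carefully tracking which algebraic manipulations are reversible. Starting from a solution $u$ of the system (\ref{4.3}), the derivation already presented in the text gives (\ref{4.4}) and (\ref{4.10}): the first equation of (\ref{4.4}) is obtained by summing the squares of the two equations in (\ref{4.3}) and dividing by $1-u^4$ (which is legitimate since $0<u<1$), and (\ref{4.10}) is the explicit solution with respect to $\ell$ of the equation obtained by dividing the second equation of (\ref{4.3}) by the first. The integer $n$ appearing in (\ref{4.10}) is then forced by the branch of $\arctan$; since $\tan$ has period $\pi$, one recovers the value of $2\b\ell\sqrt{u^{-2}-u^2}$ modulo $\pi$ from $\arctan$, and the parity of $n$ is pinned down by the sign of $\cos\big(2\b\ell\sqrt{u^{-2}-u^2}\big)$, which by the first equation of (\ref{4.3}) equals the sign of $(u^4-2)\cos\b u^{-1}+u^4(2u^4-1)\cosh\b u$ divided by the positive quantity $u^4(\cosh\b u-\cos\b u^{-1})$ — note $\cosh\b u-\cos\b u^{-1}>0$ for $0<u<1$ and $\b>0$. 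This yields (\ref{4.5}).

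For the converse, I would suppose $(u,\b,\ell)$ satisfies (\ref{4.4}), (\ref{4.10}), (\ref{4.5}) and reconstruct (\ref{4.3}). Set $\phi:=2\b\ell\sqrt{u^{-2}-u^2}$; then (\ref{4.10}) says $\tan\phi$ equals the prescribed ratio and (\ref{4.5}) fixes $\sign\cos\phi$, so $\cos\phi$ and $\sin\phi$ are determined as
\[
\cos\phi=\frac{(-1)^n\,\big((2-u^4)\cos\b u^{-1}+u^4(1-2u^4)\cosh\b u\big)}{\sqrt{D}},\qquad
\sin\phi=\frac{(-1)^n\cdot 2\sqrt{1-u^4}\,(u^6\sinh\b u-\sin\b u^{-1})}{\sqrt{D}},
\]
where $D$ is the sum of squares of the two numerators. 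The key computational point is that the first equation of (\ref{4.4}) is exactly the identity that makes $u^4\sqrt{D}=u^4(\cosh\b u-\cos\b u^{-1})\cdot\big(\text{something}\big)$ collapse so that multiplying $\cos\phi$ and $\sin\phi$ by $u^4(\cosh\b u-\cos\b u^{-1})$ returns precisely the left-hand sides of the two equations in (\ref{4.3}). Concretely, one checks that
\[
D=\big(\cosh\b u-\cos\b u^{-1}\big)^2\Big(\big((u^4-2)\cos\b u^{-1}+u^4(2u^4-1)\cosh\b u\big)^2+4(1-u^4)(u^6\sinh\b u-\sin\b u^{-1})^2\Big)/\big(u^8(\cdots)\big),
\]
and the first equation of (\ref{4.4}) is the relation that forces the bracketed factor to reduce appropriately; with the sign from (\ref{4.5}) one then reads off (\ref{4.3}) directly.

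The main obstacle is purely bookkeeping: verifying that the first equation of (\ref{4.4}) is the precise algebraic condition under which the squared-and-divided system recovers the original one without introducing spurious solutions — equivalently, that no information is lost when one divides the two equations of (\ref{4.3}) and when one divides the sum of their squares by $1-u^4$. I would handle this by noting that the only potential loss occurs when the denominator $(2-u^4)\cos\b u^{-1}+u^4(1-2u^4)\cosh\b u$ vanishes (handled as a limiting/separate case where $\arctan$ is replaced by $\pi/2$) and when $\cosh\b u-\cos\b u^{-1}=0$, which cannot happen for $0<u<1$. Granting these, the equivalence follows from the displayed reconstruction of $\cos\phi,\sin\phi$ together with the Pythagorean identity $\cos^2\phi+\sin^2\phi=1$, which is exactly the content of the first equation in (\ref{4.4}) after clearing denominators. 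I expect the proof in the paper to be short, essentially the remark that (\ref{4.4})–(\ref{4.5}) encode $\cos\phi$ and $\sin\phi$ consistently and that substituting back gives (\ref{4.3}).
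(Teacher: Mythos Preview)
Your approach is essentially the paper's: writing (\ref{4.3}) in the abstract form $A_1=B\cos\phi$, $A_2=-B\sin\phi$ with $B=u^4(\cosh\b u-\cos\b u^{-1})>0$, so that (\ref{4.4}) is $A_1^2+A_2^2=B^2$ and (\ref{4.10}) together with the parity condition (\ref{4.5}) pin down $\phi$ modulo $2\pi$, after which one reads off $B\cos\phi=(-1)^n|A_1|=A_1$ and similarly for $\sin\phi$. Two minor slips to fix: your displayed numerator for $\cos\phi$ is $-A_1$ rather than $A_1$ (so with (\ref{4.5}) it would give $-|A_1|$, not $A_1$), and your expression for $D$ is garbled---in fact $D=A_1^2+A_2^2$, and (\ref{4.4}) is precisely the statement $D=B^2$, which is all you need.
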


\begin{proof}
We rewrite shortly equations (\ref{4.3}) as
$A_1=B\cos\a$, $ A_2=-B\sin\a$,
where $A_1$, $A_2$ are the left hand sides in (\ref{4.3}), $\a=2\b\ell\sqrt{u^{-2}-u^2}$ and $B=u^4(\cosh\b u-\cos\b u^{-1})$. Then equations (\ref{4.4}), (\ref{4.10}) become
$A_1^2+A_2^2=B^2$, $\a=-\arctan\frac{A_2}{A_1}+\pi n$.
We have:
\begin{equation*}
B\cos \a=(-1)^n\cos\arctan\frac{A_2}{A_1}=\frac{(-1)^n}{\sqrt{1+\frac{A_2^2}{A_1^2}}} = \frac{(-1)^n|A_1|}{\sqrt{A_1^2+A_2^2}}
\end{equation*}
and we get the first equation $A_1=B\cos\a$  provided condition (\ref{4.5}) is satisfied. In the same way we check that the latter condition also ensures the second equation $A_2=-B\sin\a$.
\end{proof}

Equation (\ref{4.4}) is transcendental, and we can not solve it analytically.
Nevertheless, for each $u\in(0,1)$, this is an equation only for a single variable $\b$, not for two as equations (\ref{4.3}). So, we propose the following algorithm of recovering the aforementioned solvability curves: choose $u\in(0,1)$, then solve equation (\ref{4.4}) and recover the sequence of  distances  $\ell$ by formula (\ref{4.10}) with different integer $n$. Then the gain-and-loss amplitude $\gamma$ and the corresponding wavenumber $k$ can be readily recovered from $\beta$ and  $u$.  In a similar way, one can first fix some value of $\beta$ (i.e., fix the gain-and-loss strength) and then   solve equation (\ref{4.4}) with respect to $u$ and   recover   $\ell$ by (\ref{4.10}). Equation (\ref{4.4})  is well-behaved, and for each $\beta$  all its zeros  $u$ can be easily found numerically.

Alternatively, as explained below  in Section~\ref{sec:general}, the values  corresponding to  spectral singularities can be found systematically by means of the numerical continuation from the limit $\ell=0$. However, in this case   equation (\ref{4.4}) is still useful because it allows one to check that all spectral singularities have been found for the given value of the gain-and-loss $\gamma$.

\subsection{
Absence of   spectral singularities}
\label{sec:gap}

For $u=0$ and $u=1$, the left-hand-side of equation (\ref{4.4}) is equal respectively to $1$ and $-2\sinh\beta\sin\beta$. Then a sufficient condition for the existence of a spectral singularity at the given gain-and-loss amplitude $\gamma$ is $\sin\beta=\sin\sqrt{2\gamma}>0$. At the same time, it is also possible to establish sufficient conditions that forbid the existence of   spectral singularities in a certain interval of parameters. In this subsection we prove the existence of two ``forbidden gaps'' for the roots of equation (\ref{4.4}). The first one exists for all $\b\geqslant 0$ and it states that there is no roots in certain interval. The second gap is a certain interval of values of $\b$, for which equation (\ref{4.4}) has no zeroes at all.

For the convenience, by $g(u,\b)$ we denote the left hand side of equation (\ref{4.4}). The first ``forbidden gap'' is described in the following lemma.

\begin{lemma}\label{lm5.1}
For all $\b\geqslant 0$, equation (\ref{4.4}) has no roots in the interval $\big[0,(1+\tfrac{\b}{4})^{-1}\big)$.
\end{lemma}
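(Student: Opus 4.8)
The plan is to show that on the small interval $u\in[0,(1+\tfrac{\b}{4})^{-1})$ the function $g(u,\b)$ stays close to its value $g(0,\b)=1$, in particular it stays strictly positive, hence cannot vanish. Recall
\begin{equation*}
g(u,\b)=u^4(1-u^4)\cosh\b u\cos\b u^{-1}-2u^6\sinh\b u\sin\b u^{-1}+1-u^{12}.
\end{equation*}
The key elementary observation is that the only genuinely oscillatory ingredients, $\cos\b u^{-1}$ and $\sin\b u^{-1}$, are bounded by $1$ in absolute value, so it suffices to bound the remaining factors. Writing $g(u,\b)=1-u^{12}-R(u,\b)$ with $|R(u,\b)|\leqslant u^4(1-u^4)\cosh\b u+2u^6\sinh\b u$, the lemma will follow once I show $u^{12}+u^4(1-u^4)\cosh\b u+2u^6\sinh\b u<1$ on the stated interval. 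Since $u<1$ we have $u^{12}<u^4$, and $u^6\sinh\b u\leqslant u^4\sinh\b u$ (again $u<1$), so it is enough to prove $u^4\bigl(1+(1-u^4)\cosh\b u+2\sinh\b u\bigr)<1$, and in fact the cruder bound $u^4\bigl(1+\cosh\b u+2\sinh\b u\bigr)<1$, i.e. $u^4(1+e^{\b u})<1$ after using $\cosh t+2\sinh t\le \cosh t+\sinh t+\sinh t\le e^{t}+t\cosh t$... — more simply, $1+\cosh\b u+2\sinh\b u\le 2e^{\b u}$ for $\b u\ge 0$ (since $\cosh t+2\sinh t = e^t+\sinh t\le 2e^t$ and $1\le 2e^t$ is false, so instead split: $1+\cosh\b u+2\sinh\b u\le 3e^{\b u}$ using $1\le e^{\b u}$, $\cosh\b u\le e^{\b u}$, $2\sinh\b u\le 2e^{\b u}$ gives $4e^{\b u}$; either constant works). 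Thus it suffices to show $u^4 e^{\b u}<\tfrac14$ on the interval.

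The heart of the matter is therefore the scalar inequality $u^4 e^{\b u}<\tfrac14$ for $0\le u<(1+\tfrac{\b}{4})^{-1}$. On this range $u<1$ and $u\b<\b/(1+\tfrac{\b}{4})=4\b/(4+\b)<4$, so $e^{\b u}<e^4$; but $e^4\approx 54.6$ is far too large for the naive estimate $u^4 e^4<\tfrac14$, which fails. Hence the argument must be sharper: I will instead bound $u^4 e^{\b u}$ by exploiting that $u$ itself is forced small when $\b$ is large. Concretely, from $u<(1+\tfrac{\b}{4})^{-1}$ one gets $\b u<\b(1+\tfrac{\b}{4})^{-1}$ and $u^4<(1+\tfrac{\b}{4})^{-4}$, so
\begin{equation*}
u^4 e^{\b u}<\frac{e^{\b/(1+\b/4)}}{(1+\b/4)^4}.
\end{equation*}
Setting $s:=1+\tfrac{\b}{4}\ge 1$, the right-hand side is $s^{-4}e^{4(s-1)/s}=s^{-4}e^{4}e^{-4/s}\le s^{-4}e^{4}$, which again is not $<\tfrac14$. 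So the brute constant route does not close, and the real plan is to track the $\cos\b u^{-1}$ factor rather than discarding it, or to use a tighter hierarchy of estimates tuned to the constant $1+\tfrac{\b}{4}$ — the authors evidently chose $(1+\tfrac{\b}{4})^{-1}$ precisely so that some specific one-line bound works. My plan: prove the refined two-sided estimate $\cos\b u^{-1}\ge$ (something) is \emph{not} available (the argument $\b/u$ is large), so instead keep $|{\cos\b u^{-1}}|\le1$, $|{\sin\b u^{-1}}|\le1$ but replace the crude $u<1$ steps by retaining the factor $(1-u^4)$ and using $\cosh t\le 1+\tfrac{t^2}{2}e^{t}$ type bounds, reducing everything to a polynomial-times-exponential inequality in $\b u$ and $u$ that can be checked by a single convexity/monotonicity argument on $[0,(1+\tfrac\b4)^{-1}]$.

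\textbf{Main obstacle.} The hard part is getting the constants to actually close: the "obvious" estimates (bounding oscillatory terms by $1$, then $u<1$ everywhere) give a bound of the form $C\,u^4 e^{\b u}$ which, on the interval $u<(1+\tfrac{\b}{4})^{-1}$, is \emph{not} below $1$ for a good range of $\b$ (roughly $\b$ of moderate size, where $u$ can be close to $1$ while $e^{\b u}$ is already sizeable). So one must either (i) exploit the sign/size of $\cos\b u^{-1}$ for the relevant $u$, which is delicate because $\b/u$ ranges over a large interval, or (ii) use a sharper decomposition, e.g. bound $g(u,\b)\ge 1-u^{12}-u^4(1-u^4)\cosh\b u-2u^6\sinh\b u$ and then analyze the function $h(u):=u^4(1-u^4)\cosh\b u+2u^6\sinh\b u+u^{12}$ directly, showing $h(u)<1$ by checking $h$ is increasing on the interval and evaluating at the right endpoint $u=(1+\tfrac{\b}{4})^{-1}$, where the specific algebra of $1+\tfrac{\b}{4}$ makes the inequality tight but true. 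I expect the clean proof to follow route (ii): reduce to a single-variable inequality $h\bigl((1+\tfrac{\b}{4})^{-1},\b\bigr)<1$ in $\b\ge 0$, then verify it by elementary calculus (one derivative test, behaviour at $\b=0$ and $\b\to\infty$).
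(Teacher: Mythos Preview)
Your proposal is not a proof; it is an exploration that repeatedly tries the naive triangle-inequality bound $|a\cos\theta|+|b\sin\theta|\le |a|+|b|$, observes that the constants don't close, and ends with a hopeful plan (route (ii)) whose two ingredients --- monotonicity of $h$ in $u$ and the endpoint inequality $h\bigl((1+\tfrac\b4)^{-1},\b\bigr)<1$ --- you do not verify. The endpoint inequality is in fact extremely delicate: as $\b\to0$ one has $h\bigl((1+\tfrac\b4)^{-1},\b\bigr)=1-\tfrac{3}{8}\b^2+O(\b^3)$, so the margin vanishes to second order and a ``one derivative test'' will not suffice.

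The missing idea is to replace the triangle inequality by the sharper harmonic-addition (Cauchy--Schwarz) bound
\[
u^4(1-u^4)\cosh\b u\,\cos\b u^{-1}-2u^6\sinh\b u\,\sin\b u^{-1}\ \le\ \sqrt{u^8(1-u^4)^2\cosh^2\b u+4u^{12}\sinh^2\b u}.
\]
This is exactly what the paper does, and it is the step that makes everything collapse: after squaring and using $\sinh^2=\cosh^2-1$, both sides become perfect squares and the sufficient condition $\sqrt{\cdots}<1-u^{12}$ reduces algebraically to the clean inequality $u^4(1+u^4)\cosh\b u<1+u^{12}$, i.e.\ $\cosh\b u-1<(u^{-2}-u^2)^2$, i.e.\ $\sqrt{2}\,\sinh\tfrac{\b u}{2}<u^{-2}-u^2$. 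Since $\sqrt{2}\,\sinh(2\ln u^{-1})=(u^{-2}-u^2)/\sqrt{2}\le u^{-2}-u^2$ and $\sinh$ is increasing, it suffices that $\tfrac{\b u}{2}<2\ln u^{-1}$, i.e.\ $\tfrac{\b}{4}<u^{-1}\ln u^{-1}$; the elementary bound $v\ln v>v-1$ for $v>1$ then shows this holds whenever $u^{-1}>1+\tfrac{\b}{4}$. The whole point is that your $|a|+|b|$ bound is strictly weaker than $\sqrt{a^2+b^2}$, and it is precisely that slack which prevents your constants from closing; the paper's choice of $(1+\tfrac\b4)^{-1}$ is tuned to the $\sqrt{a^2+b^2}$ route, not to yours.
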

\begin{proof}
Employing a standard inequality $a\cos\a+b\sin\a\leqslant \sqrt{a^2+b^2}$, we estimate the first two term in equation (\ref{4.4}) as
\begin{equation*}
u^4(1-u^4) \cosh\b u\cos\b u^{-1}-2u^6 \sinh\b u\sin\b u^{-1} \leqslant \sqrt{u^8(1-u^4)^2\cosh^2 \b u+4u^{12}\sinh^2\b u}.
\end{equation*}
Hence, equation (\ref{4.4}) surely has no roots for values of $u$ satisfying
\begin{equation*}
 \sqrt{u^8(1-u^4)^2\cosh^2 \b u+4u^{12}\sinh^2\b u} <1-u^{12}.
\end{equation*}
Expressing $\cosh^2 \b u$ via $\sinh^2 \b u$ and simplifying this inequality, we obtain $ u^4(1+u^4)\cosh \b u <1+u^{12}$ and hence,
\begin{equation}\label{3.37}
\cosh \b u-1 <\frac{1-2u^4+u^8}{u^4},\qquad \sqrt{2}\sinh \b u
<u^{-2}-u^2.
\end{equation}
Thanks to the estimate $\sqrt{2}\sinh 2\ln u^{-1}\leqslant u^{-2}-u^2$, $u\in(0,1]$, the last inequality in (\ref{3.37}) holds once $\frac{\b}{4}<u^{-1}\ln u^{-1}$. It is easy to confirm that this inequality is true once $u^{-1}>1+\tfrac{\b}{4}$.
The proof is complete.
\end{proof}

The next lemma is   auxiliary  and will be employed in studying the second forbidden zone.

\begin{lemma}\label{lm5.2}
The function $g(u,\pi)$ is positive on $[0,1)$.
\end{lemma}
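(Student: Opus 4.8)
The plan is to split $[0,1)$ into three ranges. On $[0,(1+\tfrac\pi4)^{-1})$ nothing beyond Lemma~\ref{lm5.1} is needed: that lemma, taken with $\b=\pi$, shows $g(\cdot,\pi)$ has no zero there, and since $g(\cdot,\pi)$ is continuous with $g(0,\pi)=1>0$, it is positive on the whole range. As $\pi<4$ we have $(1+\tfrac\pi4)^{-1}>\tfrac12$, so it remains to treat $[\tfrac12,\tfrac23]$ and $[\tfrac23,1)$.

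On $[\tfrac12,\tfrac23]$ I would argue by a sign count. For such $u$, $\pi u^{-1}\in[\tfrac{3\pi}2,2\pi]$, hence $\cos\pi u^{-1}\geqslant0$ and $\sin\pi u^{-1}\leqslant0$; then $u^4(1-u^4)\cosh\pi u\,\cos\pi u^{-1}\geqslant0$ and $-2u^6\sinh\pi u\,\sin\pi u^{-1}\geqslant0$, while $1-u^{12}>0$, so $g(u,\pi)>0$ there.

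The delicate range is $[\tfrac23,1)$, where $g(u,\pi)\to g(1,\pi)=0$ as $u\to1$, so a crude triangle-type estimate like the one behind Lemma~\ref{lm5.1} cannot work — two genuinely large terms nearly cancel. I would set $\phi:=\pi u^{-1}-\pi=\pi(1-u)/u\in(0,\tfrac\pi2]$, so $\cos\pi u^{-1}=-\cos\phi$, $\sin\pi u^{-1}=-\sin\phi$, and
\begin{equation*}
g(u,\pi)=-u^4(1-u^4)\cosh\pi u\,\cos\phi+2u^6\sinh\pi u\,\sin\phi+1-u^{12}.
\end{equation*}
Estimating $\cos\phi\leqslant1$ in the (nonpositive) first term and using $\sin\phi\geqslant\tfrac2\pi\phi=\tfrac{2(1-u)}u$ in the second, then factoring out $1-u$ via $1-u^{12}=(1-u)(1+u+u^2+u^3)(1+u^4+u^8)$ and $1-u^4=(1-u)(1+u+u^2+u^3)$, one obtains $g(u,\pi)\geqslant(1-u)h(u)$ with
\begin{equation*}
h(u):=(1+u+u^2+u^3)(1+u^4+u^8)-u^4(1+u+u^2+u^3)\cosh\pi u+4u^5\sinh\pi u.
\end{equation*}
It then suffices to show $h>0$ on $[\tfrac23,1]$. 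Expressing $\cosh,\sinh$ through $e^{\pm\pi u}$ and using $4u-(1+u+u^2+u^3)=(1-u)(u^2+2u-1)$, the two hyperbolic terms of $h$ combine into $\tfrac{u^4e^{\pi u}}2(1-u)(u^2+2u-1)-\tfrac{u^4e^{-\pi u}}2(1+5u+u^2+u^3)$; on $[\tfrac23,1]$ the first summand is $\geqslant0$ (there $u^2+2u-1>0$), the second is $\geqslant-\tfrac12e^{-2\pi/3}\cdot8=-4e^{-2\pi/3}$ with $4e^{-2\pi/3}<4e^{-2}<\tfrac47$, and the polynomial part is $\geqslant1$. Hence $h(u)\geqslant1-4e^{-2\pi/3}>1-\tfrac47=\tfrac37>0$, so $g(u,\pi)>0$ on $[\tfrac23,1)$, which finishes the proof.

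The main obstacle is precisely this last range near $u=1$: the triangle-type bound underlying Lemma~\ref{lm5.1} discards the sign of the $\sin$-term, which is exactly what keeps $g$ positive there. The fix is to pass to the variable $\phi$, use the elementary inequality $\sin\phi\geqslant\tfrac2\pi\phi$ so that the positive $\sin$-contribution supplies the $(1-u)$ factor, and then recombine $\cosh\pi u$ and $\sinh\pi u$ into $e^{\pm\pi u}$, which reorganizes the two large, almost-cancelling pieces into a manifestly nonnegative term plus an exponentially small remainder.
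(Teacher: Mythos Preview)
Your proof is correct and follows essentially the same three-range strategy as the paper: Lemma~\ref{lm5.1} for small $u$, a sign argument on the middle range, and factoring out $(1-u)$ on $[\tfrac23,1)$.

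The one place where you diverge is the execution of the last step. The paper uses the bound $-\frac{\sin\pi u^{-1}}{1-u}\geqslant\pi u$ (equivalently $\sin\phi\geqslant u^2\phi$), obtains $\frac{g(u,\pi)}{u^4(1-u)}\geqslant g_1(u)+g_2(u)$ with $g_1(u)=-4\cosh\pi u+2\pi u^3\sinh\pi u$ and $g_2(u)=\frac{1-u^{12}}{u^4(1-u)}$, proves $g_1$ is monotone increasing so $g_1\geqslant g_1(\tfrac23)>-9.05$, and estimates $g_2>9.31$ directly. You instead use Jordan's inequality $\sin\phi\geqslant\tfrac{2}{\pi}\phi$ and then recombine the hyperbolic terms into $e^{\pm\pi u}$, which neatly isolates a manifestly nonnegative $e^{\pi u}$-piece (via the factorization $4u-(1+u+u^2+u^3)=(1-u)(u^2+2u-1)$) and an exponentially small $e^{-\pi u}$-remainder. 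Your route avoids the monotonicity computation and the explicit numerical evaluation of $g_1(\tfrac23)$, and the final inequality $h(u)\geqslant 1-4e^{-2\pi/3}>0$ is transparent; the paper's route stays closer to the original $\cosh/\sinh$ variables and makes the margin $9.31-9.05$ visible. Both are equally valid.
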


\begin{proof}
We have $g(0,\pi)=1$ and by Lemma~\ref{lm5.1}, it is positive for $u<(1+\tfrac{\pi}{4})^{-1}$. This is why in what follows we consider only the values $u\geqslant (1+\tfrac{\pi}{4})^{-1}$. For such values of $u$ we have
$\pi\leqslant \pi u^{-1}\leqslant \pi (1+\tfrac{\pi}{4})<1.79\pi$.
As
$\tfrac{3\pi}{2} \leqslant \pi u^{-1}\leqslant \pi (1+\tfrac{\pi}{4})$,
the function $\sin\b u^{-1}$ is negative, while $\cos \b u^{-1}$ is positive. Hence, for such values of $u$, the function $g(u,\pi)$ is positive. It remains to consider the values $\frac{2}{3}<u\leqslant 1$.

For such values of $u$ we first observe the following simple estimates:
\begin{equation*}
\frac{1-u^4}{1-u}\cos\pi u^{-1}\geqslant -4,\qquad -\frac{\sin \pi u^{-1}}{1-u}=\frac{\sin \pi (u^{-1}-1)}{1-u}\geqslant \pi u.
\end{equation*}
Then the function $g$ satisfies the estimate:
\begin{equation}\label{3.12}
\frac{g(u,\pi)}{u^4(1-u)}\geqslant g_1(u)+g_2(u),\qquad g_1(u):=-4\cosh\pi u +2\pi u^3 \sinh \pi u,\qquad g_2(u):=\frac{1-u^{12}}{u^4(1-u)}.
\end{equation}
The function $g_1(u)$ is monotonically increasing in $u\in[\tfrac{2}{3},1]$ since
\begin{equation*}
g_1'(u)=2\pi^2 u^3\cosh\pi u+(6\pi u^2-4\pi)\sinh\pi u>2\pi(\pi u^3+6u^2-2)\sinh\pi u>2\pi\sinh\pi u>0.
\end{equation*}
Hence,
$g_1(u)\geqslant g_1\left(\frac{2}{3}\right)>-9.05$.
For the function $g_2$ we have the following representation and estimate:
\begin{equation*}
g_2(u)=1+\sum\limits_{j=1}^{4}(u^j+u^{-j})+\sum\limits_{j=5}^{8} u^j\geqslant 9 +\sum\limits_{j=5}^{8}\left(\frac{2}{3}\right)^j>9.31.
\end{equation*}
Two last estimates and (\ref{3.12}) imply the positivity of the function $g$ for $u\in[\tfrac{2}{3},1)$.
\end{proof}

Denote
\begin{align*}
g_*(u,\b):=&\b u^3(1-3u^4)\cosh\b u \sin \b u^{-1}+\b u^5 (3-u^4)\sinh\b u \cos \b u^{-1}
\\
&-2 u^4(1+u^4) \cosh\b u \cos \b u^{-1} -6(1+u^{12}).
\end{align*}

The next lemma states the existence of the second forbidden zone.

\begin{lemma}\label{lm5.3}
Equation (\ref{4.4}) has no roots as $\pi<\b<\b_*<5$, where $(u_*, \b_*)$ is  the root of the system of the equations
\begin{equation}\label{3.13}
g(u,\b)=0,\qquad g_*(u,\b)=0,\qquad u\in[0,1],\qquad \pi<\b<5,
\end{equation}
with minimal possible $\b$.
Their approximate values are
\begin{equation}\label{3.14}
\b_*=4.808438,\qquad u_*=0.611772.
\end{equation}
\end{lemma}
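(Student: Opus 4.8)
The plan is to show that on the parameter strip $\pi<\beta<5$ the curve $g(u,\beta)=0$ never crosses the segment $u\in[0,1]$, until $\beta$ reaches the first value $\beta_*$ at which this curve becomes tangent to the line $u=1$ from inside (equivalently, at which $g$ first develops a double root in $u$). Concretely, I would argue by a continuity/monotonicity sweep in $\beta$: at $\beta=\pi$ Lemma~\ref{lm5.2} gives $g(u,\pi)>0$ for all $u\in[0,1)$, and $g(1,\pi)=-2\sinh\pi\sin\pi=0$; so $\beta=\pi$ is the edge of the forbidden zone. Increasing $\beta$ slightly past $\pi$, $g(1,\beta)=1-u^{12}|_{u=1}+\cdots=-2\sinh\beta\sin\beta$ becomes positive (since $\sin\beta<0$ for $\beta$ just above $\pi$), so the endpoint value is positive and the whole graph is positive. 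The only way a root can appear is for $g(u,\beta)$, as a function of $u\in[0,1]$, to touch zero at an interior extremum — i.e. at a point where simultaneously $g=0$ and $\partial_u g=0$. A direct computation shows $u^{-3}\,\partial_u g(u,\beta)$ (or an equivalent cleared form) is exactly $g_*(u,\beta)$ up to a positive factor, which explains the two equations in (\ref{3.13}); so $\beta_*$ is, by definition, the smallest $\beta\in(\pi,5)$ for which the system (\ref{3.13}) has a solution.

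The key steps, in order, would be: (i) record $g(0,\beta)=1>0$ and $g(1,\beta)=-2\sinh\beta\sin\beta$, the latter being $>0$ for $\beta\in(\pi,2\pi)$ and in particular on $(\pi,5)$, so there is no root at the endpoints $u=0,1$; (ii) invoke Lemma~\ref{lm5.1} to remove a neighborhood of $u=0$ (no roots for $u<(1+\beta/4)^{-1}$), reducing the analysis to a compact subinterval of $u$ bounded away from $0$; (iii) compute $\partial_u g$ and identify $g_*$ as the relevant factor, so that any root of $g$ on $(0,1)$ occurring ``for the first time'' as $\beta$ increases must be a point where $g=g_*=0$; (iv) verify numerically/rigorously that the first such coincidence on $(\pi,5)$ occurs at $(\beta_*,u_*)$ with the stated approximate values (\ref{3.14}), and that $\beta_*<5$; (v) conclude that for $\pi<\beta<\beta_*$ the function $u\mapsto g(u,\beta)$ stays strictly positive on $[0,1]$ — hence equation (\ref{4.4}) has no roots and there are no spectral singularities in that range of $\gamma=\beta^2/2$.

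The main obstacle I anticipate is step (iv): making the claim ``$(\beta_*,u_*)$ is the solution with minimal $\beta$'' genuinely rigorous rather than merely numerical. One needs to rule out that $g$ acquires an interior zero at some smaller $\beta\in(\pi,\beta_*)$ without $\partial_u g$ vanishing there — which is automatic once we know the endpoint values are strictly positive (a new zero of a continuous function on $[0,1]$ with positive boundary values must first appear at an interior minimum), but one still must ensure $g$ does not vanish \emph{identically} on a subinterval and must control the behavior near $u=1$ (where $\sinh\beta\sin\beta$ changes sign as $\beta\to 2\pi$, but that is outside $(\pi,5)$ since $2\pi>5$, so in fact $g(1,\beta)>0$ throughout the strip — this is the reason the strip is taken only up to $5$ rather than $2\pi$). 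A careful bound, in the spirit of the estimates in Lemma~\ref{lm5.2}, splitting $u\in[(1+\beta/4)^{-1},1]$ and using monotonicity of the hyperbolic terms together with sign information on $\cos\beta u^{-1}$, $\sin\beta u^{-1}$ over the relevant ranges of $\beta u^{-1}$, should let one confine the possible first coincidence to a small box around $(\beta_*,u_*)$ and then pin it down; establishing that box rigorously, with interval arithmetic on $g$ and $g_*$, is where the real work lies.
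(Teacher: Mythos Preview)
Your proposal is correct and follows essentially the paper's own argument: positivity of $g(\cdot,\pi)$ via Lemma~\ref{lm5.2}, strict positivity of the endpoint values $g(0,\beta)=1$ and $g(1,\beta)=-2\sinh\beta\sin\beta$ on $(\pi,5)\subset(\pi,2\pi)$, a continuity sweep in $\beta$ showing that the first interior zero of $u\mapsto g(u,\beta)$ must appear as a double root (an interior minimum touching the axis), and numerical location of $(u_*,\beta_*)$; the paper additionally checks $g(0.65,5)<0$ to confirm that a touching value $\beta_*<5$ actually exists. One correction worth making: $g_*$ is \emph{not} a positive multiple of $u^{-3}\partial_u g$ --- a direct computation gives $g_*=u\,\partial_u g-6g$ (the paper writes $\partial_u g-6g$, itself a small slip), so on $\{g=0\}$ and for $u\in(0,1]$ the condition $g_*=0$ is equivalent to $\partial_u g=0$, which is exactly what your step~(iii) needs; also, your opening phrase ``tangent to the line $u=1$'' is a misstatement (the tangency is of the graph of $g(\cdot,\beta)$ to the $u$-axis at the interior point $u_*\approx 0.612$), though your parenthetical ``double root in $u$'' immediately corrects it.
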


\begin{proof}
The function $g(u,\pi)$ is positive on $[0,1)$ and $g(1,\pi)=0$, see Figure~\ref{fig:forbidden}a. As $\pi<\b<2\pi$, we have $g(0,\b)=1>0$ and $g(1,\b)=-2\sin\b\sinh\b>0$. Hence, for $\b$ close enough to $\pi$, the function $g(u,\b)$ is positive for all $u\in[0,1]$. At the same time, we have $g(0.65,5)<-0.617<0$ and therefore, for $\b=5$, equation (\ref{4.4}) possesses at least two roots, one in $(0,0.65)$ and another in $(0.65,1)$. cf. Figure~\ref{fig:forbidden}c. We also observe that the function $g$ is jointly continuous in $(u,\b)$. The above facts means that as $\b$ grows from $\pi$ to $5$, at some value $\b=\b_*$, the graph of the function $g$ is still located in the upper half-plane but touches the $u$-axis at some point $u=u_*$,
see Figure~\ref{fig:forbidden}b. The function $g(u,\b)$ is positive as $\pi<\b<\b_*$ and $u\in[0,1]$. Then the point $u=u_*$ is obviously the global minimum of $g$ and hence, $(u_*,\b_*)$ is a solution to the system of equations $g(u,\b)=0$, $\frac{\p g}{\p u}(u,\b)=0$. It is easy to check that $g_*=\frac{\p g}{\p u}-6 g$ and hence, $(u_*,\b_*)$ solves system (\ref{3.13}). These roots can be found numerically and this gives (\ref{3.14}). The proof is complete.
\end{proof}

\begin{figure}
	\centering
\includegraphics[width=0.99\columnwidth]{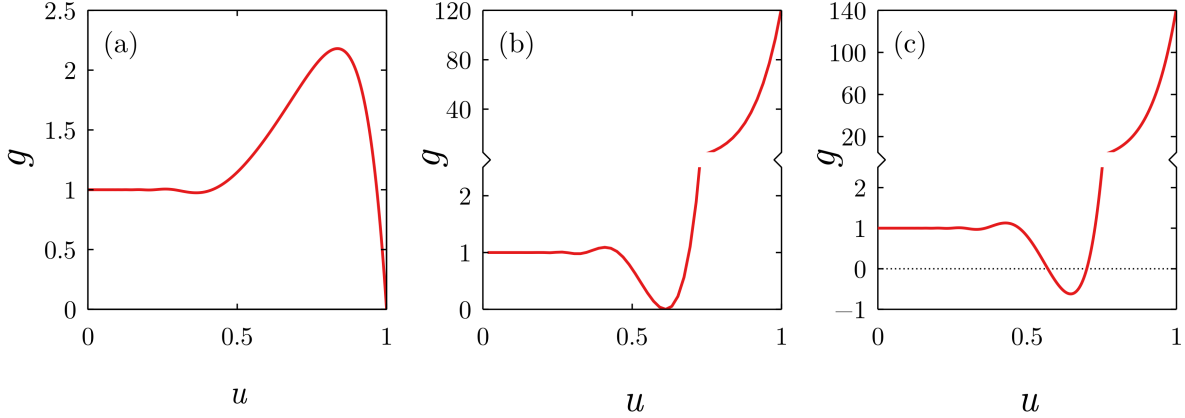}
\caption{Illustration for proof of Lemma~\ref{lm5.3}. Graphs of the function $g(u, \beta)$ for $\b=\pi$ (a), $\b=\b_*\approx  4.808438$ (b), and $\b=5$ (c). Notice broken vertical axes in (b) and (c). }
\label{fig:forbidden}
\end{figure}

Returning from the auxiliary variable $\beta$ to the gain-and-loss amplitude $\gamma=\beta^2/2$, from   Lemma~\ref{lm5.3}  we deduce  the following important result:
\begin{equation}
\label{eq:gap}
\textrm{there is no spectral singularities for\quad }  \frac{\pi^2}{2} < \gamma< \gamma_*\approx  11.561.
\end{equation}

\subsection{Creating a  spectral singularity  at a given wavenumber}

\begin{table}
	\centering
	\begin{tabular}{c|cccc}
		$n$ & 0 & 1& 2&\\\hline
		$\gamma_\star^{(n)}$ & 2.071&13.307&27.783 &\\[2mm]
		$k_\star^{(n)}$ & 1.065&4.318 &7.529 &
	\end{tabular}
	\caption{Approximate values of gain-and-loss amplitudes $\gamma=\gamma_\star^{(n)}$  and wavenumbers   $k=k_\star^{(n)}$, $n=0, 1, \ldots$,  corresponding to spectral singularities with lowest $\gamma$ in the limit $\ell=0$, see Table~I in  \cite{Mostafazadeh2009}. \label{tbl:1}}
\end{table}

For $\ell=0$ spectral singularity can only be obtained for some isolated values of the wavenumber $k$ and the gain-and-loss amplitude $\gamma$ \cite{Mostafazadeh2009}. Several lowest values of $\gamma$ corresponding to the spectral singularities  and the associated wavenumbers  $k$ are listed in Table~\ref{tbl:1}.  An  important advantage of the more general system with nonzero gain-to-loss disctance    $\ell>0$ consists  in the possibility to create a spectral   singularity   at any wavenumber $k$ given beforehand. Indeed, let us  return back to equations (\ref{4.4}), (\ref{4.10}) and    discuss the following   issue: given a point $k$ on the real axis, how to choose $\b$ and $\ell$ to have a  resonance at this point? Equations (\ref{4.4})--(\ref{4.10}) allow us to answer  easily this question.

We fix $k>0$ and we find the associated value of $u$ by resolving (\ref{4.1}):
\begin{equation}\label{4.12}
u^{-2}-u^2=4k^2\b^{-2},\qquad
u=\b R^{-1}, \quad \mbox{where\ } R = \sqrt{2k^2+\sqrt{4k^4+\b^4}}.
\end{equation}
We divide equation (\ref{4.4}) by $u^6$ and substitute then the above formulae and
\begin{equation*}
\frac{1-u^{12}}{u^6}=\frac{1-u^4}{u^2}\frac{1+u^4+u^8}{u^4}=(u^{-2}-u^2)\big((u^{-2}-u^2)^2+3
\big).
\end{equation*}
This gives the equation:
\begin{equation}\label{4.11}
\begin{aligned}
2\b^4 k^2&\cosh(\b^2R^{-1})  \cos R
 - \b^6 \sinh(\b^2R^{-1})  \sin R
  +2k^2(16k^4+3\b^4)=0.
\end{aligned}
\end{equation}
An algorithm for creating a resonance at a prescribed point  $k$ is as follows. Given $k>0$, we first solve equation (\ref{4.11}) with respect to $\b$ and we also find $u$ by (\ref{4.12}). Then needed values of  $\ell$ are determined by (\ref{4.10}), (\ref{4.5}).

In order to illustrate this  algorithm,  we us consider a finite interval of wavenumbers $k\in (0, k_1]$, where  we set $k_1 = 10$ for the numerics reported on in what follows. We scan   the chosen interval  with a sufficiently small step ($\Delta k=0.01$) and for each value of $k$ solve equation (\ref{4.11}) numerically  using the simple dichotomy method. While for each $k$  equation (\ref{4.11}) might have several roots $\beta$, in our numerical procedure we always choose the  minimal positive root, i.e., the one which allows to achieve the spectral singularity with given $k$ at the smallest possible value  of the gain-and-loss amplitude $\gamma=\gamma_\textrm{min}$. Next, we choose the minimal positive distance $\ell_\textrm{min}$ which satisfies the conditions   (\ref{4.10}), (\ref{4.5}) and then we use the periodicity in $\ell$ to generate the sequence of larger gain-to-loss distances $\ell_n = \ell_\textrm{min} + n\pi/(2k)$, $n=1,2\ldots$ [see   (\ref{eq:periodic})]. The resulting dependencies $\gamma_\textrm{min}(k)$ and $\ell_\textrm{min}(k)$, $\ell_n(k)$   are shown in figure~\ref{fig:min}. The minimal gain-and-loss amplitude $\g_{min}(k)$  and the minimal gain-to-loss distance $\ell_{min}$ are   discontinuous,  which means that the small variation  in  the wavenumuber $k$  might require  a significant  change either in   $\gamma$ or in $\ell$.  It is especially important that the values of the  distance  $\ell$ are generically different from zero, which points out explicitly    that the new degree of freedom offered by the nonzero gain-to-loss distance    is   important for the achieving a spectral singularity  at   the given wavenumber $k$.

\begin{figure}
	\centering
	\includegraphics[width=0.8\columnwidth]{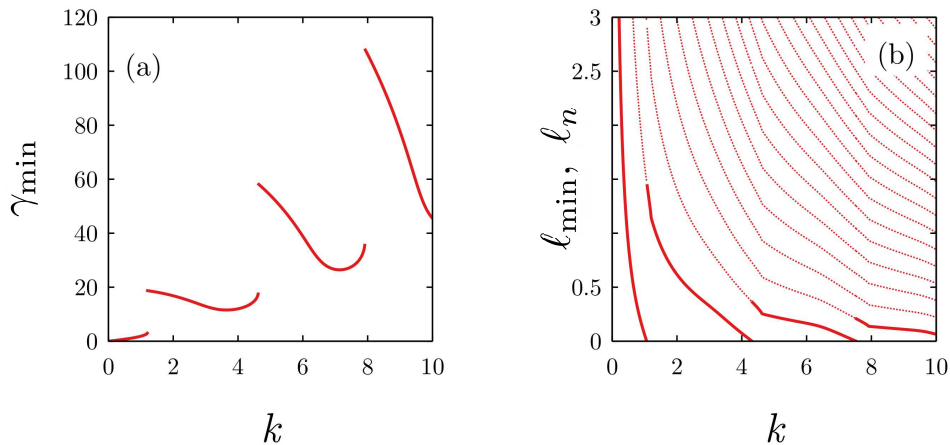}%
	\caption{(a) Minimal value of the gain-and-loss amplitude $\gamma_\textrm{min}$  which corresponds to   a spectral singularity with the given value of the wavenumber $k$. (b)  Minimal gain-to-loss distance  $\ell_\textrm{min}$  which corresponds to   a spectral singularity with the  $k$ and $\gamma$ from the  left panel (bold curves) and larger distances $\ell_n$ obtained using the periodicity in $\ell$ (thin \textcolor{black}{dotted} curves).}
	\label{fig:min}
\end{figure}

\subsection{$\PT$-symmetry breaking laser-antilaser threshold}

A particularly important characteristics of any $\PT$-symmetric structure is the $\PT$ symmetry breaking threshold, i.e., the amplitude of the gain-and-loss
corresponding to the  ``phase transition'' from the purely real to complex spectrum.  The best studied scenario of the phase transition  
is the collision of two real discrete  eigenvalues at an exceptional point with the subsequent splitting in a complex-conjugate pair. However, in systems with nonempty continuous spectrum,  the phase transition can also occur through   the splitting of a self-dual spectral singularity, which results in a bifurcation of a complex-conjugate pair from an interior point of the continuum \cite{Yang17,KZ17,Konotop2019}. At the moment corresponding to the formation of the spectral singularity, the system operates in the CPA-laser regime \cite{Longhi10}. Thus, in such a system, the $\PT$-symmetry breaking threshold at the same time corresponds to the CPA-laser threshold.

Lemma~\ref{lm3.3} guarantees that the spectrum of our system is real for sufficiently small gain-and-loss amplitudes $\gamma$. Additionally, according to Lemma~\ref{lm3.5}, the spectrum  does not have any real discrete eigenvalue. Hence, the $\PT$-symmetry breaking is expected to occur through the emergence of a self-dual spectral singularity. In order to identify the $\PT$-symmetry breaking threshold in our system, we start from the limit $\ell=0$, there the phase transition takes place at  $\gamma_\star^{(0)} \approx 2.072$, see \cite{Mostafazadeh2009,KZ17} and Table~\ref{tbl:1}. Thus, the spectrum with $\ell=0$ is purely real and continuous for $\gamma \in [0, \gamma_\star]$, while the  increase of the gain-and-loss just above $\gamma_\star^{(0)}$ leads to the bifurcation of a complex-conjugate pair from an interior point  of the continuum. The spectral singularity forming at  $\gamma_\star^{(0)}$ takes place at wavenumber $k=k_\star^{(0)} \approx  1.065$. Respectively, the complex-conjugate pair of eigenvalues bifurcates from $\lambda_0 =  [k_\star^{(0)}]^2$. (Notice that the further increase of $\gamma$ above the next threshold values listed in Table~\ref{tbl:1} leads to the formation of new spectral singularities and, respectively, to bifurcations of new complex-conjugate pairs in the spectrum.)

Next, we use the numerical continuation in $\ell$ in order to continue the known solution at $\ell=0$ to the domain $\ell>0$.  The obtained dependence of the threshold value  of the gain-and-loss amplitude on distance $\ell$ is shown in figure~\ref{fig:threshold}(a), where we observe that the phase transition threshold decreases monotonically with the growth of $\ell$. This means that introducing an additional space   between the gain and loss, one can decrease the $\PT$-symmetry breaking  threshold, i.e. achieve the laser-antilaser operation at \textcolor{black}{lower} gain-and-loss amplitudes than in a waveguide with the adjacent gain and loss.

\begin{figure}
	\centering
	\includegraphics[width=0.85\columnwidth]{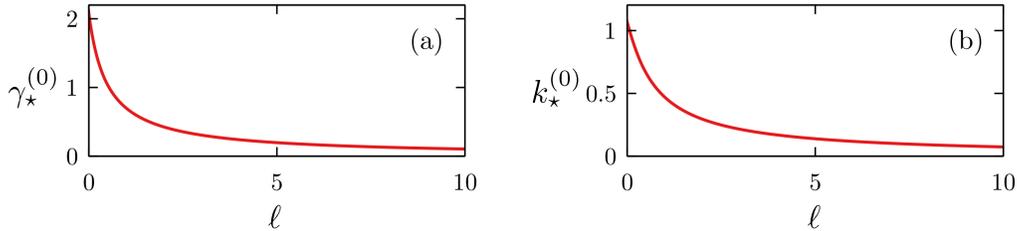}
	\caption{(a) $\PT$-symmetry breaking threshold $\gamma_\star^{(0)}$ \textit{vs} the distance between the gain and loss $\ell$. The spectrum is purely real and continuous for $\gamma\leq \gamma_\star$, but acquires a pair of complex conjugate eigenvalues  as the gain-and-loss amplitude exceeds the threshold $\gamma_\star^{(0)}$. (b) Values of the wavevector $k_\star^{(0)}$ corresponding to the dependence in (a).}
	\label{fig:threshold}
\end{figure}

\subsection{General picture of spectral singularities}
\label{sec:general}

\begin{figure}
	\centering
	\includegraphics[width=0.99\columnwidth]{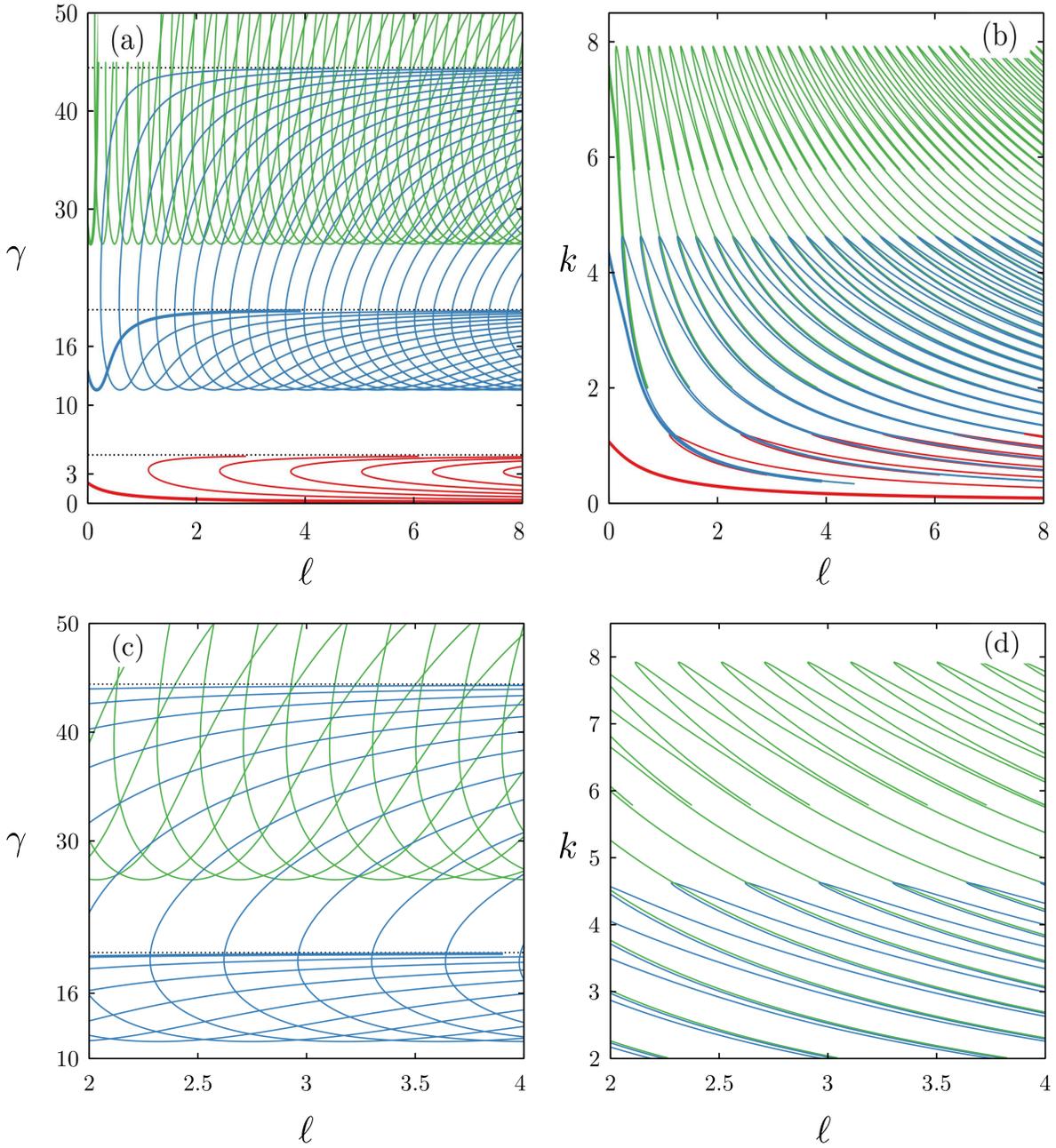}%
	\caption{(a) Values of the gain-and-loss amplitude $\gamma$ and distance  $\ell$, for which   spectral singularities occur. (b) Corresponding  wavenumbers   $k$. \textcolor{black}{Panel (c) magnifies   the region  $(\ell, \gamma)\in [2,4]\times[10,50]$ from   (a), and panel  (d) is the  magnification of corresponding curves from panel (b).}}
	\label{fig:general}
\end{figure}

Let us now turn to description of the general picture of spectral singularities. In order to construct systematically different solutions, we again start from the limit $\ell=0$, where the values of $\gamma$ and $k$ corresponding to spectral singularities are known, see Table~\ref{tbl:1}. Then we use the periodicity of function $e^{-4ik\ell}$ in $\ell$ in order to construct new branches of solutions having no counterparts in the limit $\ell=0$, cf. equation (\ref{eq:periodic}). This procedure results in a fairly complicated picture containing a multitude of spectral singularities, some part of which (corresponding to  relatively  small   values of the gain-and-loss) is shown in figure~\ref{fig:general}(a,b) as the curves on the plane $\gamma$ {\it vs} $\ell$ and $k$ {\it vs} $\ell$.

Let us describe the structure of the found solutions using the  diagram $\gamma$ {\it vs} $\ell$ in figure~\ref{fig:general}(a). The multitude of curves shown in this plot can be divided into three groups (plotted with red, blue and green curves) which have been obtained by means of the continuation from three different solutions in the limit $\ell=0$. There is a well-visible vertical gap between red and blue curves, which corresponds to the ``forbidden'' values of the gain-and-loss amplitudes $\gamma$ found above in (\ref{eq:gap}). At the same time,  there is no gap between blue and green curves, which  results in a multitude of intersections between these curves.
The first group of spectral singularities,  corresponding to red curves in Figure~\ref{fig:general}(a),   is obtained through  the continuation from the  spectral singularity  in the limit $\ell=0$ with the smallest gain-and-loss amplitude, i.e.  from values $\gamma_\star^{(0)}$ and $k_\star^{(0)}$ in Table~\ref{tbl:1}.  The leftmost (bold)  curve in this group which originates in the limit $\ell=0$  is the $\PT$-symmetry breaking threshold which was already shown in figure~\ref{fig:threshold}(a). For values of $\gamma$ above this curve, there is always one or more (but finitely many, see Section~\ref{sec:zeros})   complex-conjugate pairs of eigenvalues in the spectrum.  Several curves situated to the right from the bold red curve are obtained using the fact that if $\gamma$ and $k$ are solutions in the limit $\ell=0$, then the same values of $\gamma$ and $k$ also correspond to a spectral singularity with $\ell_n = (n\pi)/(2k)$, $n=1,2,\ldots$. Once a single solution with a new distance   $\ell_n$ is obtained, a new branch of solutions can be constructed using numerical continuation in $\gamma$ or in $\ell$. In the limit $\ell \to \infty$ all the red curves in figure~\ref{fig:general}(a) approach  the asymptotic value $\gamma=0$ or at $\gamma_*= \pi^2/2\approx 4.935$. Notice that, except for the bold line demarcating the $\PT$-symmetry breaking threshold, none of the red curves can be continued to the limit $\ell\to 0$.

The multitude of red curves  in figure~\ref{fig:general}(a) demonstrates explicitly how new spectral singularities emerge with the increase of $\ell$. Indeed, drawing an imaginary horizontal line, say, at $\gamma=3$ [see the vertical axis  tick  in figure~\ref{fig:general}(a)], we observe that with the increase of $\ell$ this line   intersects more and more red curves. Each intersection corresponds to the values of $\gamma$ and $\ell$ at which   some  root $k$   crosses the real axis and goes down from the upper to lower complex half-plane. Thus,  a finite-width resonance transforms  to a complex eigenvalue through the spectral singularity (we recall again that  in view of   $\PT$ symmetry  any root $k\ne 0$ crosses the real line simultaneously with its counterpart $-\bar{k}$, i.e.  the corresponding   spectral singularity is self-dual).
In order to illustrate this process, in Figure~\ref{fig:g=3}  we show the evolution of three numerically found complex zeros of function $F(k, \gamma, \ell)$  under   the increase of $\ell$.  In this figure the imaginary part of each complex zero changes sign from positive to negative and then asymptotically approaches zero (remaining negative). In the limit of large $\ell$,  this behavior agrees  with expansion (\ref{3.4}) of Lemma~\ref{lm3.4} , where, for the chosen value of $\gamma$,   we have $\sin\sqrt{2\gamma}>0$.   Thus the growing distance   $\ell$ results in a sequence of self-dual spectral singularities and in the increasing number of complex-conjugate eigenvalues in the spectrum.

 \begin{figure}
 	\centering
 	\includegraphics[width=0.8\columnwidth]{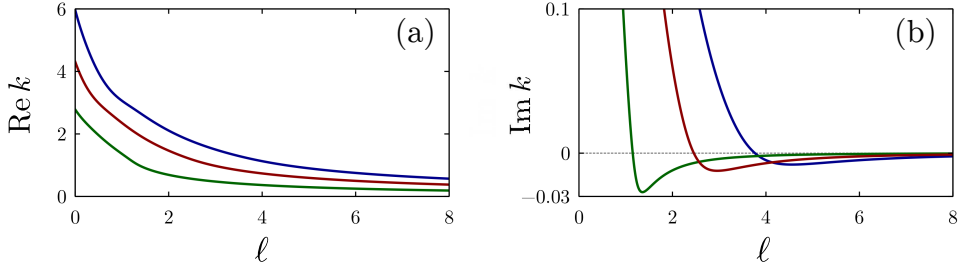}%
 	\caption{(a,b) Real and imaginary parts of three complex zeros of function $F$ for fixed $\gamma=3$ and increasing $\ell$. For each curve, the imaginary part is positive for small $\ell$ and becomes negative for all sufficiently large $\ell$. \textcolor{black}{For each three shown eigenvalues, its real and 	imaginary parts are of the same colour in both panels.}}
 	\label{fig:g=3}
 \end{figure}

The second group of spectral singularities [blue curves in figure~\ref{fig:general}(a)] was obtained by means of the continuation from the  next solution in the limit $\ell=0$, i.e.  from $\gamma_\star^{(1)}$ and $k_\star^{(1)}$ in Table~\ref{tbl:1}. Again, one of the curves [the leftmost bold curve in figure~\ref{fig:general}(a)] was obtained through the direct continuation from the limit $\ell=0$, while other blue curves were generated  using the periodicity of function $F(k, \gamma, \ell)$  in $\ell$ and cannot be continued to the limit $\ell\to0$. In the limit $\ell\to\infty$ the blue curves approach the horizontal asymptotes  $\gamma=2\pi^2\approx19.739$ and $\gamma=9\pi^2/2\approx 44.413$. In  the $(\gamma, \ell)$-plane  the  gain-and-loss amplitudes corresponding to the   group of blue curves are well separated from  those for the red curves: indeed, all red curves are bounded from above by the asymptote $\gamma= \pi^2/2\approx 4.935$, while all blue curves are bounded from below by $\gamma_*\approx 11.561$, see (\ref{eq:gap}). Thus, the  emergence of new spectral singularities with the increase of $\ell$ is sensitive to the value of the gain-and-loss amplitude and does not occur for     the gain-and-loss amplitudes lying in the  gap  between the red and blue curves.

In comparison with the red curves discussed above, the curves from the blue group in figure~\ref{fig:general}(a) feature more complicated behavior and, in particular, can intersect each other (and also intersect the curves from the next, third group of green curves discussed below). The intersections between the blue curves occur for the gain-and-loss amplitudes in the interval $11.561 \lessapprox   \gamma <  9\pi^2/2\approx 19.739$, where $\sin\sqrt{2\gamma}$ is negative.  \textcolor{black}{At first glance,} this might seem to contradict to  the  expansion (\ref{3.4}) of Lemma~\ref{lm3.4},  which  suggests that in this case the multitude of complex zeros   accumulate in the upper complex half-plane with the growth of $\ell$. However, this apparent contradiction is resolved if we trace the behavior of the complex roots more closely.   Indeed, choosing for an example $\gamma =16$ [see the vertical axis tick in  figure~\ref{fig:general}(a)] and computing several  complex roots under the increase of $\ell$, we observe that each considered root first goes down from the upper half-plane to the lower one but then again returns to the upper half-plane, see Figure~\ref{fig:g=16}(b) and (b$_1$). Thus,  in  this  interval of the gain-and-loss amplitudes the increase of $\ell$  results  either in the transformation from the resonance to the eigenvalue or to the opposite process,  i.e. to the disappearance of the complex-conjugate pair.  Respectively, the intersection between the two blue curves corresponds to  two coexisting spectral singularities, i.e. to the moment  when one complex-conjugate pair of eigenvalues disappears and another pair (with different $k$) emerges. For sufficiently large $\ell$ the imaginary part  of each   considered root  remains positive and approaches zero, in accordance with expansion (\ref{3.4}). Thus, in this interval of the gain-and-loss strengths, the limit $\ell\to\infty$ the spectrum contains only a finite number of complex-conjugate eigenvalues, which correspond to complex zeros $k$ whose behavior is not covered   by expansion  (\ref{3.4}).

 \begin{figure}
	\centering
	\includegraphics[width=0.99\columnwidth]{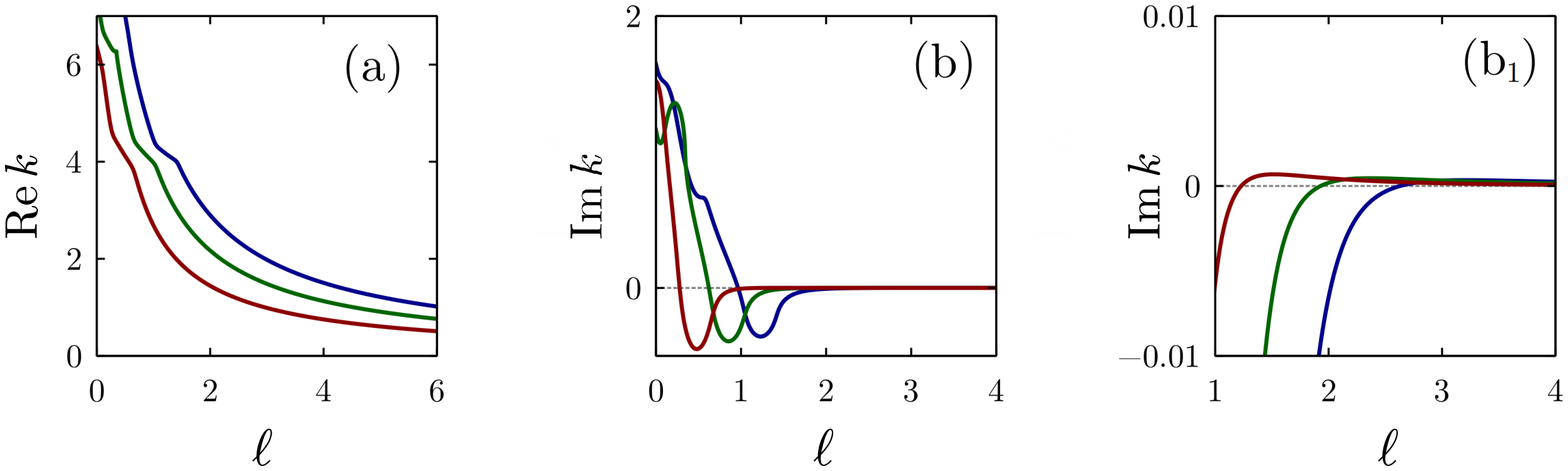}%
	\caption{(a,b) Real and imaginary parts for three   complex zeros of function $F$ for fixed $\gamma=16$ and increasing $\ell$. Panel b$_1$ is the magnification of some region of (b) and  shows more clearly that imaginary parts of all   three shown eigenvalues are positive for all sufficiently large  $\ell$. \textcolor{black}{For each three shown eigenvalues, its real and 	imaginary parts are of the same colour in both panels.}}
	\label{fig:g=16}
\end{figure}

The third  group of spectral singularities [green curves in figure~\ref{fig:general}(a)] was obtained by means of the continuation from  the solution $\gamma_\star^{(2)}$ and $k_\star^{(2)}$ in Table~\ref{tbl:1}. In view of the very complicated structure of the overall resulting picture, we only show a section of these curves corresponding to relatively small   values of the gain-and-loss amplitudes $\gamma$. Quite interestingly, there is no gap between the blue and green groups of curves, which results in the multitude of intersections between blue and green curves, \textcolor{black}{see Fig.~\ref{fig:general}(a) and the magnified view in Fig.~\ref{fig:general}(c)}. These   intersections suggest a possibility of   simultaneous emergence of two complex-conjugate pairs of eigenvalues from two different interior points of the continuous spectra.

Considering further solutions $\gamma_\star^{(n)}$, $k_\star^{(n)}$, $n=3,4,\ldots$, in the limit $\ell=0$ one can construct new groups of spectral singularities  with larger values of $\gamma$, which are not shown in Figure~\ref{fig:general}.

\section{Conclusion}
\label{sec:concl}

A pair of geometrically identical absorptive and active elements is  the simplest   realization of a $\PT$-symmetric system. Properties of $\PT$-symmetric systems are usually considered subject to the change of the gain-and-loss amplitude. In the present work, we have approached the problem from a different perspective and considered how the spectral properties of the $\PT$-symmetric waveguide  depends on the distance between gain and loss elements.   We have derived a compact equation 
for resonances, spectral singularities and bound states and analyzed  it in detail for different gain-to-loss distances.

The first main result of our study shows that in the limit of large gain-to-loss distances  the $\PT$-symmetric waveguide feature a ladder of resonances, which resembles the well-known Wannier-Stark ladder for periodic lattices with an additional linear potential.  The number of resonances increases as the gain-to-loss distance grows. Moreover, changing the gain-and-loss amplitude, one can transform a ladder of resonances to a ladder of complex-conjugate eigenvalues --- the feature which is not accessible in Hermitian (self-adjoint) systems.

The transformation from a resonance to an eigenvalue corresponds to a self-dual spectral singularity, i.e. to a zero-width resonance in the continuous spectrum. Transition through a self-dual spectral singularity corresponds to a bifurcation of a complex-conjugate pair of eigenvalues from an interior point of the continuous spectrum. Our second main result consists in the detailed study of spectral singularities that emerge in the system with varying  
gain-to-loss distance. We have demonstrated that by choosing a proper distance between the gain and loss, one can achieve a spectral singularity at any wavenumber (which is not possible in the waveguide with adjacent gain and losses, where spectral singularities occur only at certain isolated values of the wavenumber). We have also demonstrated that the increase of the distance between gain and loss reduces the $\PT$-symmetry breaking threshold and, respectively, allows to achieve  coherent perfect absorption-laser operation (alias, laser-antilaser regime) at a lower value of the gain-and-loss amplitude. Using the fact that   the equation  determining  the  spectral singularities is periodic in the gain-to-loss distance $\ell$, we have demonstrated how one can generate systematically spectral singularities at subsequently increasing distances. In particular, it is possible to generate the branches of spectral singularities that do not have a counterpart in the limit $\ell=0$.
By tuning the gain-and-loss amplitude and a distance, one can reach a situation when two different self-dual spectral singularities  emerge \emph{simultaneously} (but, generically speaking, with different wavenumbers). This  suggests that two pairs of complex-conjugate eigenvalues can emerge from two different interior points of the continuous spectrum simultaneously. Additionally, this observation defies   conjecture~(ii) in Conclusion of \cite{Ahmed18} where it is suggested that a parametrically fixed  $\PT$-symmetric complex  potential has at most one spectral singularity.  Thus, in spite of its apparent simplicity, the  $\PT$-symmetric model with spaced gain and loss elements essentially enriches the previously known phenomenology.

\section*{Acknowledgments}

The results concerning the large distance regime (Lemmata~\ref{lm3.4},~\ref{lm3.6},~\ref{lm3.7} and Section~\ref{sec:ladders}) were financially supported by Russian Science Foundation (project No. 17-11-01004).
The results by D.A.Z. on spectral singularities  were obtained with the support  from  Russian Foundation for Basic Research,  project No. 19-02-00193$\backslash$19.


\begin{thebibliography}{99}
	\bibitem{Mostafazadeh2009}  Mostafazadeh A 2009 	Spectral singularities of complex scattering potentials and infinite reflection and transmission coefficients at real energies \textit{Phys. Rev. Lett.} {\bf 102}  220402
	
	\bibitem{Longhi10}  Longhi S 2010 $\PT$-symmetric laser absorber \textit{Phys. Rev.} A {\bf 82} 031801(R)
	
	\bibitem{Nori}
	Peng B, \"Ozdemir \c{S} K, Lei F, Monifi F, Gianfreda M, Long G, Fan S, Nori F, Bender C M and Yang L 2014  Loss-induced suppression and revival of lasing \textit{Nat. Phys.} {\bf 10} 394
			
	\bibitem{invisib}   Lin  Z,    Ramezani H,    Eichelkraut T,    Kottos T,     Cao H,  and  Christodoulides D N 2011   Unidirectional invisibility induced by $\PT$-symmetric periodic structures \textit{Phys. Rev. Lett.} {\bf 106} 213901
	
	\bibitem{pendula} Bender C M, Berntson B  K,   Parker D  and   Samuel E 2013 Observation of PT phase transition in a simple mechanical system \textit{ Am. J. Phys.} {\bf 81} 173-179
	
	\bibitem{Muga05} Ruschhaupt A, Delgado  F and  Muga J  G  2005 Physical realization of $\PT$-symmetric potential 	scattering in a planar slab waveguide \textit{J. Phys. A: Math. Gen.} \textbf{ 38}  L171-L176
	
	\bibitem{El-Ganainy07} El-Ganainy R, Makris K G, Christodoulides D N and  Musslimani 	Z H 2007  Theory of coupled optical $\PT$-symmetric structures \textit{ Opt. Lett.} {\bf   32} 	2632-2634
	
	
	\bibitem{Ruter10}  R\"uter C E,  Makris K G,   El-Ganainy R,  Christodoulides D  N,   Segev M  and  Kip D 2010 Observation of parity-time symmetry in optics \textit{Nat. Phys.} {\bf 6} 192-195
	
	
	\bibitem{PukhovReview} Zyablovsky A A, Vinogradov A P,  Pukhov A A,  Dorofeenko A V and    Lisyansky A A 2014  PT-symmetry in optics \textit{Physics -- Uspekhi} {\bf 57} 1063--1082
	
	\bibitem{KZYreview}  Konotop V V,   Yang J  and  Zezyulin D  A 2016
	Nonlinear waves in $\PT$-symmetric systems \textit{ Rev. Mod. Phys.}  {\bf 88}  035002
	
	\bibitem{LonghiReview}   Longhi S 2017    Parity-time symmetry meets photonics: A new twist
	in non-Hermitian optics \textit{EPL} {\bf 120} 64001
	
	\bibitem{FengReview}   Feng L,   El-Ganainy R, and  Ge L 2017   Non-Hermitian photonics based on parity–time symmetry \textit{   Nat. Photonics} {\bf 11} 752–762
	
	
	\bibitem{El-GanainyReview}
  El-Ganainy R,  Makris K   G,    Khajavikhan M,  Musslimani Z H,  Rotter S.    and  Christodoulides D N 2018  Non-Hermitian physics and PT symmetry \textit{Nat. Phys.} {\bf 14}   11-19
	
	
	
	
	\bibitem{SW6}   Gluck M,  Kolovsky A R and  Korsch H J  2002 {Wannier–Stark resonances in optical and semiconductor 		superlattices} \textit{Phys.  Rep.} {\bf 366}   103–182
	
	\bibitem{D} Davies E B 1982   {The twisting trick for double well 		Hamiltonians}  \textit{Comm.  Math. Phys.} {\bf 	85}   471--479
	
	\bibitem{AS} Aventini P and  Seiler R 1975  {On the electronic spectrum of the diatomic 		molecular ion}  \textit{ Comm.  Math. Phys.} {\bf   41}   119--134
	
		
	\bibitem{H} Harrell E M 1980  {Double wells} \textit{Comm. Math. Phys.} {\bf  75}  239--261
	
	\bibitem{MS} Morgan III J D and  Simon B 1980 {Behavior of molecular potential energy curves for large nuclear separations} \textit{Int. J. 	Quantum Chem.} {\bf  17}  1143--1166
	
	\bibitem{KS1} Klaus M and Simon B  1979 { Binding of Schr\"{o}dinger particles through conspiracy 	of potential wells} \textit{Ann. Ins.  Henri
	Poincar\'{e}}  section A {\bf   30}  83--87
	
	\bibitem{GHS}   Graffi V,     Harrell II E M and  Silverstone H J 1985 {The $\frac{1}{R}$ expansion for $H_2^+$: 		analyticity, summability and asymptotics} \textit{ Ann.
	Phys.} {\bf  165}   441--483
	
	\bibitem{UMJ}  Borisov D and     Golovina A  2012 {On the resolvents of periodic operators with distant perturbaions} \textit{Ufa Math. J.}   {\bf   4}  65-73
	
	
	
	
	
	
	
	
	\bibitem{BEG}  Borisov D,  Exner P and   Golovina A 2012   Tunneling resonances in systems without a classical trapping \textit{J. Math. Phys.} {\bf   54}  012102
	
		
	\bibitem{GAM}  Golovina A M 2012  {On the resolvent of elliptic operators with distant perturbations of whole space} \textit{Russ. J. Math. Phys.} {\bf   19}  182--192
	
	\bibitem{GAM2} Golovina A M 2014 {On the spectrum of elliptic operators with distant perturbation in the space} \textit{ St. Petersburg Math. J.} {\bf  25}   735--754
	
		
	\bibitem{SW1} Klopp F and   Fedotov A A 2016 {Stark--Wannier ladders and cubic exponential sums} \textit{Funct.   Anal.   Appl.} {\bf   50}   233--236
	
	\bibitem{SW2} Buslaev V 1998 {Imaginary parts of Stark--Wannier resonances} \textit{J.  Math. Phys.} {\bf 39}  2520
	
	\bibitem{SW3} Sacchetti A  2014  {	Existence of the Stark--Wannier quantum resonances}  \textit{J. Math. Phys.} {\bf  55}   122104
	
	\bibitem{SW4} Buslaev V S and   Dmitrieva  L A 1990  {A Bloch electron in an external field} \textit{Leningrad Math. J.}  {\bf 1}   287--320
	
	\bibitem{SW5} Simon B and   Herbst I 1978   Stark effect revisited \textit{Phys. Rev. Lett.} {\bf  41}   67-69
		
	\bibitem{SW-PT-1} Longhi S 2009 {Bloch oscillations in complex crystals with $\PT$-symmetry} \textit{ Phys. Rev. Lett.} {\bf 103} 123601
	
	\bibitem{SW-PT-2} Wimmer M,  Miri M-A,  Christodoulides D and  Peschel  U 2015   {Observation of Bloch oscillations 		in complex $\PT$-symmetric photonic
		lattices}  \textit{Sci. Rep.} {\bf 5} 17760
	
	\bibitem{SW-PT-3} Xu Y-L,  Fegadolli W S,  Gan L, Lu M-H,   Liu  X-P, Li Z-Y,  Scherer A and  Chen Y-F 2016   {Experimental realization of Bloch oscillations in a parity-time synthetic silicon photonic lattice} \textit{Nat. Comm.} {\bf 7}   11319
	
	\bibitem{ScarfII}  Ahmed Z 2009  Zero width resonance (spectral singularity) in a complex $PT$-symmetric potential \textit{J. Phys. A: Math. Theor.} {\bf 42}  472005
	
		
	\bibitem{SS1} Naimark M A 1954   Investigation of the spectrum and the expansion in eigenfunctions of a nonselfadjoint operator of the second order on a semi-axis   {\it Tr. Mosk. Mat. Obs.} {\bf 3}  181--270
		
	\bibitem{SS2} Schwartz J 1960 Some non-selfadjoint operators \textit{ Comm. Pure Appl. Math.} {\bf 13}, 609
		

	\bibitem{SS4} Mostafazadeh  A and  Mehri-Dehnavi H 2009   Spectral singularities, biorthonormal systems and a two-parameter family of complex point interactions \textit{J. Phys. A: Math. Theor.} {\bf 42}   125303
		
		
	\bibitem{SS5} 	 Mostafazadeh A 2015 \textit{Physics of spectral singularities}, Trends in Mathematics 145 (Springer
		International Publishing, Switzerland)
		

\bibitem{Stone_selfdual} Chong Y D,  Ge L and  Stone A D 2011 $\PT$-symmetry breaking and laser-absorber modes in optical scattering systems \textit{Phys. Rev. Lett.} {\bf 106}
093902


\bibitem{Rozanov} Rosanov N N 2017   Antilaser: resonance absorption mode or coherent perfect absorption? \textit{Physics -- Uspekhi} {\bf 60}   818

\bibitem{Stone1}  Chong Y D,    Ge L,    Cao H and   Stone A D 2010 Coherent perfect absorbers: Time-reversed lasers \textit{Phys. Rev. Lett.} {\bf 105}   053901

\bibitem{Stone2}   Wan W,   Chong Y,   Ge L,  Noh H,   Stone  A D and   Cao H 2011   Time-reversed lasing and interferometric control of absorption \textit{Science} {\bf 331} 889--892

\bibitem{reviewCPA}   Baranov D G,  Krasnok A,     Shegai T,    Al\'u A   and   Chong Y D  2017 Coherent perfect absorbers: linear control of light with light \textit{Nat. Rev. Mater.} {\bf 2}   17064



\bibitem{Garmon} 	 Garmon S,  Gianfreda M and   Hatano N 2015  Bound states, scattering states, and resonant states in $\PT$-symmetric open quantum systems \textit{Phys. Rev.} A {\bf 92}  022125
		
\bibitem{Yang17}   Yang J  2017  Classes of non-parity-time-symmetric optical potentials with exceptional-point-free phase transitions \textit{Opt. Lett.} {\bf   42} 4067

\bibitem{KZ17}   Konotop V V    and  Zezyulin D A 2017  Phase transition through the splitting of self-dual
spectral singularity in optical potentials \textit{Opt. Lett.}    {\bf 42}  5206; corrections: 2019 \textit{Opt. Lett.}    {\bf 44}  1051

\bibitem{Ahmed18} Ahmed Z,  Kumar S  and   Gosh D 2018   Three types of discrete energy eigenvalues in complex PT-symmetric scattering potentials \textit{Phys. Rev.} A {\bf 98} 042101


\bibitem{Konotop2019}  Konotop V V and  Zezyulin D A 2019   Spectral singularities of odd-$\PT$-symmetric potentials \textit{Phys. Rev.} A {\bf 99}  013823

\bibitem{BorDmi}   Borisov D and   Dmitriev D 2017  On the spectral stability of kinks in 2D Klein-Gordon model with parity-time-symmetric perturbation \textit{Stud.  Appl.  Math.} {\bf  138}  317-342


\bibitem{Kato} Kato T 1966 \textit{Perturbation theory for linear operators} (Springer-Verlag, Berlin)

\bibitem{Heiss}  Heiss W D 2011 The physics of exceptional points \textit{J. Phys. A: Math. Theor.} {\bf 45} 444016


\bibitem{Ahmed04}   Ahmed Z 2014   Handedness of complex PT-symmetric potential barriers \textit{Phys. Lett. A} {\bf 324}   152–158


	
	\bibitem{Cartarius12} Cartarius H  and Wunner G 2012  Model of a $\PT$-symmetric Bose--Einstein condensate in a $\delta$-function double-well potential \textit{Phys. Rev.} A {\bf 86}   013612

	\bibitem{Zezyulin16} Zezyulin  D A and Konotop V  V 2016   Nonlinear currents in a ring-shaped waveguide with balanced gain and dissipation \textit{ Phys. Rev.} A {\bf  94}   043853


	\bibitem{Le} Levin B Ya 1996 {\it Lectures on entire functions}. Amer. Math. Soc. (Providence, RI)




\end{thebibliography}
\end{document}